\DeclareMathOperator{\rl}{\mathrm{rl}}
\DeclareMathOperator{\td}{\mathrm{td}}
\DeclareMathOperator{\ctd}{\mathrm{ctd}}
\DeclareMathOperator{\lca}{\mathrm{lca}}
\DeclareMathOperator{\cl}{\uparrow\!}
\newcommand{\qedclaim}{\hfill $\diamond$ \medskip}
\newcommand{\bproblem}[3]{{\vspace*{0.06cm} \noindent
    \frame{\frame{\fbox{\begin{minipage}{0.9\textwidth}\textsc{{#1}}\vspace{0.1cm}\\
\textbf{Input:} {#2}\\\textbf{Question:}{ #3}\end{minipage}
	}}}}}
\newcommand{\kED}{\textsc{Fast-Strategy}}
\newcommand{\guards}{D}
\newcommand{\edn}{\gamma^{\infty}}
\newcommand{\medn}{\gamma_m^{\infty}}
\theoremstyle{thmstyleone}\newtheorem{theorem}{Theorem}\newtheorem{proposition}[theorem]{Proposition}\newtheorem{lemma}[theorem]{Lemma}
\newtheorem{corollary}[theorem]{Corollary}
\newtheorem{conjecture}[theorem]{Conjecture}
\newtheorem{definition}[theorem]{Definition}
\newtheorem{observation}[theorem]{Observation}
\newtheorem{example}[theorem]{Example}
\begin{document}

\title[Fast winning strategies for the attacker in eternal domination]{Fast winning strategies for the attacker in eternal domination\footnote{This research was supported by the ANR project P-GASE (ANR-21-CE48-0001-01).}}

\author[1]{\fnm{Guillaume} \sur{Bagan}}\email{guillaume.bagan@univ-lyon1.fr}

\author[1]{\fnm{Nicolas} \sur{Bousquet}}\email{nicolas.bousquet@univ-lyon1.fr}
\author[1]{\fnm{Nacim} \sur{Oijid}}\email{nacim.oijid@univ-lyon1.fr}
\author*[1]{\fnm{Théo} \sur{Pierron}}\email{theo.pierron@univ-lyon1.fr}

\affil[1]{\orgname{Universite Claude Bernard Lyon 1, CNRS, INSA Lyon}, \orgdiv{LIRIS, UMR5205} \orgaddress{\city{Villeurbanne}, \postcode{69622}, \country{France}}}

\abstract{Dominating sets in graphs are often used to model monitoring problems, by posting guards on the vertices of the dominating set. If an (unguarded) vertex is attacked, at least one guard can then react by moving there. This yields a new set of guards, which may not be dominating anymore. A dominating set is \emph{eternal} if one can endlessly resist to attacks.

From the attacker's perspective, if we are given a non-eternal dominating set, the question is to determine how fast can we provoke an attack that cannot be handled by a neighboring guard. We investigate this question from a computational complexity point of view, by showing that this question is \PSPACE-hard, even for graph classes where finding a minimum eternal dominating set is in \P. 

We then complement this result by giving polynomial time algorithms for cographs and trees, and showing a connection with tree-depth for the latter. We also investigate the problem from a parameterized complexity perspective, mainly considering two parameters: the number of guards and the number of steps.}

\keywords{eternal dominating set, tree-depth, \PSPACE-completeness, parameterized complexity}

\maketitle

\section{Introduction}

Let $G = (V, E)$ be a graph and $\guards \subseteq V$ be a set of vertices called \emph{guards}.
The \emph{mobile domination game} is played by two players: Attacker and Defender.
At each turn, Attacker chooses a non-guarded vertex $v$ (we say that Attacker \emph{attacks}~$v$). Then, Defender has to move a guard along an edge $uv$ incident to $v$ (we say that Defender defends against the attack); see Figure~\ref{fig:edn} for an illustration. If she cannot move such a guard (that is $v$ had no neighbor $u$ in $\guards$), Attacker wins. Otherwise, the game continues with the set $\guards\cup\{v\}\setminus\{u\}$ of guards. If Defender can defend against any infinite sequence of attacks, the initial set of guards $\guards$ is called an \emph{eternal dominating set}. The \emph{eternal domination number} $\edn(G)$ \cite{infiniteorder2004} is the size of the smallest eternal dominating set of $G$. This notion has been widely studied, see e.g.~\cite{survey-eternal} for a recent survey. Many variants have been introduced in the literature, for example when all the guards can move at each attack~\cite{goddard}. This leads to similar notions of m-eternal dominating set and m-eternal domination number $\medn(G)$ (where m stands for ``multiple'' and is not a parameter).

\begin{figure}[!ht]
\centering
\scalebox{0.7}{
	\begin{tikzpicture}[scale=0.85,yscale=.7,vertex/.style={circle,draw,thick,fill=white,inner sep=2pt}]

\draw[>=triangle 45, ->] (3,0) -- (4,1);
\draw[>=triangle 45, ->] (3,-3) -- (4,-4);
\draw[>=triangle 45, ->] (10,-4) -- (11.5,-4);

\begin{scope}[shift={(0,-3)}]
    \node[vertex] (v0) at (-2,0) {\textcolor{white}{0}};
    \node[vertex] (v1) at (0,0) {\textcolor{white}{0}};
    \node[vertex,fill=gray!30] (v2) at (2,0) {3};
    \node[vertex,fill=gray!30] (v3) at (-2,2) {1};
    \node[vertex,fill=gray!30] (v4) at (0,2) {2};
    \node[vertex] (v5) at (2,2) {\textcolor{white}{0}};
    \node[vertex] (v6) at ((-1,3) {\textcolor{white}{0}};
    
    \draw[thick,color=red] (-1,3) circle(0.4);
    
    \path
    (v0) edge (v1)
    (v1) edge (v2) 
    (v0) edge (v3)
    (v1) edge (v4)
    (v2) edge (v5)
    (v3) edge (v4)
    (v3) edge (v6)
    (v4) edge (v6)
    ;
\end{scope}

\begin{scope}[shift={(7,0)}]
    \node[vertex] (v0) at (-2,0) {\textcolor{white}{0}};
    \node[vertex] (v1) at (0,0) {\textcolor{white}{0}};
    \node[vertex,fill=gray!30] (v2) at (2,0) {3};
    \node[vertex] (v3) at (-2,2) {\textcolor{white}{0}};
    \node[vertex,fill=gray!30] (v4) at (0,2) {2};
    \node[vertex] (v5) at (2,2) {\textcolor{white}{0}};
    \node[vertex,fill=gray!30] (v6) at ((-1,3) {1};
    
    \draw[thick,color=red] (-2,0) circle(0.4);
    
    \path
    (v0) edge (v1)
    (v1) edge (v2) 
    (v0) edge (v3)
    (v1) edge (v4)
    (v2) edge (v5)
    (v3) edge (v4)
    (v3) edge (v6)
    (v4) edge (v6)
    ;
\end{scope}

\begin{scope}[shift={(7,-5)}]
    \node[vertex] (v0) at (-2,0) {\textcolor{white}{0}};
    \node[vertex] (v1) at (0,0) {\textcolor{white}{0}};
    \node[vertex,fill=gray!30] (v2) at (2,0) {3};
    \node[vertex,fill=gray!30] (v3) at (-2,2) {1};
    \node[vertex] (v4) at (0,2) {\textcolor{white}{0}};
    \node[vertex] (v5) at (2,2) {\textcolor{white}{0}};
    \node[vertex,fill=gray!30] (v6) at ((-1,3) {2};
    
    \draw[thick,color=red] (0,0) circle(0.4);
    
    \path
    (v0) edge (v1)
    (v1) edge (v2) 
    (v0) edge (v3)
    (v1) edge (v4)
    (v2) edge (v5)
    (v3) edge (v4)
    (v3) edge (v6)
    (v4) edge (v6)
    ;
\end{scope}

\begin{scope}[shift={(14,-5)}]
    \node[vertex] (v0) at (-2,0) {\textcolor{white}{0}};
    \node[vertex,fill=gray!30] (v1) at (0,0) {3};
    \node[vertex] (v2) at (2,0) {\textcolor{white}{0}};
    \node[vertex,fill=gray!30] (v3) at (-2,2) {1};
    \node[vertex] (v4) at (0,2) {\textcolor{white}{0}};
    \node[vertex] (v5) at (2,2) {\textcolor{white}{0}};
    \node[vertex,fill=gray!30] (v6) at ((-1,3) {2};
    
    \draw[thick,color=red] (2,2) circle(0.4);
    
    \path
    (v0) edge (v1)
    (v1) edge (v2) 
    (v0) edge (v3)
    (v1) edge (v4)
    (v2) edge (v5)
    (v3) edge (v4)
    (v3) edge (v6)
    (v4) edge (v6)
    ;
\end{scope}

\begin{scope}[shift={(0,-8.5)}]
    \draw[pattern=north west lines, pattern color=blue!30] (-2.5,1.5) rectangle (0.5,3.5);
    \draw[pattern=north west lines, pattern color=blue!30] (-2.5,-0.5) rectangle (0.5,0.5);
    \draw[pattern=north west lines, pattern color=blue!30] (1.5,-0.5) rectangle (2.5,2.5);
    
    \node[vertex,fill=gray!30] (v0) at (-2,0) {1};
    \node[vertex] (v1) at (0,0) {\textcolor{white}{0}};
    \node[vertex,fill=gray!30] (v2) at (2,0) {2};
    \node[vertex] (v3) at (-2,2) {\textcolor{white}{0}};
    \node[vertex,fill=gray!30] (v4) at (0,2) {3};
    \node[vertex] (v5) at (2,2) {\textcolor{white}{0}};
    \node[vertex] (v6) at ((-1,3) {\textcolor{white}{0}};
        
    \path
    (v0) edge (v1)
    (v1) edge (v2) 
    (v0) edge (v3)
    (v1) edge (v4)
    (v2) edge (v5)
    (v3) edge (v4)
    (v3) edge (v6)
    (v4) edge (v6)
    ;
\end{scope}

	\end{tikzpicture}}
\caption{Above, an example of a winning strategy for Attacker in 3 turns, depending on the answer of Defender.
Each number represents a guard, and the attack is circled. Below, an eternal dominating set with 3 guards. Each guard protects its own clique.}
\label{fig:edn}
\end{figure}
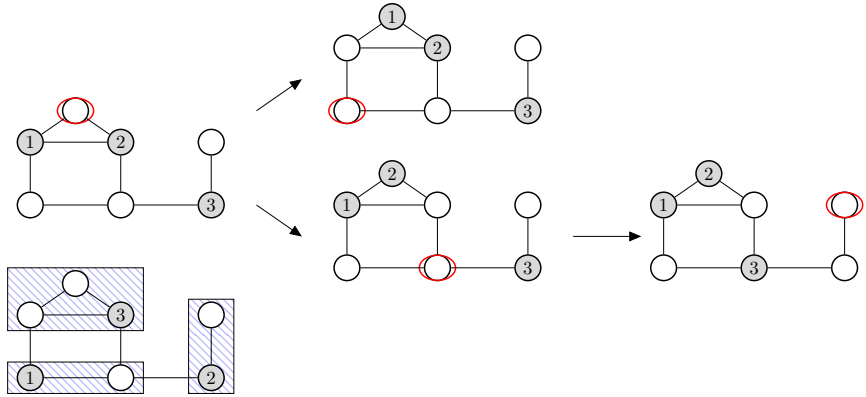

For every graph $G$, we have $\alpha(G) \leq \edn(G) \leq \theta(G)$~\cite{infiniteorder2004} and $\edn(G) \leq \binom{\alpha(G)+1}{2}$~\cite{fixed-independence}, where $\alpha(G)$ is the maximum size of an independent set in $G$, and $\theta(G)$ is the minimum size of a clique cover of $G$. The latter inequality has been proved to be tight for an infinite class of graphs~\cite{independence-tight}. Several existing works also focus on classes of graphs for which $\edn(G) = \theta(G)$, see e.g.~\cite{infiniteorder2004,eds-demand,eds-tree}.

One can naturally wonder if these parameters can be easily computed. This motivates the introduction of the following problem. 

\bproblem{Eternal-Dominating-Number}{A graph $G$, an integer $k \geq 1$}{
$\edn(G) \leq k$?}

This problem lies in \EXP~and is \coNP-hard~\cite{eds-digraph} but the exact complexity class this problem belongs to remains open. However, on restricted graph classes, there exist some polynomial time algorithms such as on perfect graphs. Indeed, for perfect graphs, we have $\alpha(G) = \edn(G) = \theta(G)$, which yields a polynomial algorithm for
\textsc{Eternal-Dominating-Number}~\cite{infiniteorder2004}.

Deciding the m-variant of \textsc{Eternal-Dominating-Number} (where every guard is allowed to move at each turn) is harder since deciding if $\medn(G) \leq k$ is already \NP-hard even on Hamiltonian split graphs \cite{meds-split}. 
However, polynomial time algorithms have been proposed on trees \cite{eds-tree}, unit interval graphs \cite{eds_proper_interval}, interval graphs \cite{meds-interval} and cactus graphs \cite{meds-cactus}.
Moreover, great attention has been paid to the study of lower and upper bounds for the m-eternal dominating number in grids. See for example \cite{lamprou-grid,fionn-grid}.
The spy game, a generalization of m-eternal domination has been proved \PSPACE-hard on directed graphs and \NP-hard on (non-directed) graphs \cite{spygame}. Another generalization of m-eternal domination, the guarding game, has been proved \EXP-complete~\cite{guarding-game}.

Surprisingly, the situation is quite different for the seemingly close problem:

\bproblem{Eternal-Dominating-Set}{A graph $G$, a set of guards $\guards \subseteq V$}{Is $\guards$ an eternal dominating set of $G$?}

This problem has been recently proved \EXP-complete \cite{virgelot2024}.
It is not clear that this problem is easier than the first. Indeed, there is nothing preventing graphs to have an easy-to-find eternal dominating set of size $k$ while it may be hard to tell whether some given vertices form an eternal dominating set. In particular, on perfect graphs, the complexity of \textsc{Eternal-Dominating-Set} is still open contrary to the complexity of \textsc{Eternal-Dominating-Number}.

In this paper, we are not only interested in determining whether a configuration of guards is an eternal dominating set but also to find, in the negative case, a winning strategy for Attacker minimizing the number of steps. This question naturally arises for board games, like Chess or Go, but has also attracted a lot of attention in the combinatorial game theory literature: for the domination game~\cite{complexity-domination-game},
maker-breaker games~\cite{makerbreakerdom,makerbreaker},
the non-planarity game~\cite{non-planarity} and the spy game~\cite{spygame}.
Recently, this problem has been studied for m-eternal domination on trees in~\cite{meds-efficient}, where Bla\v{z}ej et al. prove that if Defender does not have enough guards to m-eternally dominate a graph $G$, then Attacker can win in at most $d$ turns, where $d$ is the diameter of $G$.

Informally, we say that Attacker wins in $t$ turns on $(G,\guards)$ if he has a strategy that makes Defender lose before the $t$-th turn, regardless of her strategy. We denote by $t_G(\guards)$ the minimum $t$ such that Attacker wins in $t$ turns on $(G,\guards)$ (We postpone the formal definitions to the next section.) In particular, $t_G(\guards)$ is~$+\infty$ if and only if $\guards$ is an eternal dominating set of $G$. 

The goal of this paper is to study the complexity of computing $t_G(\guards)$, that is, of the following problem.

\bproblem{$\kED$}{A graph $G = (V, E)$, a set of guards $\guards \subseteq V$, an integer $t \geq 1$}{is $t_G(\guards) \leq t$?}

\paragraph{Our results and organization of the paper.}

In Section~\ref{sec:bip}, we show that this problem is \coNP-hard on bipartite and split graphs. On the positive side, we show that \textsc{Eternal-Dominating-Set} can be decided in polynomial time on bipartite graphs. 
In Section~\ref{sec:perfect}, we show that $\kED$ is \PSPACE-hard even restricted to perfect graphs (recall that \textsc{Eternal-Dominating-Number} can be decided in polynomial time on perfect graphs). More specifically, we prove  that the problem is \PSPACE-hard on $2$-unimodal graphs reducing from the \textsc{Unordered-CNF} problem. 

 Then, we give two positive results: we prove that $\kED$ can be decided in polynomial time on trees (Section~\ref{sec:tree}) and on cographs (Section~\ref{sec:cographs}). For trees, the main step of the proof consists in introducing arenas, which are special subtrees and proving that Attacker can win in at most $k$ steps if and only if there exists an arena for which some treedepth-related parameter is at most $k$. We finally prove that the polynomial time algorithm for computing the treedepth on trees can be adapted for this new parameter. For cographs, we use their decomposition as disjoint unions or joins of smaller cographs and analyze the strategies of Attacker and Defender in each case to provide a recursive algorithm for $\kED$. Finally, in Section~\ref{sec:param}, we study the parameterized complexity of $\kED$ for several classes of graphs.

\paragraph{Related work}
The notion of $k$-secure dominating set was introduced by Burger et al.~\cite{finiteorder2004}. In their setting, a dominating set is \emph{$k$-secure} if it remains dominating after a sequence of $k$ attacks, which looks like the problem we are considering in this paper. However, in their setting, the sequence of attacks is known in advance (oblivious adversary) whereas, in our case, Attacker can adapt his moves according to how Defender defends the attacks (adaptive adversary).

\section{Preliminaries} 

For an integer $n > 0$, $[n]$ represents the set $\{1, 2, \ldots, n\}$.
All graphs considered in this paper are finite, loopless, and simple. Let $G = (V, E)$ be a graph.
Given a vertex $v \in V$, $N(v)$ denotes the \emph{neighborhood} of $v$, i.e., the set $\{ y \in V: vy \in E \}$. The \emph{degree} $d(v)$ of $v$ is $|N(v)|$.
A set $S \subseteq V$ is an \emph{independent set} of $G$ if there is no edge $uv$ for every $u, v$ in $S$. The \emph{independence number} $\alpha(G)$ is the size of a largest independent set of $G$. A \emph{clique} is a subset of pairwise adjacent vertices. 
The \emph{clique covering number} $\theta(G)$ is the minimum number of cliques in which $V$ can be partitioned. A \emph{vertex cover} of $G$ is a set $S \subseteq V$ such that every edge $e$ in $E$ has an endpoint in $S$. A \emph{matching} of $G$ is a set of edges of $G$ that pairwise do not share an endpoint.
A graph $G$ is \emph{perfect} if $\alpha(H) = \theta(H)$ for every induced subgraph $H$ of $G$.

A \emph{rooted tree} $T$ is a tree with a distinguished vertex $r$ called the root.
Consider the orientation of $T$ such that every vertex is reachable from $r$.
$u$ is a \emph{child} of $v$ if there is an arc $vu$ in this orientation and $u$ is a \emph{descendant} of $v$ if there is an oriented path from $v$ to $u$ along this orientation. A rooted subtree of $T$ is a subtree of $T$ induced by a node and all its descendants. The \emph{height} of $T$ is the number of vertices in a longest directed path starting at $r$.

A \emph{td-decomposition} of a connected graph $G = (V, E)$ is a rooted tree $T$ with set of vertices $V$ such that $u$ is a descendant of $v$ or $v$ is a descendant of $u$ for every $uv \in E$. The \emph{treedepth} of $G$ is the minimum height of a td-decomposition of $G$ (see Figure~\ref{fig:td}).

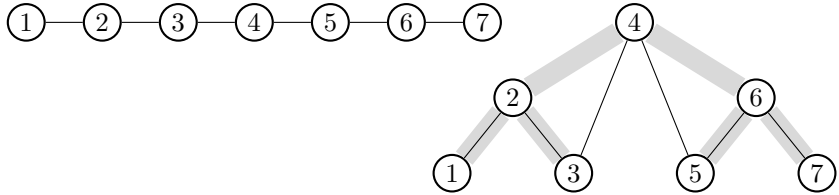
\begin{figure}[!ht]
\centering
\begin{tikzpicture}[vertex/.style={circle,draw,thick,fill=white,inner sep=2pt}]
\begin{scope}[shift={(-8,0)}]
\node[vertex] (c1) at (0,0) {1};
\node[vertex] (c2) at (1,0) {2};
\node[vertex] (c3) at (2,0) {3};
\node[vertex] (c4) at (3,0) {4};
\node[vertex] (c5) at (4,0) {5};
\node[vertex] (c6) at (5,0) {6};
\node[vertex] (c7) at (6,0) {7};
\path
    (c1) edge (c2)
    (c2) edge (c3)
    (c3) edge (c4)
    (c4) edge (c5)
    (c5) edge (c6)
    (c6) edge (c7)
    ;
\end{scope}

\begin{scope}[shift={(0,0)},xscale=.8]
\node[vertex] (v4) at (0,0) {4};
\node[vertex] (v2) at (-2,-1) {2};
\node[vertex] (v6) at (2,-1) {6};
\node[vertex] (v1) at (-3,-2) {1};
\node[vertex] (v3) at (-1,-2) {3};
\node[vertex] (v5) at (1,-2) {5};
\node[vertex] (v7) at (3,-2) {7};
    \path
    (v4) edge[line width=8pt,color=gray!30] (v2)
    (v4) edge[line width=8pt,color=gray!30] (v6) 
    (v2) edge[line width=8pt,color=gray!30] (v1)
    (v2) edge[line width=8pt,color=gray!30] (v3)
    (v6) edge[line width=8pt,color=gray!30] (v5)
    (v6) edge[line width=8pt,color=gray!30] (v7)
    (v1) edge (v2)
    (v2) edge (v3)
    (v3) edge (v4)
    (v4) edge (v5)
    (v5) edge (v6)
    (v6) edge (v7)
    ;
    \end{scope}
\end{tikzpicture}
\caption{On the left, the graph $P_7$ of treedepth 3. On the right, an optimal td-decomposition of $P_7$}
\label{fig:td}
\end{figure}

Let $G = (V, E)$ be a graph and $\guards \subseteq V$ be a set of $g$ vertices called \emph{guards}. In the eternal domination game, a \emph{strategy for Attacker} is a function mapping each set $S$ of $g$ vertices to a vertex of $V\setminus S$. A \emph{strategy for Defender} is a partial function mapping pairs $(S, v)$ where $S$ is a set of $g$ vertices and $v\in V\setminus S$ to some vertex of $N(v)\cap S$ if it exists. 

Given two strategies $f,g$ for Attacker and Defender, we say that Defender can \emph{resist $t$ attacks} on $(G,\guards)$ using $(f,g)$ if either $t=0$, or $g(\guards, f(\guards))$ is defined, and she can resist $t-1$ turns on $(G,\guards\cup\{g(\guards,f(\guards))\}\setminus\{f(\guards)\})$ using $(f,g)$. We extend this notion: Defender can resist $t$ attacks on $(G,\guards)$ when Attacker plays according to $f$ if she has a strategy $g$ so that the previous condition is satisfied.

We are interested in the \emph{fastest way} for Attacker to win (if he can), that is the minimum $t$ such that Attacker has a strategy $f$ such that Defender cannot resist $t$ attacks on $(G,\guards)$ when Attacker plays according to $f$. We denote this number $t$ by $t_G(\guards)$. Note that $t_G(\guards)$ might be $+\infty$ if Defender can eternally answer to Attacker's move, \emph{i.e.} if Attacker has no winning strategy.

\section{Bipartite and split graphs}
\label{sec:bip}
The goal of this section is to study the complexity of the problems \textsc{Eternal-Dominating-Set} and $\kED$ on bipartite graphs. Recall that bipartite graphs are perfect, hence determining the smallest eternal dominating set can be done in polynomial time~\cite{goddard}. We extend this result to \textsc{Eternal-Dominating-Set}.

\begin{theorem}\label{eds-bipartite}
\textsc{Eternal-Dominating-Set} is in \P{} on bipartite graphs.
\end{theorem}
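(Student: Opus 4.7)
The plan is to establish the following characterization: in a bipartite graph $G$ with parts $A$ and $B$, a set $D$ is an eternal dominating set if and only if there exists a matching $M$ of $G$ that saturates $V \setminus D$ and every edge of which has at least one endpoint in $D$. This condition amounts to finding a matching saturating $V \setminus D$ in the bipartite graph with parts $V \setminus D$ and $D$ and with edges inherited from $G$; standard bipartite matching algorithms decide this in polynomial time, which yields Theorem~\ref{eds-bipartite}. Equivalently, by Hall's theorem, the condition reads $|N_G(S) \cap D| \geq |S|$ for every $S \subseteq V \setminus D$.

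For the sufficiency direction, let $M$ be such a matching. Since $V \setminus D$ is saturated, every $M$-unmatched vertex lies in $D$, and by assumption every $M$-edge has at least one guard endpoint. Defender maintains these two invariants. When the attacker attacks a non-guard $v$, the first invariant forces $v$ to be $M$-matched, say to $w$, and the second then forces $w$ to be a guard (otherwise the edge $\{v,w\}$ would have no guard). Defender moves $w$ to $v$: the edge $\{v,w\}$ still has a guard (now at $v$), and all other $M$-edges and all $M$-unmatched vertices are undisturbed, so the invariants are preserved and Defender survives forever.

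For the necessity direction, suppose no such matching exists. By Hall's theorem, some $S \subseteq V \setminus D$ satisfies $|N_G(S) \cap D| < |S|$. Since $N_G(S) \cap D$ is the disjoint union of $N_G(S \cap A) \cap D \cap B$ and $N_G(S \cap B) \cap D \cap A$, a pigeonhole argument yields a failing set contained in $A$ or in $B$; assume $S \subseteq A$. Attacker's strategy is to repeatedly attack a currently unguarded vertex of $S$. Each attack on $v \in S \subseteq A$ forces Defender to move a guard from $N(v) \subseteq N_G(S) \cap B$ to $v$, because $G$ is bipartite and no guard in $A$ is adjacent to a vertex of $A$. Critically, because the attacker never attacks outside $S$, no defender response ever redirects a guard back into $N_G(S)$. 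Hence after $t$ successfully defended attacks, the current guard set $D_t$ satisfies $|D_t \cap S| = t$ and $|D_t \cap N_G(S)| = |N_G(S) \cap D| - t$. Setting $t = |N_G(S) \cap D| < |S|$, the set $N_G(S) \cap D_t$ is empty while $S \setminus D_t$ is still nonempty, and any attack on a vertex of $S \setminus D_t$ has no available defender.

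The main obstacle is the necessity direction, and in particular the claim that guards never enter $N_G(S)$ during the play. This uses bipartiteness crucially: when the attacker targets $A$, the defender must pull guards from $B$ into $A$, and since the attacker confines all attacks to $S \subseteq A$, no response ever sends a guard back into $B \supseteq N_G(S)$. The sufficiency direction is then a clean invariant-maintenance argument on the fixed matching $M$.
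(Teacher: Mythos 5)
Your proof is correct and follows essentially the same route as the paper: both characterize eternal dominating sets by the existence of a matching between guards and non-guards saturating the unguarded vertices, prove sufficiency by having Defender answer along that matching, and prove necessity via Hall's theorem by repeatedly attacking a deficient set whose (bipartite) neighborhood of guards gets exhausted. The only difference is organizational: the paper first normalizes the instance via Lemma~\ref{guardsEqOnePart} so that the guards form one whole side of the bipartition, whereas you work directly on $G$ and recover a deficient set inside a single side by a pigeonhole step.
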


On the opposite, we show that determining the smallest number of moves in a winning strategy is significantly harder, as summarized by the following statement.

\begin{theorem}\label{conp-hard-bipartite}
$\kED$ is in \PSPACE{} and \coNP-hard on bipartite graphs.
\end{theorem}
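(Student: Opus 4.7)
The theorem splits into two parts, which I would tackle separately: a PSPACE upper bound and a coNP-hardness lower bound.

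For PSPACE membership, my plan is to simulate the eternal-domination game on an alternating polynomial-time Turing machine (using $\mathrm{APTIME} = \mathrm{PSPACE}$). The state maintained is the current guard configuration $\guards' \subseteq V$ (polynomially encoded) together with the remaining turn count. At each Attacker turn, the machine existentially guesses the attacked vertex $v \in V\setminus \guards'$; at each Defender turn, it universally guesses a defending guard $u \in N(v)\cap \guards'$, accepting at once if no such $u$ exists, otherwise updating the configuration to $(\guards'\setminus\{u\})\cup\{v\}$. To make this run in polynomial alternating time, I would argue that on bipartite graphs we may assume $t$ is polynomially bounded: by combining the structural characterization behind Theorem~\ref{eds-bipartite} with a pigeonhole/cycling argument on reachable configurations, one obtains a polynomial $p(n)$ such that $t_G(\guards) \leq p(n)$ whenever $t_G(\guards) < +\infty$. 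Hence for $t > p(n)$ the question $t_G(\guards) \leq t$ reduces to deciding whether $\guards$ is eternal dominating, which is in $\P$ by Theorem~\ref{eds-bipartite}; otherwise $t$ is already polynomial and the alternating simulation fits in polynomial time.

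For coNP-hardness, I plan to reduce from the complement of $3$-SAT. Given a 3-CNF formula $\phi$ on $n$ variables $x_1,\ldots,x_n$ with $m$ clauses $C_1,\ldots,C_m$, I build a bipartite graph $G_\phi$ with sides $A$ and $B$ as follows. For each variable $x_i$, add $u_i \in A$ and two ``truth-value'' vertices $t_i, f_i \in B$ with edges $u_it_i$ and $u_if_i$. For each clause $C_j$, add $c_j \in A$ adjacent to $t_i$ whenever $x_i$ occurs positively in $C_j$, and adjacent to $f_i$ whenever $\neg x_i$ occurs in $C_j$. Take $\guards = \{t_i, f_i : i \in [n]\}$ as the initial guard set and set $t = n+1$. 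The intended equivalence is: $\phi$ is unsatisfiable iff $t_{G_\phi}(\guards) \leq n+1$. For the forward direction, Attacker attacks $u_1, u_2, \ldots, u_n$ in sequence; since $N(u_i)\cap \guards = \{t_i, f_i\}$, Defender must move one of them to $u_i$, thereby committing to an assignment $\sigma$ (with $\sigma(x_i) = \top$ iff the guard remains at $t_i$). Since $\phi$ is unsatisfiable, some $C_j$ is violated by $\sigma$; attacking $c_j$ at turn $n+1$ leaves $N(c_j)$ disjoint from the current guard set, and Defender loses. For the reverse direction, Defender plays according to a fixed satisfying assignment $\sigma$: on an attack at $u_i$, she moves the truth-value vertex contrary to $\sigma(x_i)$; on an attack at $c_j$, she moves a truth-value vertex corresponding to some literal of $C_j$ satisfied by $\sigma$.

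The main obstacle I anticipate is the reverse direction, where I must show Defender survives $n+1$ turns against \emph{every} Attacker strategy, including those that deviate from the canonical commit-then-verify pattern (for instance, attacking clauses first, or interleaving clause and variable attacks). My plan is to maintain, as an invariant, that the current configuration is consistent with some satisfying extension of the partial assignment revealed so far, and to verify that this invariant is preserved turn by turn; a careful case analysis on the attacked vertex type should then give Defender a valid response at each step. If bad attack orderings turn out to break the basic construction, I would enlarge it with a few additional bipartite gadgets (e.g. auxiliary guarded vertices ensuring domination under any partial commitment) so that the intended equivalence holds against any Attacker strategy while keeping the graph bipartite.
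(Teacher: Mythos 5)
Your \PSPACE{} membership plan is right in outline but the key step is under-justified: a ``pigeonhole/cycling argument on reachable configurations'' only bounds the game length by the number of configurations, which is exponential. The polynomial bound actually comes from Lemma~\ref{bipartite-oneside} together with Lemma~\ref{guardsEqOnePart}: Attacker may restrict his attacks to the unguarded side of the bipartition, so each guard crosses to the other side at most once and never returns, giving $t_G(\guards)\leq g+1$ whenever Attacker wins at all. With that bound in hand, the alternating polynomial-time simulation you describe is exactly the paper's argument.

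The \coNP-hardness part, however, has a fatal gap: the reduction from unsatisfiability is incorrect, and the failure mode is precisely the ``bad attack orderings'' you flagged. Take $n=3$ and let $\varphi$ consist of all eight clauses on $x_1,x_2,x_3$ except $(\neg x_1\vee\neg x_2\vee\neg x_3)$; this $\varphi$ is satisfiable (by the all-true assignment), yet Attacker wins on your graph in $n+1=4$ turns. He attacks $c_{(x_1\vee x_2\vee x_3)}$, forcing Defender to vacate some $t_i$, say $t_1$; then $c_{(x_1\vee\neg x_2\vee\neg x_3)}$, whose only remaining guarded neighbours are $f_2,f_3$, forcing say $f_2$ to move; then $c_{(\neg x_1\vee x_2\vee x_3)}$, all of whose neighbours $f_1,t_2,t_3$ are still guarded. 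Whichever Defender moves, turn $4$ wins: moving $f_1$ (resp.\ $t_2$) leaves $u_1$ (resp.\ $u_2$) with no guarded neighbour, and moving $t_3$ leaves the unattacked clause vertex of $(x_1\vee\neg x_2\vee x_3)$ with all three neighbours $\{t_1,f_2,t_3\}$ vacated. The underlying problem is that a clause vertex can be killed after only three guard-removals, so Attacker can deplete literal guards directly via clause attacks without ever committing to an assignment; your ``consistency invariant'' cannot be maintained. The paper avoids this by reducing from \textsc{Independent-Set} instead: a $K_{k,n}$ gadget forces Defender's first $k$ answers to select a $k$-subset of $V(G)$, and each edge $e$ is checked by a $K_{k+1,k}$ gadget $(S_e,T_e)$ whose $k+1$ attackable vertices face only $k$ local guards, so that prematurely attacking a checker costs Attacker too many turns and the verification phase genuinely tests whether two selected vertices are adjacent. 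Patching your construction would require analogous size-inflated checkers, i.e., essentially redoing the paper's reduction.
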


Both these statements are based on the following lemma, which gives some insights about the shape of winning strategies for the eternal domination game on bipartite graphs. 

\begin{lemma}\label{bipartite-oneside}
Let $G$ be a bipartite graph and $\guards$ be a set of guards.
If Attacker wins in $t$ turns on $(G, \guards)$, then he can win in at most $t$ turns by always attacking the same side of the bipartition.
\end{lemma}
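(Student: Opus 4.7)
I would proceed by strong induction on $t$, establishing the equivalent reformulation $\min(t_G^A(D),t_G^B(D))\le t_G(D)$, where $t_G^X(D)$ denotes Attacker's minimum winning time under the restriction that only side-$X$ vertices may be attacked. The base case $t=1$ is immediate: the unique winning attack lies on exactly one side of the bipartition.

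For the inductive step, let $f$ be a winning strategy of length $t\ge 2$ and let $v_1=f(D)$; without loss of generality $v_1\in A$. For each legal Defender response $u_1\in N(v_1)\cap D\cap B$, set $D_1^{u_1}=D\cup\{v_1\}\setminus\{u_1\}$, so that $t_G(D_1^{u_1})\le t-1$. By the inductive hypothesis, from each $D_1^{u_1}$ Attacker has a one-sided winning strategy of length $\le t-1$. If this strategy is $A$-only for every $u_1$, prepending the attack $v_1\in A$ yields an $A$-only strategy of length $\le t$ from $D$, and we are done.

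Otherwise there exists a response $u_1^*$ for which only a $B$-only continuation $g$ of length $\le t-1$ is available from $D_1^{u_1^*}$. I would show directly that $t_G^B(D)\le t$ by transferring $g$ back to $D$. The key observation is that $D$ and $D_1^{u_1^*}$ differ only by the swap of $v_1\in A$ with $u_1^*\in B$; in particular $D\cap A=(D_1^{u_1^*}\cap A)\setminus\{v_1\}$, and in the $B$-only game Defender's responses are drawn exclusively from the $A$-part of the guard set. So Defender has strictly fewer resources from $D$ than from $D_1^{u_1^*}$, which is exactly the direction of monotonicity needed for the transfer.

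The main obstacle is that $D$ and $D_1^{u_1^*}$ also differ in their $B$-part: the vertex $u_1^*$ is a guard in $D$ but not in $D_1^{u_1^*}$, so any attack on $u_1^*$ prescribed by $g$ is legal from $D_1^{u_1^*}$ but illegal from $D$. I would handle this boundary case by a substitution argument: an attack on $u_1^*$ in $g$ only serves to force Defender to move some $A$-neighbor $z$ of $u_1^*$, and because $G$ is bipartite with $v_1\notin D$ absent from Attacker's side, one can replace that attack by attacking a carefully chosen alternative $B$-vertex whose $A$-neighborhood in the current guard set forces Defender to move the same $z$ (or, failing that, by a short sequence of $B$-attacks with the same net effect on $D\cap A$). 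Making this substitution precise, likely via a nested induction on the number of $u_1^*$-attacks that $g$ prescribes, is the technical crux; it crucially exploits the bipartite decoupling between Attacker's targets on one side and Defender's resources on the other.
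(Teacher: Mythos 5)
Your induction and the case split (all continuations $A$-only versus some continuation forced to be $B$-only) match the paper's overall structure, but the step you yourself flag as the ``technical crux'' is a genuine gap, and it is exactly where the paper does something different. Transferring Attacker's $B$-only strategy $g$ from $D_1^{u_1^*}$ back to $D$ breaks precisely when $g$ attacks $u_1^*$, and the proposed repair does not work: Attacker cannot ``force Defender to move the same $z$'' by attacking some other $B$-vertex, because Defender is adversarial and may answer with any guarded $A$-neighbor of the attacked vertex, and an alternative target with the required neighborhood need not exist at all --- $u_1^*$ may be the unique vertex witnessing the deficiency that $g$ exploits. The fallback of ``a short sequence of $B$-attacks with the same net effect'' would moreover consume extra turns, so it cannot be charged against the budget of $t$.

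The paper sidesteps all of this by running the simulation on Defender's side, in the contrapositive: if Defender could resist $t-1$ attacks on $B$ from $D$, she could also resist $t-1$ attacks on $B$ from $D_1^{u_1^*}$, contradicting the existence of $g$. The reason this direction is clean is an observation your argument never uses: $v_1$ and $u_1^*$ are adjacent. Hence the extra guard that $D_1^{u_1^*}$ carries on $v_1\in A$ exactly plugs the hole at $u_1^*\in B$ --- if Attacker ever attacks $u_1^*$, Defender answers with the guard on $v_1$, after which the two positions coincide; every other $B$-attack is answered as in the $D$-game, which is legal since $D\cap A\subseteq D_1^{u_1^*}\cap A$ and guards on $B$ never move in a $B$-only game. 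Replacing your substitution argument with this Defender-side simulation makes the rest of your outline go through.
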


\begin{proof}
We prove this result by induction on $t$. The result holds when $t=1$. Hence, we can assume that $t>1$. 

Denote by $(A,B)$ a bipartition of $G$. Let $v$ be the first move of Attacker in a winning strategy. By symmetry, we can assume that $v\in A$. Let $u \in N(v) \cap \guards$ be the answer of Defender. Let us denote by $\guards' = (\guards\setminus\{u\})\cup\{v\}$ the new set of guards. 

By assumption on $v$, Attacker has a winning strategy in $t-1$ steps on $(G, \guards')$. By induction, there exists such a strategy consisting in playing only on $A$ or only on $B$. In the first case, the conclusion follows. So we may assume that Attacker has a winning strategy in $t-1$ turns on $(G,\guards')$ by playing only in $B$. 

Assume for contradiction that, in $(G,\guards)$, Defender can resist to $t-1$ attacks on $B$. Then she can also resist to $t-1$ attacks on $B$ in $(G,\guards')$ by answering $v$ if Attacker plays on $u$, and answering with a guard from $\guards\cap A$ otherwise, following her strategy on $(G,\guards)$.

Since Attacker was supposed to win on $t-1$ steps on $(G,D')$, this is a contradiction. Therefore Attacker can win in $t-1$ turns on $(G,\guards)$ by playing only on $B$, which concludes.
\end{proof}

The main consequence of this result is the following preprocessing statement.

\begin{lemma}\label{guardsEqOnePart}
Let $G$ be a bipartite graph and $\guards$ be a set of guards. Then, one can compute in polynomial time a bipartite graph $G'$ on the same set of vertices such that $t_G(\guards) = t_{G'}(\guards)$ where $\guards$ and $V(G')\setminus\guards$ form a bipartition of $V(G')$.
\end{lemma}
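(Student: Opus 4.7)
The natural construction is to define $G'$ on the vertex set $V(G)$ by keeping only those edges of $G$ whose endpoints lie on opposite sides of the partition $(\guards, V\setminus\guards)$. This runs in polynomial time, and $(\guards, V\setminus\guards)$ is a bipartition of $G'$ by construction.

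To show $t_G(\guards)=t_{G'}(\guards)$, let $(A,B)$ be a bipartition of $G$ and write $\guards_A=\guards\cap A$, $\guards_B=\guards\cap B$. By Lemma~\ref{bipartite-oneside} applied to $G$, Attacker has an optimal strategy attacking only one side of $(A,B)$, so $t_G(\guards)=\min(t_A, t_B)$, where $t_A$ (resp.\ $t_B$) denotes the optimal time for Attacker when restricted to attacking vertices of $A$ (resp.\ $B$). Applying the same lemma to $G'$ and observing that the initial guards fill one side of $G'$'s bipartition (so that no attack on $\guards$ is feasible on the first turn), we may restrict to Attacker strategies attacking only $V\setminus\guards$ in $G'$. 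By construction, the edges of $G'$ split into those joining $A\setminus\guards_A$ to $\guards_B$ and those joining $B\setminus\guards_B$ to $\guards_A$, and these two bipartite pieces are vertex-disjoint.

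The heart of the argument is that the $G$-game restricted to attacks on $A$ is isomorphic to the $G'$-game restricted to attacks on $A\setminus\guards_A$. In the $G$-game restricted to $A$, attacked vertices always lie in $A\setminus\guards_A$ and Defender responds by moving a guard in $N_G(v)\cap\guards_B^{(t)}$; a short induction on the number of turns shows that the guard sets in the $G$- and $G'$-games coincide after each turn, and that Defender's options at every attack coincide as well, since by construction $N_{G'}(v)=N_G(v)\cap\guards_B$ for every $v\in A\setminus\guards_A$. The symmetric statement holds for $B$ and $B\setminus\guards_B$. Since the two bipartite pieces of $G'$ are vertex-disjoint, Attacker gains nothing by mixing attacks between them: any $G'$-strategy winning in $t$ turns must win in at most $t$ turns on one of the pieces in isolation. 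Combining these facts gives $t_{G'}(\guards)=\min(t_A,t_B)=t_G(\guards)$.

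The main obstacle is the formal verification of the isomorphism between the restricted games; concretely, one must check that the guard sets remain equal throughout the two games and that the edges removed from $G$ (those inside $\guards$ or inside $V\setminus\guards$) are genuinely irrelevant under the restricted strategies provided by Lemma~\ref{bipartite-oneside}.
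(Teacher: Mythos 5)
Your construction of $G'$ is the same as the paper's (delete every edge joining two guarded or two unguarded vertices), but your proof that $t_G(\guards)=t_{G'}(\guards)$ takes a genuinely different route. The paper obtains $t_{G'}(\guards)\le t_G(\guards)$ essentially for free from the fact that $G'$ is a subgraph of $G$ (Defender only loses options), and then proves $t_G(\guards)\le t_{G'}(\guards)$ by taking an optimal one-sided winning strategy in $G'$ with respect to the bipartition $(A,B)$ and checking that, along any play of that strategy, Defender has exactly the same available answers in $G$ as in $G'$. You instead compute both quantities exactly: $t_G(\guards)=\min(t_A,t_B)$ via Lemma~\ref{bipartite-oneside} applied to $(A,B)$, and $t_{G'}(\guards)$ via the decomposition of $G'$ into the two vertex-disjoint pieces on $(A\setminus\guards)\cup(\guards\cap B)$ and $(B\setminus\guards)\cup(\guards\cap A)$, each of whose restricted games you show is turn-by-turn isomorphic to one of the one-sided restrictions of the $G$-game. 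Both arguments are sound; yours is longer and leans on one step you assert rather than prove, namely that on a disjoint union Attacker gains nothing by mixing attacks between components, so that the optimal time is the minimum over the components. This is true and easy to justify --- Defender plays an optimal strategy independently in each component, so with $a_i$ attacks sent to component $i$ she survives as long as $a_i\le t_i-1$ for each $i$, hence survives $\min_i t_i - 1$ attacks in total --- but since Attacker is adaptive it does deserve a line (the paper proves the analogous statement only later, for cographs). What your route buys is a sharper structural picture: the $G'$-game literally splits into two independent one-sided games, which also makes explicit why only initially unguarded vertices ever get attacked. The cost is re-deriving an inequality that the subgraph relation gives immediately.
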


\begin{proof}
Let $G'$ be the graph obtained from $G$ by removing the edges between two guarded vertices, and between two unguarded vertices. $G'$ is a bipartite subgraph of $G$ with bipartition $(\guards,V(G)\setminus \guards)$. 

It remains to show that $t_G(\guards)=t_{G'}(\guards)$. Since $G'$ is a subgraph of $G$, any winning strategy of Attacker on $(G,\guards)$ is also winning in $(G',\guards)$, hence $t_G(\guards)\geqslant t_{G'}(\guards)$. For the converse direction, let $A\cup B$ be a bipartition of $G$. In particular, it is also a bipartition of $G'$, and by Lemma~\ref{bipartite-oneside} there exists a winning strategy for Attacker in $(G',\guards)$ in $t_{G'}(\guards)$ turns which consists of only playing in the same side of the bipartition, say $B$. We claim that following this strategy, Attacker wins in $t_{G'}(\guards)$ turns on $(G,\guards)$. More precisely, we claim that whenever Attacker attacks a vertex, Defender has the same possible answers in $G$ and $G'$. 

Since Attacker never plays in $A$, guards are only moved from $A$ to $B$. In particular, guards on $B$ cannot move and unguarded vertices in $A$ stay unguarded during the whole game. Therefore, if Attacker attacks $b\in B$, then $b\notin \guards$. Moreover, if $a\in A$ is a guarded neighbor of $b$ in $G$ then $a\in \guards$, hence $ab\in E(G')$.

Therefore, Attacker wins in $t_{G'}(\guards)$ turns on $(G,\guards)$, hence $t_G(\guards)=t_{G'}(\guards)$.
\end{proof}

We are now ready to prove Theorem~\ref{eds-bipartite}.

\begin{proof}[Proof of Theorem~\ref{eds-bipartite}]
Let $G = ((A,B), E)$ be a bipartite graph and $\guards$ be a set of guards. By Lemma~\ref{guardsEqOnePart}, we can modify the instance in order to assume that $B=\guards$. We claim that $\guards$ is an eternal dominating set of $G$ if and only if
$G$ admits a matching saturating $A$ (i.e. such that all the vertices of $A$ are matched). Proving this claim is enough to conclude since this criterion can be tested in polynomial time.

Let us now prove the claim. If $G$ has a matching $M$ saturating $A$ then the edges of $M$ and the unmatched vertices in $B$ form a clique cover of $G$ where each clique contains a guarded vertex. Hence, $G$ can be eternally dominated (Defender always defends an attacked vertex $v$ with a guard in the same clique as $v$ in the clique cover).

For the converse direction, assume there is no matching saturating $A$. By Hall's marriage theorem (see e.g.\cite{diestel-book}), there is a set of vertices $S \subseteq A$ such that $|N(S)| < |S|$. In particular, Attacker wins in at most $|N(S)|+1$ turns by successively attacking vertices in $S$: they can only be defended by vertices of $N(S)$. 
\end{proof}

The rest of this section is devoted to the proof of Theorem~\ref{conp-hard-bipartite}. The fact that is belongs to \PSPACE{} is actually an easy consequence of Lemma~\ref{guardsEqOnePart}, since it gives a polynomial bound (namely the number of guards plus one) on the minimum number of turns of a winning strategy for Attacker. 

It remains to show that $\kED$ is \coNP-hard on bipartite graphs.
To this end, we reduce the problem $\textsc{Independent-Set}$ to co-$\kED$. 

\bproblem{Independent set}{A graph $G$, an integer $k \geq 1$}{
$\alpha(G) \leq k$?}

Let $G$ be a graph and $k$ be an integer such that $(G,k)$ is an instance of $\textsc{Independent-Set}$. We assume that vertices of $G$ are labeled by integers from $[n]$. We build an instance $(G', \guards , t)$ of co-$\kED$ as follows (see Figure \ref{fig:bipartite}):
\begin{enumerate}
\item We create in $G'$ a graph $K_{k,n}$, and denote by $U=\{u_1,\ldots,u_k\}$ and $V=\{v_1,\ldots,v_n\}$ the vertices of each part.
\item For each edge $e \in E(G)$, we create a new complete bipartite graph $K_{k+1,k}$, with bipartition $S_e,T_e$.
\item For each edge $e = ij \in E(G)$, we connect each vertex of $S_e$ to $v_i$ and $v_j$.
\item The set $\guards$ of guards contains $V$ and all the $T_e$'s.
\end{enumerate}
Let $t = 2k+1$.
The graph $G'$ is bipartite and can be constructed in polynomial time. Now, Theorem~\ref{conp-hard-bipartite} boils down to prove the following.

\begin{figure}[!ht]
\begin{center}
\scalebox{0.8}{
	\begin{tikzpicture}[
every edge/.style = {draw=black,very thick},
 vertex/.style args = {#1/#2}{circle, draw, thick, fill=black,
      label=#1:#2}
                    ]

    \draw [dashed,color=red] (-1,1) -- (13,1);

    \node[circle, draw, thick,vertex=left/$u_1$] (u1) at (0.5, 0) {};
    \node[circle, draw, thick,vertex=left/$u_2$] (u2) at (1.5, 0) {};
    \node[circle, draw, thick,vertex=left/$u_3$] (u3) at (2.5, 0) {};

    \node[vertex=left/$v_1$] (v1) at (0, 2) {};
    \node[vertex=left/$v_2$] (v2) at (1, 2) {};
    \node[vertex=left/$v_3$] (v3) at (2, 2) {};
    \node[vertex=left/$v_4$] (v4) at (3, 2) {};

    \node at (1.5, -1){$k$};

    \draw [ultra thick,decorate,decoration={calligraphic brace,mirror}] (0.1, -0.5) --  (2.9,-0.5);

\path
    (u1) edge (v1)
    (u1) edge (v2)
    (u1) edge (v3)
    (u1) edge (v4)
        (u2) edge (v1)
    (u2) edge (v2)
    (u2) edge (v3)
    (u2) edge (v4)
        (u3) edge (v1)
    (u3) edge (v2)
    (u3) edge (v3)
    (u3) edge (v4)
    ;

    \node[circle, draw, thick] (s1) at (5, 0) {};
    \node[circle, draw, thick] (s2) at (6, 0) {};
    \node[circle, draw, thick] (s3) at (7, 0) {};
    \node[circle, draw, thick] (s4) at (8, 0) {};

    \node[circle, draw, thick,fill=black] (t1) at (5.5, 2) {};
    \node[circle, draw, thick,fill=black] (t2) at (6.5, 2) {};
    \node[circle, draw, thick,fill=black] (t3) at (7.5, 2) {};

    \node at (6.5, -1){$S_{1,2}$: $k+1$ vertices};

    \draw [ultra thick,decorate,decoration={calligraphic brace,mirror}] (4.6, -0.5) --  (8.4,-0.5);

    \node at (6.5, 3){$T_{1,2}$: $k$ vertices};

    \draw [ultra thick,decorate,decoration={calligraphic brace}] (5, 2.5) --  (8,2.5);

\path
    (s1) edge (t1)
    (s1) edge (t2)
    (s1) edge (t3)
        (s2) edge (t1)
    (s2) edge (t2)
    (s2) edge (t3)
        (s3) edge (t1)
    (s3) edge (t2)
    (s3) edge (t3)
        (s4) edge (t1)
    (s4) edge (t2)
    (s4) edge (t3)
    ;

\path
    (v1) edge (s1)
    (v1) edge (s2)
    (v1) edge (s3)
    (v1) edge (s4)
    (v2) edge (s1)
    (v2) edge (s2)
    (v2) edge (s3)
    (v2) edge (s4)
    ;

    \node[circle, draw, thick] (s1b) at (9, 0) {};
    \node[circle, draw, thick] (s2b) at (10, 0) {};
    \node[circle, draw, thick] (s3b) at (11, 0) {};
    \node[circle, draw, thick] (s4b) at (12, 0) {};

    \node[circle, draw, thick,fill=black] (t1b) at (9.5, 2) {};
    \node[circle, draw, thick,fill=black] (t2b) at (10.5, 2) {};
    \node[circle, draw, thick,fill=black] (t3b) at (11.5, 2) {};

\path
    (s1b) edge (t1b)
    (s1b) edge (t2b)
    (s1b) edge (t3b)
        (s2b) edge (t1b)
    (s2b) edge (t2b)
    (s2b) edge (t3b)
        (s3b) edge (t1b)
    (s3b) edge (t2b)
    (s3b) edge (t3b)
        (s4b) edge (t1b)
    (s4b) edge (t2b)
    (s4b) edge (t3b)
    ;

    \node at (10.5, -1){$S_{2,3}$: $k+1$ vertices};

    \draw [ultra thick,decorate,decoration={calligraphic brace,mirror}] (8.6, -0.5) --  (12.4,-0.5);

    \node at (10.5, 3){$T_{2,3}$: $k$ vertices};

    \draw [ultra thick,decorate,decoration={calligraphic brace}] (9, 2.5) --  (12,2.5);
    
 	\end{tikzpicture}
  }

\caption{
A partial representation of the reduction from  $(G, k)$
where $G$ is a graph with 4 vertices $\{v_1, v_2, v_3, v_4\}$ and two edges $v_1v_2$ and $v_2v_3$ and $k = 3$.
Every vertex in $S_{2,3}$ is connected to $v_2$ and $v_3$.
The vertices above the dashed line are guarded.
The length of the game is $t = 2k+1=7$.
}
\label{fig:bipartite}
\end{center}
\end{figure}

\begin{lemma}
\label{lem:red-conp}
    The graph $G$ has no independent set of size $k$ if and only if Attacker wins in at most $t$ turns on $(G', \guards)$.
\end{lemma}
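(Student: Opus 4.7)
The plan is to prove both implications. Throughout, by Lemma~\ref{bipartite-oneside} applied to $G'$ (which is bipartite with parts $U\cup\bigcup_e S_e$ and $V\cup\bigcup_e T_e$) together with the fact that the initial guard set $\guards=V\cup\bigcup_e T_e$ lies entirely on one side, I may restrict attention to Attackers that attack only on the side $U\cup\bigcup_e S_e$, since any first attack must occur at a non-guarded vertex.

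For the forward implication (no independent set of size $k$ implies Attacker wins in at most $2k+1$ turns), I would describe a two-phase adaptive strategy. In Phase~1, Attacker attacks $u_1,\ldots,u_k$ sequentially: each $u_l$'s only guarded neighbors lie in $V$, and no guard can be moved into $V$ while Attacker plays only on $U$, so Defender must move $k$ distinct guards out of $V$, exposing some $k$-subset $\{v_{i_1},\ldots,v_{i_k}\}$. By hypothesis, this subset is not independent in $G$, so it contains both endpoints $v_i,v_j$ of some edge $e=ij$, identified adaptively by Attacker from Defender's responses. In Phase~2, Attacker attacks $k+1$ distinct vertices of $S_e$; since $v_i,v_j$ are unguarded throughout Phase~2, only guards in $T_e$ (size $k$) can defend. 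After $k$ responses $T_e$ is exhausted, and the $(k+1)$-th attack has no legal answer, giving a win in $k+(k+1)=2k+1$ turns.

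For the reverse implication, fix an independent set $I=\{w_1,\ldots,w_k\}$ of $G$ and consider the Defender strategy $\calD_I$: on a $U$-attack, move a guard from $I\cap\guards$; on an attack on $s\in S_e$ with $e=ij$, first move a guard from $T_e$ if possible, otherwise from $\{v_i,v_j\}\setminus I$. The key counting observations are: (i) at most $k$ attacks occur on $U$ (each $u_l$ stays guarded once it has been guarded); (ii) at most $k+1$ attacks occur on any single $S_e$; and (iii) a ``crisis'' --- where Defender is forced to move a guard from $\{v_i,v_j\}\setminus I$ --- happens only on the $(k+1)$-th attack of a single $S_e$, hence at most once overall since two crises would require $2(k+1)=2k+2>2k+1$ attacks.

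The main obstacle is verifying that $\calD_I$ always has a legal move. For $U$-attacks this follows since $|I|=k$ and at most $k$ such attacks occur, leaving some $I$-guard available; for non-crisis $S_e$-attacks, $T_e$ is still populated; and for the sole possible crisis on edge $e=ij$, independence of $I$ forces $\{v_i,v_j\}\setminus I\neq\emptyset$, while the invariant $V\setminus I\subseteq\guards$ --- maintained because $\calD_I$ touches $V\setminus I$-guards only during crises, and none has yet occurred --- ensures the target is still guarded. Hence Defender survives $2k+1$ attacks, so $t_{G'}(\guards)>2k+1$, as required.
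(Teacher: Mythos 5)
Your proof is correct and follows essentially the same approach as the paper: the same two-phase adaptive Attacker strategy (exhaust $U$, then flood $S_e$ for a forced edge $e$) and the same Defender strategy (answer $U$-attacks from the independent set, $S_e$-attacks from $T_e$, falling back on $\{v_i,v_j\}\setminus I$). Your explicit count showing that at most one ``crisis'' can occur within $2k+1$ attacks, together with the invariant $V\setminus I\subseteq\guards$, is just a slightly more careful write-up of the step the paper states tersely.
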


\begin{proof}
    If $G$ has no independent set of size $k$, then, Attacker can win as follows. Attacker first attacks all the vertices of $U$ one by one. Defender must choose $k$ vertices $v_{j_1}, \ldots, v_{j_k}$ in $V$ to defend these attacks. Since $G$ has no independent set of size $k$, Defender must have chosen two vertices $v_i,v_j$ of $V$ that are connected by an edge $e=v_iv_j$ in $G$. Attacker then attacks the $k+1$ vertices of $S_e$ one after another. Since both $v_i$ and $v_j$ have been slid on $U$, Defender can only defend these attacks by sliding guards from $T_e$ to $S_e$. Since $S_e$ has size $k+1$ and $T_e$ only has size $k$, Attacker wins in at most $t=2k+1$ turns.

For the converse direction, assume that $G$ has an independent set $S = \{j_1, \ldots, j_k\}$ of size $k$. By Lemma~\ref{bipartite-oneside}, we can assume that Attacker only plays on $U$ or vertices in $S_e$ (since the other side of the bipartition only contains guarded vertices). We claim that Defender can defend against Attacker during $t$ turns with the following strategy:
\begin{enumerate}
\item If Attacker plays on $u_i\in U$, Defender moves the guard $v_{j_i}$.
\item If Attacked attacks a vertex in $S_e$ and that at least one other vertex of $S_e$ has not been attacked, then Defender moves a guard from a vertex of $T_e$.
\item If Attacker and attacks a vertex in $S_e$ with $e=v_iv_j$ and all the other vertices of $S_e$ have been attacked then Defender moves a guard from $\{ v_i,v_j\} \setminus S$ to defend the attack if such a vertex exists.
\end{enumerate}
This strategy indeed permits to defend against all the attacks on $U$. So if Defender cannot defend, it is at Step 3. Since $S$ is an independent set, $\{ v_i,v_j\} \setminus S$ is not empty for every edge and then Defender can defend against the attack of the last vertex of $S_e$ at least once. So Attacker cannot win in $t$ turns on $(G',\guards)$.
\end{proof}

This completes the proof of Theorem~\ref{conp-hard-bipartite}.
We can actually easily adapt this reduction to obtain the same result on split graphs.

\begin{theorem}\label{conp-hard-split}
$\kED$ is \coNP-hard on split graphs.
\end{theorem}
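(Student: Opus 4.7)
The plan is to take the bipartite graph $G'$ from the proof of Lemma~\ref{lem:red-conp} and complete its guarded side $K = V \cup \bigcup_e T_e$ into a clique by adding every missing edge inside it, while leaving the unguarded side $I = U \cup \bigcup_e S_e$ (which is already independent in $G'$) untouched. The resulting graph $G''$ is a split graph with clique $K$ and independent set $I$; we keep $\guards = K$ and $t = 2k+1$, and the aim is to show that $G$ has no independent set of size $k$ if and only if Attacker wins on $(G'', \guards)$ in at most $2k+1$ turns.

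The forward direction is essentially free: the Attacker strategy given in the proof of Lemma~\ref{lem:red-conp} only attacks vertices in $I$, and adding edges inside $K$ does not change the neighborhood $N(v)$ of any $v \in I$. So Defender's set of possible responses to each attack is identical in $G'$ and in $G''$, and Attacker wins on $(G'', \guards)$ in at most $2k+1$ turns using exactly the same sequence of moves.

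For the converse, assume $G$ has an independent set $S$ of size $k$. Defender's mixed strategy is as follows: on an $I$-attack she replies exactly as in the bipartite strategy of Lemma~\ref{lem:red-conp}, and on a $K$-attack on some $v$ (which, since initially $\guards = K$, can happen only after some earlier bipartite move freed $v$) she replies by undoing that move, sliding back onto $v$ the $I$-guard that originally left $v$. The main obstacle is to make this precise and to verify that both types of responses remain executable throughout. To this end, I plan to maintain an injective partial map $\mu \colon K \setminus \guards \to I \cap \guards$ along edges of $G''$, defined inductively: every bipartite move $w \to v$ extends $\mu$ by the pair $(w, v)$, and every undo deletes the corresponding pair. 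A short check shows that $\mu$ is always a bijection between $K \setminus \guards$ and $I \cap \guards$, so undos are always legal: $\mu(v)$ is defined, adjacent to $v$, and still in $\guards$ because the targets of $\mu$ are guarded vertices (hence unattackable by Attacker) and are moved only by the undo that removes them from $\mu$. The key observation is that each $K$-attack followed by its matching undo restores the state to what it was before the bipartite move that created the pair, so after $r \leq 2k+1$ total attacks on $G''$, $\guards$ coincides with the bipartite state Defender would reach after a sequence of at most $r$ $I$-attacks (the effective ones, i.e.\ those whose pair in $\mu$ has not been popped). Since this effective sequence has length at most $2k+1$ and is defeated by the bipartite strategy when $G$ has an independent set of size $k$, the same holds in $G''$, giving $t_{G''}(\guards) > 2k+1$ and completing the reduction.
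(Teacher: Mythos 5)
Your construction is genuinely different from the paper's: you complete the guarded side into a clique, whereas the paper adds a single new guarded universal vertex $s$ joined to all initially guarded vertices and lets the guard coming from $s$, alone, absorb every attack on the clique side. Your forward direction is fine. The gap is in the converse direction, in the claim that after any prefix of play the configuration ``coincides with the bipartite state Defender would reach after the effective subsequence of $I$-attacks''. This is false for the strategy of Lemma~\ref{lem:red-conp}, because that strategy is history-dependent: whether Defender answers an attack on $S_e$ from $T_e$ (step~2) or from $\{v_a,v_b\}\setminus S$ (step~3) depends on which vertices of $S_e$ \emph{have been attacked}, and an undone attack still counts as attacked. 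Concretely, suppose $G$ has an edge $e=v_av_b$ with $v_b\in S$, and let Attacker play $s_1,\dots,s_k\in S_e$ (each answered by step~2, spending all $k$ guards of $T_e$), then $s_{k+1}$ (every other vertex of $S_e$ has now been attacked, so step~3 fires and spends the unique guard of $\{v_a,v_b\}\setminus S=\{v_a\}$), then the vacated clique vertex $t_1\in T_e$ (your undo sends the guard back from $s_1$ to $t_1$), then $s_1$ again. At this last attack step~3 fires once more, but $v_a$ is already gone, so the strategy you describe has no answer after only $k+3\leq 2k+1$ attacks. (Defender \emph{could} answer from $t_1$, which is guarded again, but that is not what your strategy, nor your invariant, provides.) The same failure occurs under the ``effective history'' reading of your strategy, since the response to $s_{k+1}$ was computed while $s_1$ was still in the effective sequence, and popping $s_1$ later does not retroactively change that response.

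The reduction itself is plausibly correct, but the converse needs a Defender strategy whose independent-side responses depend only on the \emph{current configuration} (for instance ``answer an attack on $S_e$ from $T_e$ whenever a guard remains there, and only otherwise from $\{v_a,v_b\}\setminus S$''), together with a fresh resource count: your $\mu$-invariant only transfers the bipartite guarantee for memoryless strategies. The paper sidesteps all of this bookkeeping: its extra guard starts on $s$, never leaves the clique, and is the only guard ever used against attacks on initially guarded vertices, so such attacks never disturb the guards managed by the bipartite strategy and are pure wasted turns, letting Lemma~\ref{lem:red-conp} apply verbatim to the remaining attacks.
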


Let $(G',\guards)$ be the instance obtained by the previous reduction. We define the graph $G''$ as the graph obtained from $G'$ by adding a new vertex $s$ and all edges between $s$, $V$ and the vertices of each $T_e$. Fix also $\guards'=\guards\cup\{s\}$. Now we get the following analogue of Lemma~\ref{lem:red-conp}.

\begin{lemma}
$G$ has no independent set of size $k$ if and only if Attacker wins in at most $t$ turns on $(G'', \guards')$.
\end{lemma}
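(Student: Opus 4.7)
The plan is to reproduce the proof of Lemma~\ref{lem:red-conp} with minor adaptations, exploiting two structural facts about $G''$: the new vertex $s$ has no neighbour in the independent part $I := U \cup \bigcup_e S_e$, and the set $C := \guards' = V \cup \bigcup_e T_e \cup \{s\}$ forms a clique. The rest of the adjacencies between $C$ and $I$ are exactly those of $G'$.

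For the forward direction, I would reuse verbatim the Attacker strategy from the proof of Lemma~\ref{lem:red-conp}: attack $u_1,\dots,u_k$ in sequence, then attack the $k+1$ vertices of $S_e$ for some edge $e$ whose endpoints both lie among the $k$ vertices that Defender moved out of $V$ (such an edge exists since $G$ has no IS of size $k$). Since $s$ is adjacent to no vertex of $I$, Defender's pool of guarded neighbours at each of these $2k+1$ attacks is literally the same as in $G'$, so the same counting argument shows that Defender cannot handle the $(2k+1)$-st attack.

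For the backward direction, fix an IS $S$ of size $k$ in $G$ and have Defender reuse her bipartite-case strategy for attacks on $I$: on an attack on $u_i$ move $\sigma(u_i)\in S$ for a fixed bijection $\sigma:U\to S$; on the first $k$ attacks on $S_e$ move a $T_e$-guard; on the $(k+1)$-st move $\mu(e)\in\{v_a,v_b\}\setminus S$. The only new thing to handle is attacks on $C$, which can only occur once some clique vertex $v$ has been emptied by a previous non-clique attack. For such an attack, Defender responds by a 2-cycle inside the clique: either she moves $s$ to $v$ (when $s$ is still guarded), or, if $v=v_{j_i}$ was emptied by a previous $u_i$-attack, she moves the guard currently on $u_i$ to $v_{j_i}$. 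Both responses shuffle guards inside $C$ without touching any vertex reserved by the bipartite strategy.

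The main obstacle is to certify that these 2-cycle responses never cost Defender an essential clique guard, i.e.\ that $\sigma(u_i)$ is guarded whenever $u_i$ is attacked, a $T_e$-guard is available on each of the first $k$ attacks on $S_e$, and $\mu(e)$ is guarded on the $(k+1)$-st. The key accounting is that attacks on $I$ strictly decrease the number of guards in $C$ (by one each), whereas attacks on $C$ only shuffle guards within $C$; since $|C|=n+k|E(G)|+1\geq 2k+2>t$ in the nontrivial case ($|E(G)|\geq 1$, which forces $n\geq k+1$), $C$ always contains at least one guard and Defender always has room to route her 2-cycle through $s$ or through an already-used vertex instead of consuming an essential one. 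The degenerate case $|E(G)|=0$ is immediate, since Attacker can then only ever attack $U$ and later $V$, and Defender's bipartite strategy together with the extra guard $s$ trivially handles any sequence of $t$ such attacks. This yields a $t$-turn Defender response against any Attacker strategy on $(G'',\guards')$.
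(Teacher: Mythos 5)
Your forward direction is fine and matches the paper: since $s$ has no neighbour in $U\cup\bigcup_e S_e$, the Attacker strategy of Lemma~\ref{lem:red-conp} transfers verbatim. The gap is in the backward direction, in how you handle attacks on the clique $C=\guards'$. Your two prescribed responses do not cover all cases: once the guard has left $s$ (which happens at the first attack on an emptied clique vertex), a later attack on an emptied vertex of some $T_e$, or on an emptied $v_a\in\{v_a,v_b\}\setminus S$ vacated by the rule for the $(k+1)$-st attack on $S_e$, matches neither ``$s$ is still guarded'' nor ``$v=v_{j_i}$ was emptied by a previous $u_i$-attack'', so Defender has no prescribed move (this already occurs for $k\ge 2$, e.g.\ attack all of $S_e$, then the vacated $v_a$, then a vacated $T_e$-vertex). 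Moreover, the counting you offer ($|C|>t$, so some guard always remains in $C$) only shows that \emph{a} guard is available, not that a \emph{non-essential} one is: a priori every remaining guard of $C$ could be one of the $v_{j_i}$'s, a $T_e$-guard, or some $\mu(e)$, all of which are reserved for future attacks on the independent side. Finally, your second response (moving the guard from $u_i$ back to $v_{j_i}$) re-opens $u_i$, and any analogous response routed through a guarded $S_e$-vertex would re-open that vertex and perturb the $(k+1)$-attacks-versus-$(k+1)$-defences count of Lemma~\ref{lem:red-conp} on which your strategy relies; none of this is verified.

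The paper's resolution closes all of these at once and is simpler: designate the guard initially on $s$ as a \emph{special guard} and let it answer \emph{every} attack on $V\cup\{s\}\cup\bigcup_e T_e$, from wherever it currently sits. By induction this guard never leaves the clique, hence it is always adjacent to the attacked clique vertex; it is never called upon by the Lemma~\ref{lem:red-conp} rules for attacks on the independent side; and no guard that has moved to the independent side ever moves back, so the original counting is undisturbed. Your argument becomes correct if you replace your two clique responses by this single rule; as written, the case analysis is incomplete and the non-interference claim is asserted rather than proved.
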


\begin{proof}
If $G$ has no independent set of size $k$, then Attacker wins in $t$ turns on $(G'',\guards')$ using the same strategy as in Lemma~\ref{lem:red-conp}.

Conversely, if $G$ has an independent set of size $k$, then Defender can also resist $t$ turns with a very similar strategy to Lemma~\ref{lem:red-conp}, except that if Attacker plays on $V\cup\{s\}$ or some $T_e$, she then answers by moving the guard initially posted on $s$.
\end{proof}

\section{Perfect graphs}
\label{sec:perfect}
Let us now prove that $\kED$ is \PSPACE-hard on \emph{$2$-unipolar graphs}. A graph is \emph{unipolar} if its vertices can be bipartitioned into a clique $V_1$ and a disjoint union of cliques $V_2$. It is moreover \emph{$k$-unipolar} if the cliques in $V_2$ have size at most $k$. Since unipolar graphs are perfect~\cite{unipolar-perfect}, $\kED$ also is \PSPACE-hard on perfect graphs. Moreover, $2$-unipolar are weakly chordal, hence the problem also is actually \PSPACE-complete on weakly chordal graphs.

\begin{theorem}\label{unipolar-pspace}
$\kED$ is \PSPACE-hard on $2$-unipolar graphs. \end{theorem}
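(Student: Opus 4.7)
My plan is to reduce from \textsc{Unordered-CNF}, the known \PSPACE-complete two-player game on a CNF formula $\varphi$, in which Falsifier and Satisfier alternately pick an as-yet-unassigned variable and fix its truth value, with Falsifier winning iff the final assignment falsifies some clause. Given $\varphi$ with variables $x_1,\ldots,x_n$ and clauses $C_1,\ldots,C_m$, I would build a $2$-unipolar graph $G=(V_1\cup V_2,E)$ with a guard set $\guards$ and a bound $t=O(n)$ such that Attacker wins in at most $t$ turns on $(G,\guards)$ iff Falsifier has a winning strategy on $\varphi$.

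The construction would use \emph{variable gadgets} inside $V_2$, one per variable: for each $x_i$ take a $K_2$ with vertices $x_i^T, x_i^F$, initially carrying a single guard. Attacking the unguarded endpoint of such a gadget forces the guard to slide across, permanently recording an assignment of $x_i$. The clique $V_1$ would contain a vertex $c_j$ for each clause $C_j$, made adjacent exactly to those literal vertices of $V_2$ that satisfy $C_j$, together with carefully dimensioned \emph{control} vertices that regulate whose move is currently being simulated. Initially every vertex of $V_1$ is guarded except a few distinguished trigger vertices. Attacker, playing Falsifier, attacks a variable gadget to fix an assignment; to simulate a Satisfier move, a scripted Attacker attack on a control vertex allows Defender to respond by sliding a guard out of $V_1$ onto her chosen variable gadget. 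After roughly $2n$ turns every variable has been assigned, and the game ends with Attacker potentially attacking a clause vertex $c_j$: by construction $c_j$ can be defended iff some literal satisfying $C_j$ is guarded, i.e., iff $C_j$ is satisfied by the resulting assignment.

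The main obstacle is enforcing this tight correspondence within the rigid structure of $2$-unipolar graphs, where $V_2$ is just a disjoint union of $K_1$s and $K_2$s, so all nontrivial logic must flow through the clique $V_1$. I would calibrate the number of padding guards in $V_1$ (and the edges between $V_1$ and $V_2$) so that Defender never has slack to deviate from the intended simulation: every scripted attack on a control vertex is defensible only by sliding exactly one guard onto one free variable gadget, and every premature Attacker move on a clause vertex before all variables are assigned can be safely absorbed without gaining ground, because enough "spare" guards sit inside the clique. A companion normalization argument — in the same spirit as Lemma~\ref{bipartite-oneside} — would let me restrict attention to strategies that respect the intended round structure. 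Once the correspondence is locked in, both directions of the equivalence follow: a winning Falsifier strategy translates into an Attacker strategy that ends by attacking a falsified clause in $t=2n+1$ turns, and a winning Satisfier strategy translates into a Defender strategy keeping every clause vertex defensible throughout, so Defender resists all $t$ attacks.
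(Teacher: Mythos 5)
Your choice of source problem (\textsc{Unordered-CNF}) matches the paper's, but the core mechanism you propose for simulating Satisfier is not a legal move in this game, and this is a genuine gap rather than a detail to calibrate. When Attacker attacks a vertex $v$, Defender's only action is to move some guard from a neighbor of $v$ \emph{onto $v$}; she cannot ``slide a guard out of $V_1$ onto her chosen variable gadget,'' because the destination of every guard move is dictated by Attacker's attack, not by Defender. The only freedom Defender has is \emph{which} guard she moves, and any working reduction must encode Satisfier's choice of a variable and a value entirely in that choice of guard. The paper does exactly this: it introduces a $2k\times 2k$ grid of guarded vertices $v_{a,b}$ indexed by an $X$-literal $a$ and a $Y$-literal $b$, so that answering an attack on $u_a$ forces Defender to pick some $b$, which \emph{is} Satisfier's move; large ``variable checker'' gadgets (of size $M=8k^2$) then punish Defender if she ever picks two guards with conflicting $Y$-literals, which is what enforces a consistent assignment. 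Without some analogue of this indexing trick, your control-vertex scheme has no way to let Defender express a choice.

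Your endgame also fails for a structural reason specific to unipolar graphs. You place the clause vertices $c_j$ inside the clique $V_1$ and want a single final attack on $c_j$ to be defensible iff $C_j$ is satisfied; but every unguarded vertex of the clique is adjacent to every guarded vertex of the clique, so as long as your ``spare'' padding guards sit in $V_1$ (which you need for absorbing premature attacks), the attack on $c_j$ is always defensible. This is why the paper's final phase is not one attack but a sequence of $M$ attacks on a large independent set $W_i\subseteq V_2$ whose guarded neighborhood $W'_i\cup L_i$ is dimensioned to have size exactly $M-1$ after a falsified clause has cost it $k$ guards; consequently the bound is $t=k+M=\Theta(k^2)$ rather than your $O(n)$. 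As written, neither direction of your equivalence can be established.
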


The rest of this section is devoted to prove Theorem~\ref{unipolar-pspace}.
We make a reduction from \textsc{Unordered-CNF}.
An instance of \textsc{Unordered-CNF} consists of two sets of variables $X,Y$ with $|X|=|Y|$ and a CNF formula $\varphi$ with variables $X \cup Y$. The output is true if and only if Satisfier has a winning strategy in the following game: Two players, called Falsifier and Satisfier, successively choose a variable (Falsifier in $X$ and Satisfier in $Y$) and assigns it a truth value, until no variable remains. Satisfier wins if the assignment satisfies the formula $\varphi$.
Schaefer proves that \textsc{Unordered-CNF} is \PSPACE-complete \cite{schaefer78}.

\paragraph{Our reduction}
Let $(X,Y,\varphi)$ be an instance of \textsc{Unordered-CNF} and let $k=|X|=|Y|$ and $M=8k^2$. Without loss of generality, we can assume that $k\geqslant 2$ and every clause of $\varphi$ contains at least one variable from $Y$. We create an instance $(G_\varphi,\guards_\varphi)$ of $\kED$ as follows (see Figure~\ref{fig:pspace} for an illustration).
Let us denote by $\overline{X}$ (resp. $\overline{Y}$) the set $\{ \overline{x},x \in X\}$ (resp. $ \{\overline{y},y \in Y\}$).

\begin{itemize}
\item For each $x\in X$, we create two adjacent vertices $u_x$ and $u_{\bar{x}}$ in $G_\varphi$. 

Let us denote by $U$ the union of all these vertices.
\item For each $x\in X$ and $y\in Y$, we create four vertices $v_{x,y},v_{x,\bar{y}},v_{\bar{x},y}$ and $v_{\bar{x},\bar{y}}$, that are all in $\guards_\varphi$. \\
For every $y \in Y$, we denote by $V_y^* = \{ v_{a,b}$ with $b \in \{ y,\overline{y} \}\}$. Note that $V_y^*$ has size $4k$.
\item For every $a \in X \cup \overline{X}$ and $b \in Y \cup \overline{Y}$, we create an edge between $u_a$ and $v_{a,b}$.
\item For each variable $y\in Y$, we create two new sets of vertices $V_y,V'_y$ of size respectively $M$ and $M-4k+1$. We add a complete bipartite graph between $V_y$ and $V'_y$ called the \emph{variable checker of $y$}.

We add all the possible edges between $V_y$ and $V_y^*$ and we put guards on all the vertices of $V_y'$.
\item For each clause $C_i = \bigvee_j \ell_{i,j}$, let $L_i$ be the set of vertices $v_{\ell,\ell'}$ where neither $\ell$ nor $\ell'$ appears in $C_i$. 

We create two new sets $W_i,W'_i$ of size respectively $M$ and $M-|L_i|+k-1$ and all the edges between $W_i$ and $W_i'$. This bipartite graph is called the \emph{clause checker} of $C_i$. 

We finally connect all the vertices of $W_i$ to all vertices of $L_i$, and put a guard on each vertex of $W'_i$.
\item We create a last guarded vertex $s$.
\item We add an edge between every pair of guarded vertices.
\end{itemize}

\begin{figure}[!ht]
\begin{center}
\scalebox{0.8}{
	\begin{tikzpicture}[
every edge/.style = {draw=black,very thick},
 vertex/.style args = {#1/#2}{circle, draw, thick, fill=black,
      label=#1:#2}
                    ]

  \draw [dashed,color=red] (-1,1) -- (12.5,1);
\draw [dashed,color=red] (7.5,7) -- (12.5,7);
    \draw[ultra thick, color=blue] (-0.7,5.0) rectangle (6.7,8.7);

    \draw[pattern=north west lines, pattern color=blue!30] (-0.5,1.2) rectangle (0.5,4.5);
    \draw[pattern=north west lines, pattern color=blue!30] (5.5,1.2) rectangle (6.5,4.5);
    \draw[pattern=north west lines, pattern color=blue!30] (-0.5,7.2) rectangle (0.5,8.5);
    \draw[pattern=north west lines, pattern color=blue!30] (5.5,7.2) rectangle (6.5,8.5);

    \node[vertex=below/$s$] (s) at (10, 4) {};

    \node[vertex=below/$u_{x_1}$,fill=white] (x1) at (0,0) {};
    \node[vertex=below/$u_{\overline{x_1}}$,fill=white] (x1b) at (2,0) {};

    \node[vertex=below/$u_{x_2}$,fill=white] (x2) at (4,0) {};
    \node[vertex=below/$u_{\overline{x_2}}$,fill=white] (x2b) at (6,0) {};

    \node[vertex=below/$v_{x_1,y_1}$] (y11) at (0,2) {};
    \node[vertex=below/$v_{x_1,\overline{y_1}}$] (y11b) at (0,4) {};

    \node[vertex=below/$v_{x_1,y_2}$] (y12) at (0,6) {};
    \node[vertex=below/$v_{x_1,\overline{y_2}}$] (y12b) at (0,8) {};

    \node[vertex=below/$v_{\overline{x_1},y_1}$] (y1b1) at (2,2) {};
    \node[vertex=below/$v_{\overline{x_1},\overline{y_1}}$] (y1b1b) at (2,4) {};

    \node[vertex=below/$v_{\overline{x_1},y_2}$] (y1b2) at (2,6) {};
    \node[vertex=below/$v_{\overline{x_1},\overline{y_2}}$] (y1b2b) at (2,8) {};

    \node[vertex=below/$v_{x_2,y_1}$] (y21) at (4,2) {};
    \node[vertex=below/$v_{x_2,\overline{y_1}}$] (y21b) at (4,4) {};

    \node[vertex=below/$v_{x_2,y_2}$] (y22) at (4,6) {};
    \node[vertex=below/$v_{x_2,\overline{y_2}}$] (y22b) at (4,8) {};

    \node[vertex=below/$v_{\overline{x_2},y_1}$] (y2b1) at (6,2) {};
    \node[vertex=below/$v_{\overline{x_2},\overline{y_1}}$] (y2b1b) at (6,4) {};

    \node[vertex=below/$v_{\overline{x_2},y_2}$] (y2b2) at (6,6) {};
    \node[vertex=below/$v_{\overline{x_2},\overline{y_2}}$] (y2b2b) at (6,8) {};

    \node[circle, draw, thick] (c1) at (8, 0) {};
    \node[circle, draw, thick] (c2) at (9, 0) {};
    \node at (10, 0) {...};
    \node[circle, draw, thick] (c3) at (11, 0) {};
    \node[circle, draw, thick] (c4) at (12, 0) {};

    \node[circle, draw, thick, fill=black] (c1p) at (8, 2) {};
    \node[circle, draw, thick, fill=black] (c2p) at (9, 2) {};
    \node at (10, 2) {...};
    \node[circle, draw, thick, fill=black] (c3p) at (11, 2) {};
    \node[circle, draw, thick, fill=black] (c4p) at (12, 2) {};

    \draw [ultra thick,decorate,decoration={calligraphic brace,mirror}] (7.6,-0.5) --  (12.4,-0.5);

    \node at (10, -1){$V_{y_2}$: $M$ vertices};
    \node at (10, 3){$V'_{y_2}$: $M-4k+1$ vertices};

    \draw [ultra thick,decorate,decoration={calligraphic brace}] (7.6, 2.5) --  (12.4,2.5);
    
    \node[circle, draw, thick] (d1) at (8, 6) {};
    \node[circle, draw, thick] (d2) at (9, 6) {};
    \node at (10, 6) {...};
    \node[circle, draw, thick] (d3) at (11, 6) {};
    \node[circle, draw, thick] (d4) at (12, 6) {};

    \node[circle, draw, thick, fill=black] (d1p) at (8, 8) {};
    \node[circle, draw, thick, fill=black] (d2p) at (9, 8) {};
    \node at (10, 8) {...};
    \node[circle, draw, thick, fill=black] (d3p) at (11, 8) {};
    \node[circle, draw, thick, fill=black] (d4p) at (12, 8) {};

    \draw [ultra thick,decorate,decoration={calligraphic brace,mirror}] (7.6,5.5) --  (12.4,5.5);

    \node at (10, 5){$F_2$: $M$ vertices};
    \node at (10, 9){$F'_2$: $M-|L_2|+k-1$ vertices};

    \draw [ultra thick,decorate,decoration={calligraphic brace}] (7.6, 8.5) --  (12.4,8.5);
    
    \path
    (x1) edge (x1b)
    (x2) edge (x2b)

    (x1) edge[bend left] (y11)
    (x1) edge[bend left] (y11b)
    (x1) edge[bend left] (y12)
    (x1) edge[bend left] (y12b)

    (x1b) edge[bend left] (y1b1)
    (x1b) edge[bend left] (y1b1b)
    (x1b) edge[bend left] (y1b2)
    (x1b) edge[bend left] (y1b2b)

    (x2) edge[bend left] (y21)
    (x2) edge[bend left] (y21b)
    (x2) edge[bend left] (y22)
    (x2) edge[bend left] (y22b)

    (x2b) edge[bend left] (y2b1)
    (x2b) edge[bend left] (y2b1b)
    (x2b) edge[bend left] (y2b2)
    (x2b) edge[bend left] (y2b2b)

    (c1) edge (c1p)
    (c1) edge (c2p)
    (c1) edge (c3p)
    (c1) edge (c4p)
    (c2) edge (c1p)
    (c2) edge (c2p)
    (c2) edge (c3p)
    (c2) edge (c4p)
    (c3) edge (c1p)
    (c3) edge (c2p)
    (c3) edge (c3p)
    (c3) edge (c4p)
    (c4) edge (c1p)
    (c4) edge (c2p)
    (c4) edge (c3p)
    (c4) edge (c4p)

    (d1) edge (d1p)
    (d1) edge (d2p)
    (d1) edge (d3p)
    (d1) edge (d4p)
    (d2) edge (d1p)
    (d2) edge (d2p)
    (d2) edge (d3p)
    (d2) edge (d4p)
    (d3) edge (d1p)
    (d3) edge (d2p)
    (d3) edge (d3p)
    (d3) edge (d4p)
    (d4) edge (d1p)
    (d4) edge (d2p)
    (d4) edge (d3p)
    (d4) edge (d4p)
    ;

 	\end{tikzpicture}}
\caption{A partial representation of the reduction from the formula $(x_1 \vee \overline{y_1}) \wedge (\overline{x_1} \vee x_2 \vee y_2)$.
Only two checkers are represented, one for the variable $y_2$ and the other for the clause $C_2 = \overline{x_1} \vee x_2 \vee y_2$.
The vertices above the dashed line are guarded and forms a clique.
All vertices of $V_{y_2}$ are connected to the 8 vertices in the thick rectangle.
All vertices of $F_2$ are connected to the vertices in the hatched zones.
The length of the game is $t = k+M = 34$.
}
\label{fig:pspace}
\end{center}
\end{figure}

The graph $G_\varphi$ is $2$-unipolar since removing the clique induced by the guarded vertices leaves isolated vertices or isolated edges. Moreover, the construction can be carried out in polynomial time. It remains to show that the reduction is correct. We split this result in two lemmas.

\begin{lemma}
If Falsifier wins \textsc{Unordered-CNF} on $\varphi$, then Attacker wins the mobile domination game on $(G_\varphi, \guards_\varphi)$ in at most $k+M$ turns.
\end{lemma}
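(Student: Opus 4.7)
\emph{Plan.} I would exhibit an explicit two-phase strategy for Attacker that uses at most $k + M$ turns; Phase 1 simulates Falsifier's winning strategy via attacks on $U$, and Phase 2 exhausts a single checker gadget with $M$ attacks. In Phase 1, Attacker consults Falsifier's winning strategy $\sigma$ and, in round $i$, reads off the prescribed literal $a_i \in X \cup \overline{X}$ (given the history of the simulated \textsc{Unordered-CNF} game) and attacks $u_{a_i}$. Because Attacker never touches $u_{\overline{a_i}}$ (Falsifier plays each $X$-variable at most once), the only guarded neighbors of $u_{a_i}$ lie among the $v_{a_i, b}$'s, so Defender must move some guard $v_{a_i, b_i}$ to $u_{a_i}$; the literal $b_i$ is then interpreted as Satisfier's response. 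Phase 1 lasts $k$ rounds, unless Attacker aborts to Phase 2 because some variable $y\in Y$ has had two of its literals played as $b_j$'s.

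In Phase 2, Attacker attacks $M$ distinct vertices of a single checker. If Phase 1 aborted at round $i \leq k$ because some $y$ was used $d_y\geq 2$ times, I attack $V_y$: the count $|V_y^* \cap \guards| + |V_y'| = (4k-d_y) + (M-4k+1) = M+1-d_y \leq M-1$ shows that Defender has too few adjacent guards to answer all $M$ attacks, so she fails by the $(i + M)$-th move, and $i + M \leq k + M$. If Phase 1 completed with each $y$ used exactly once, then $(b_1,\ldots,b_k)$ assigns every $Y$-variable exactly once and is a valid Satisfier play; since $\sigma$ is winning, some clause $C^*$ is unsatisfied by the resulting assignment, and Attacker then attacks $M$ vertices of $W_{C^*}$.

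The technical heart of the argument, and the main obstacle, is to show in this last subcase that the $k$ (pairwise distinct) guards $v_{a_i, b_i}$ displaced during Phase 1 all belonged to $L_{C^*}$, so that $W_{C^*}$ retains only $(M - |L_{C^*}| + k - 1) + (|L_{C^*}|-k) = M-1$ guarded neighbors and Defender fails on the $(k+M)$-th move. Under the convention that attacking $u_\ell$, respectively defending with a guard initially on $v_{\cdot,\ell}$, assigns literal $\ell$ the value \emph{true}, a falsified $C^*$ cannot contain any $a_i$ or any $b_i$ (else that true literal would falsify its own clause), so $v_{a_i,b_i} \in L_{C^*}$ for every $i$; distinctness of the $k$ pairs follows from the first coordinates $a_i$ being pairwise distinct. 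Combining both cases gives the claimed bound of $k+M$.
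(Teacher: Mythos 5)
Your proof is correct and follows essentially the same two-phase strategy as the paper's: simulate Falsifier's winning strategy by attacking $U$, divert to flooding a variable checker $V_y$ as soon as two guards have left some $V_y^*$, and otherwise flood the clause checker of the falsified clause, with the same counts ($M-1$ available defenders against $M$ attacks in either terminal case). The only nitpick is the phrase ``that true literal would falsify its own clause,'' which should read ``satisfy''; the surrounding logic (a falsified $C^*$ contains no true literal, hence no $a_i$ or $b_i$, hence $v_{a_i,b_i}\in L_{C^*}$) is exactly the paper's argument.
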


\begin{proof}
Assume that Falsifier has a winning strategy on $(X,Y,\varphi)$. We construct a winning strategy for Attacker in $k+M$ turns on $(G_\varphi,\guards_\varphi)$.

The strategy of Attacker will consist in only attacking vertices of $U$ until he has a simple winning strategy. One of them is the following:

\begin{itemize}
    \item If at most $k$ attacks were already performed and if there exists $y$ such that $V_y^*$ contains at most $4k-2$ guards then Attacker successively attacks the vertices of $V_y$. 
\end{itemize}
Since $V_y'$ has size $M-4k+1$  and $V_y$ has size $M$, $V_y$ is defended by at most $M-4k+1+(4k-2) < M$ guards. And then Attacker has a winning strategy in at most $M$ rounds. So we can assume that, during the $k$ first steps, if Attacker chooses a vertex of $U$, Defender should move at most one vertex in each set $V_y^*$. 

During the first $k$ turns, Attacker will either play the strategy described above or attack a vertex $u$ in $\{ u_x,u_{\overline x} \}$ for some $x \in X$. By construction of the instance, Defender can only defend such an attack by moving a vertex from some vertex $v_{\ell,\ell'}$ to $u$ with $\ell\in\{x,\bar{x}\}$. 

Let us denote respectively by $u_{a_i}$ and $v_{a'_i,b_i}$ the attacked vertex at step $i$ and the vertex where the guard defending the attack was posted.
Note that because of the strategy explained above, we can assume that, for every $i,j \le k$, $b_i \ne b_j$ and $b_i \ne \overline{b_j}$ (otherwise, two vertices of some $V_y^*$ were slid to $U$ and then $V_y^*$ contains at most $4k-2$ guards).
So the sequence of the $j$ first rounds provides a partial truth assignment where $j$ variables of $X$ are given a truth value as well as $j$ variables of $Y$. 

So the strategy of Attacker during the first $k$ rounds consists either in playing the simple strategy if he can or following the strategy of Falsifier in $(X,Y,\phi)$ by playing on $u_x$ whenever Falsifier sets the variable $x$ to true in the partial truth assignment described above.

Since Falsifier wins on $(X,Y,\varphi)$, the resulting truth assignment does not satisfy some clause $C_i$ of $\varphi$. In particular, during each of the $k$ first turns, Defender moved a guard from $L_i$ to some $u_\ell$, so only $|L_i|-k$ guards remain on $L_i$. Now the $M$ vertices of $W_i$ are defended by at most $|L_i|-k+|W'_i|=M-1$ guards, hence Attacker can win in $M$ turns by successively attacking each vertex of $W_i$. 
\end{proof}

It now remains to prove the converse direction restated in the following lemma.

\begin{lemma}\label{lem:hard_converse}
If Satisfier wins \textsc{Unordered-CNF} on $\varphi$, then Defender can resist $k+M$ attacks in the mobile domination game on $(G_\varphi, \guards_\varphi)$.
\end{lemma}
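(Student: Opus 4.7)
The plan is to lift Satisfier's winning strategy $\sigma$ for \textsc{Unordered-CNF} to a Defender strategy, maintaining a partial truth assignment $\alpha$ obtained by simulating an Unordered-CNF play in Defender's head.

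On an Attacker move at $u_\ell$ whose underlying variable $x$ is still unassigned in $\alpha$, I would treat this as Falsifier setting $\ell$ to true, query $\sigma$ for Satisfier's reply $\ell' \in Y \cup \overline{Y}$, record both in $\alpha$, and slide the guard from $v_{\ell,\ell'}$ to $u_\ell$. If $x$ is already assigned---i.e.\ Attacker has already attacked $u_\ell$ or $u_{\bar\ell}$ for this $x$---Defender simply ping-pongs the guard between $u_\ell$ and $u_{\bar\ell}$, without ever touching any new $v_{\cdot,\cdot}$. On an attack in $V_y$, Defender takes the defending guard from $V_y'$, falling back to $V_y^*$ only if $V_y'$ is exhausted. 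On an attack in $W_i$, she takes it from $W_i'$, falling back to $L_i$. Attacks on vertices of the initial clique $K$ of guards that have since been vacated are defended using any other currently-guarded vertex of $K$.

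The core of the proof is to verify that Defender never becomes stuck during $k+M$ turns, which I would split into three budget analyses.

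\emph{Clause budget.} For each clause $C_i$, the pool $W_i' \cup L_i$ of $W_i$'s neighbours has initial size $M+k-1$. It only shrinks via $W_i$-attacks or via $U$-responses that happen to move some $v_{\ell,\ell'}\in L_i$. Since under the strategy a guard placed on a $W_i$ vertex is never relocated (its only neighbours lie in $K$ and every $K$-attack is handled from within $K$), the number of $W_i$-attacks is at most $|W_i|=M$. On the other hand, whenever the simulated game is complete (all $k$ variables of $X$ processed), $\sigma$ being a winning Satisfier strategy forces the resulting assignment to satisfy $C_i$, so at least one of the $k$ $U$-responses moved a guard outside $L_i$; hence at most $k-1$ of them drain $L_i$. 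When the simulation is not completed the bound $k-1$ is replaced by the (smaller) number of $U$-attacks. In either case the total drain is at most $M+k-1$, matching the pool's initial size.

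\emph{Variable budget.} For each $y$, the pool $V_y'\cup V_y^*$ of size $M+1$ absorbs at most $M$ attacks on $V_y$ (again, $V_y$-guards are never moved off) plus at most one $U$-response targeting $V_y^*$ (since $\sigma$ picks each $Y$-variable at most once). Total drain $\le M+1$.

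\emph{Clique budget.} Attacks on previously-guarded vertices of $K$ that have been vacated are handled by moving any still-guarded vertex of $K$. This is always possible because $|K|$ comfortably exceeds $k+M$, so the clique remains essentially full throughout.

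The main obstacle I anticipate is that Attacker can freely interleave attacks across regions and, in particular, can attack $U$ more than $k$ times. The key observation making everything fit is that any $U$-attack beyond the first one for a given variable is absorbed entirely by the ping-pong mechanism between $u_\ell$ and $u_{\bar\ell}$, and therefore does not cause any further drain on $\{v_{\ell,\ell'}\}$; this preserves the $\le k-1$ bound on $L_i$-drain coming from the clause budget regardless of Attacker's ordering, and it simultaneously protects the variable budget from being hit more than once per $y$.
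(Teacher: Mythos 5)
Your high-level plan (simulate Satisfier, ping-pong within each pair $\{u_\ell,u_{\bar\ell}\}$, drain $V_y'$ before $V_y^*$ and $W_i'$ before $L_i$, then run per-pool budget counts) is the same as the paper's, and your clause/variable budget counts are essentially the paper's counts. However, there are two genuine gaps.

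The main one is that your strategy is not always executable, because the checker fallback and the Satisfier simulation interact. Your $U$-rule requires the guard to still be sitting on $v_{\ell,\ell'}$ at the moment Attacker attacks $u_\ell$ and Satisfier designates $\ell'$. But Attacker can first spend $M-4k+2\le M+k$ turns attacking $V_y$ (for $y$ the variable underlying $\ell'$), exhausting $V_y'$ and forcing your fallback to pull guards from $V_y^*$ --- possibly exactly $v_{\ell,\ell'}$ --- into $V_y$, where they are no longer adjacent to $u_\ell$. Your budgets are aggregate counts (``some pool guard remains''), which is enough for checker attacks since every pool guard is adjacent to every checker vertex, but not for $U$-attacks, where only the $2k$ specific vertices $v_{\ell,\cdot}$ (and $u_{\bar\ell}$) can answer. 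If you improvise by sliding some other $v_{\ell,b}$, the correspondence between physical guard moves and the simulated assignment breaks, and with it the argument ``$C_i$ satisfied $\Rightarrow$ at most $k-1$ drains of $L_i$''. The paper's proof has an extra rule precisely for this: once a checker is \emph{engaged} (its primed side exhausted), Defender abandons the Satisfier simulation for all subsequent $U$-attacks and instead moves a guard from some $v_{\ell,\ell''}$ \emph{not adjacent} to the engaged checker. This is always well-defined because at most one checker can be engaged within $k+M$ turns (this is what the choice $M=8k^2$ buys), because $k\ge 2$ handles an engaged variable checker, and because the assumption that every clause contains a $Y$-literal handles an engaged clause checker. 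Some version of this deviation mechanism is needed; without it the proof does not go through.

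The second gap is your handling of attacks on vacated clique vertices ``using any other currently-guarded vertex of $K$''. Essentially every guarded vertex of the clique belongs to one of the pools $V_y'\cup V_y^*$ or $W_i'\cup L_i$, and you computed those budgets to be exactly tight ($M+1$ and $M+k-1$), so even a single unaccounted departure from a pool is potentially fatal; an adversarial instantiation of ``any'' (e.g.\ answering the attack on a vacated $v_{x_1,\ell'}$ with a guard taken from some $V_{y'}^*$ and then flooding $V_{y'}$) defeats the strategy within the turn bound. This is exactly the role of the otherwise-pointless vertex $s$ in the construction: the paper designates its guard as a \emph{special guard} that alone services every attack on a vacated clique vertex and never leaves the clique, so these attacks never touch any pool. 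Your write-up never uses $s$, which is the tell that this part of the gadget was missed.
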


\begin{proof}
Assume that Satisfier has a winning strategy on $\varphi$. We call the guard posted initially on $s$ the \emph{special guard}. The strategy of Defender on $(G_\varphi,\guards_\varphi)$ consists in applying the following rules:
\begin{enumerate}
    \item If Attacker attacks a vertex $v$ that was guarded, then Defender moves the special guard to $v$. Defender never moves the special guard outside $N[s]$ and then the move is possible since the set of guarded vertices form a clique\footnote{One can easily remark that this strategy is compatible with the remaining rules.}.
   \item If Attacker attacks an (initially unguarded) vertex of a  (variable or clause) checker, Defender defends, if she can, by moving a vertex initially guarded in the same checker (hence not the special guard). If she cannot, we say that Attacker \emph{engaged} this checker, and Defender defends by moving a guard from $V^*_y$ (for the variable checker $(V_y,V'_y)$) or $L_i$ (for the clause checker $(W_i,W'_i)$). \\
   Note that, by construction and since guards in checkers will only be used to defend vertices in the checker, Attacker has to attack at least $M-4k+1$ times a variable checker and at least $M-|L_i|+k-1\geqslant M-4k^2$ times a clause checker to engage the checker. Therefore, Attacker can engage at most one checker in $M+k$ turns, by our choice of $M$. Note that a checker can be engaged several times.

  \item If Attacker attacks the vertex $u_\ell$ for some literal $\ell$ and a token is already on $u_{\bar{\ell}}$, then Defender moves the guard from $u_{\bar{\ell}}$ to $u_\ell$. \\
  In the rest of the strategy, if we move a guard on a vertex of the pair $\{ u_\ell,u_{\bar{\ell}} \}$, the guard will stay on that pair of vertices (note that by 1., it cannot be the special guard).
  
\item\label{pspace-r4}  Assume now Attacker attacks the vertex $u_\ell$ for some literal $\ell$ and no token is already on $u_{\bar{\ell}}$ and no variable checker has been engaged. In this case, let $\ell'$ be the literal set to true by Satisfier when Falsifier sets $\ell$ to true (this is a valid move in the \textsc{Unordered-CNF} game by the previous rule). Observe that the guard initially on $v_{\ell,\ell'}$ did not move before, hence Defender defends by moving it.

\item\label{pspace-r5} Finally, assume that Attacker attacks the vertex $u_\ell$ for some literal $\ell$ and no token is already on $u_{\bar{\ell}}$ and a checker has been engaged.
Then Defender answers by moving a non-special guard from some vertex $v_{\ell,\ell'}$ not adjacent to the engaged checker. In particular, this is always possible if the engaged checker is a variable checker since $k\geqslant 2$, or if it is a clause checker since each clause contains a literal from $Y$. \end{enumerate}

Observe that the moves played by the second-to-last rule yield a partial truth assignment for the variables of $X\cup Y$. Moreover, since Satisfier wins, one can extend this assignment so that $\varphi$ becomes true.

Since, by construction, the vertices initially guarded and the vertices $u_\ell$ for every $\ell$ can always be defended during the $M+k$ steps, it simply remains to show that Defender can always answer attacks in an engaged checker. 

Assume that Attacker engaged the variable checker $(V_y,V'_y)$ for the $(r+1)$-th time. Note that after engaging the checker for the first time, no guard from $V_y^*$ will leave $V_y^*\cup V_y\cup V'_y$ by Rule~\ref{pspace-r5}. Moreover, before the checker is engaged, Rule~\ref{pspace-r4} ensures that at most one non-special guard from $V^*_y$ has been moved (to some $u_\ell$). Each time the checker is engaged, one non-special guard from $V^*_y$ is moved to $V_y$. Therefore, in the current situation, there are at least $4k-r-1$ non-special guards in $V^*_y$. Note that the number of unguarded vertices in $V_y$ is $|V_y|-|V_y'|-r=4k-1-r$. Since there is an unguarded vertex in $V_y$ (because Attacker played on it), we have $4k-r-1\geqslant 1$, hence there is a non-special guard on $V^*_y$ that Defender can move.

A similar argument holds when the engaged checker is a clause checker $(W_i,W'_i)$. Indeed, assume that at some point $k$ non-special guards of $L_i$ have been moved to some $u_\ell$. This means that Rule~\ref{pspace-r5} never applied and Rule~\ref{pspace-r4} was applied $k$ times. Moreover, all the literals set to true by Rule~\ref{pspace-r4} do not appear in $C_i$, hence $C_i$ (and furthermore $\varphi$) is not satisfied, a contradiction since Defender played following Satisfier's winning strategy on $(X,Y,\varphi)$. Therefore, at most $k-1$ non-special guards of $L_i$ have been moved to some $u_\ell$, so at least $|L_i|-k+1$ remain at any time. Therefore, Defender can answer attacks on $W_i$ by playing a vertex in $L_i$ every time there is no guard in $W'_i$.

\end{proof}

\section{Trees}
\label{sec:tree}

The goal of this section consists in proving the following theorem:

\begin{theorem}
    \label{thm:treepoly}
    $\kED$ can be solved in polynomial time on trees.
\end{theorem}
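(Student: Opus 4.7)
The plan is to follow the outline sketched by the authors: introduce arenas, reduce the game value $t_T(D)$ to a treedepth-like parameter on arenas, and compute this parameter in polynomial time. Root $T$ at an arbitrary vertex $r_0$, and for each vertex $v$ let $T_v$ be the subtree of descendants of $v$. I would call $T_v$ an \emph{arena} for $(T,D)$ if $v=r_0$ or the parent $p(v)$ of $v$ in $T$ satisfies $p(v) \notin D$. The motivation is that if Attacker attacks only in $T_v$, no external guard can enter $T_v$: the unique edge out of $T_v$ connects $v$ to the unguarded vertex $p(v)$, and no attack inside $T_v$ can push a guard onto $p(v)$. Consequently, the sub-game on $T_v$ can be analyzed independently of the rest of $T$.

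Next, I define a parameter $\pi(T_v, D \cap V(T_v))$ on arenas by a recursion that mirrors the classical recursion for treedepth on trees but keeps track of which vertices are guarded. The base case is a single-vertex arena, where $\pi = 1$ if the vertex is unguarded (Attacker wins immediately) and $\pi = +\infty$ otherwise. For the inductive step, split $T_v$ according to its children: each child $v_i$ has its own subtree $T_{v_i}$, and if $v_i \notin D$ then $T_{v_i}$ is itself an arena, while otherwise the guard at $v_i$ couples $T_{v_i}$ to the rest of $T_v$ via a single edge. Attacker's best first move either attacks an unguarded leaf in $T_v$ (ending the game) or attacks a vertex that forces Defender to move a specific guard, after which the residual game decomposes into sub-arenas; $\pi$ is defined to capture this decomposition, analogously to how treedepth satisfies $\td(T)=1+\min_v \max_C \td(C)$ with $C$ ranging over the components of $T-v$.

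I would then prove the correspondence $t_T(D) = \min_{T_v \text{ arena}} \pi(T_v, D \cap V(T_v))$. The easy direction ($\leq$) converts the recursive definition of $\pi$ into an explicit Attacker strategy that matches the bound: each recursive step either wins outright by attacking an unguarded leaf, or forces Defender to commit a guard along a specific edge, reducing to a strictly smaller sub-arena. The hard direction ($\geq$) is an exchange argument: starting from any Attacker strategy winning in $k$ turns, one identifies a minimal subtree containing the relevant attacks, extracts an arena from it, and shows that the projected strategy still wins in at most $k$ turns inside the arena. The key use of the tree structure is that the edge from the root of an arena to its parent is a cut edge, so guard movements on either side are coupled only through this bottleneck; attacks played outside can be shown to be either redundant or replaceable by internal ones without lengthening the game.

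Finally, $\pi$ can be computed in polynomial time by a bottom-up dynamic program on $T_v$, directly adapting the classical polynomial algorithm for treedepth on trees; together with enumerating the $O(n)$ arenas (one per vertex of $T$), this yields a polynomial-time algorithm for $\kED$ on trees. The main obstacle is the $\geq$ direction of the correspondence: arguing that an optimal Attacker strategy can always be localized to play within a single arena requires a careful exchange argument, since a priori Defender's responses outside the arena might delay the game. Making this rigorous is where most of the work lies, as one must track how guard movements on either side of the cut edge interact and show that any cross-cut activity can be eliminated without hurting Attacker.
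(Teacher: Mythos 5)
Your plan reproduces the paper's announced outline, but the two objects that carry all the technical weight are missing or defined in a way that makes the statement vacuous. First, your notion of arena ($T_v$ = all descendants of $v$, with $p(v)\notin D$) gives no leverage: taking $v=r_0$ makes $T_{r_0}=T$ an ``arena'', so the identity $t_T(D)=\min_v\pi(T_v)$ holds trivially with $\pi$ equal to the game value itself, and nothing has been reduced. The paper's arenas are quite different: they are subtrees $X$ anywhere in $T$ subject to structural constraints --- guarded and unguarded vertices of $X$ each form independent sets, every guarded vertex of $X$ has degree exactly $2$ in $X$, and no unguarded vertex of $X$ has a guarded neighbor outside $X$. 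The degree-$2$ condition forces an arena to select \emph{exactly one} child below each guarded vertex it contains, so there are exponentially many candidate arenas, and the theorem ($t_T(D)$ equals the minimum \emph{contracted treedepth} over all of them, where the contracted tree suppresses each guarded degree-$2$ vertex into an edge) is a genuine combinatorial statement. Your converse direction also does not match what is needed: the paper's proof is an induction that, for the first attack $v$, collects one arena $X_u$ for \emph{each} possible Defender response $u\in N(v)\cap D$ and glues them as $\{v\}\cup\bigcup_u(X_u\cup\{u\})$; a single ``projected'' subtree extracted from one play of the strategy cannot work, because the arena must certify a win against every Defender response.

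Second, you never define $\pi$, and the claim that it can be computed by ``directly adapting the classical treedepth DP'' hides the actual difficulty. The quantity to compute is a minimum over arenas of a treedepth, i.e., a min (over which grandchild to keep under each guard) of a min--max (treedepth); a naive bottom-up recursion in the style of $\td(T)=1+\min_v\max_C\td(C)$ does not justify making these choices locally. The paper resolves this with the ranking-list machinery of Iyer, Ratliff and Vijayan: it computes, for each rooted subtree, the arena with lexicographically smallest \emph{contracted ranking list}, and the correctness of the greedy choice rests on a monotonicity lemma (replacing a rooted subtree by one with lex-smaller ranking list never increases the ranking list of the whole, because both $\oplus$ and the closure operation are monotone for $\leq_{lex}$). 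Without this, or an equivalent exchange argument for the local choices, the algorithmic half of your proof does not go through. There is also a smaller issue: arenas whose topmost vertex (in your fixed rooting) is guarded are not contained in any $T_v$ from your list, so even as an enumeration scheme your $O(n)$ candidates are incomplete; the paper instead re-roots $T$ at every unguarded vertex.
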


In a first part, given $G$ and $\guards$, we will introduce the concept of arenas, and we will show that, when $G$ is a tree, $t_G(D)$ is related to the maximum possible treedepth of an arena of $G$. We will then prove that this parameter is exactly the parameter of interest to decide $\kED$ on trees. We then recall an algorithm from Iyer et al.~\cite{iyer88} to compute the treedepth of a tree, and adapt it in a last part to compute $t_G(D)$ in polynomial time.

\subsection{The right parameter}

The main concepts introduced in this section are arenas. Roughly speaking, an arena $X$ of a tree $T$ and a set of guards $\guards$ is a minimal subtree of $T$ where Attacker wins and where guards outside $X$ cannot help the defense of $X$. 

\begin{definition}\label{def-arena}
Let $T$ be a tree and $\guards \subseteq V(T)$ be a set of guards.
An \emph{arena} of $(T, \guards)$ is a subtree $X$ of $T$ such that:
\begin{enumerate}
\item\label{a_p1} $V(X)\cap \guards$ and $V(X)\setminus \guards$ are both independent;
\item\label{a_p2} the guarded vertices in $X$ have degree exactly 2 in $X$ (in particular no leaf is guarded);
\item\label{a_p3} there is no vertex in $V(X) \setminus \guards$ with a neighbor in $\guards \setminus V(X)$.
\end{enumerate}
\end{definition}

For the ease of notation, we identify a subtree $X$ with its set of vertices, so that $X\cap \guards$ denotes the set of guarded vertices in $X$. Equivalently (even if we will not prove it since we will not use it), an arena is a set satisfying the following:
\begin{itemize}
    \item $|\guards \cap X| < \alpha(T[X])$;
    \item there is no unguarded vertex in $X$ with a guarded neighbor outside $X$ (that is, there is no edge between $X \setminus \guards$ and $\guards \setminus X$);
    \item $X$ is minimal for these properties.
\end{itemize}

\begin{figure}[!ht]
\begin{center}
\scalebox{0.5}{
	\begin{tikzpicture}[scale=0.85,vertex/.style={circle,draw,thick,fill=white,minimum size=15pt}]

\begin{scope}[shift={(0,0)},xscale=.8]
    \draw [pattern=north west lines, pattern color=blue!30]
(-1,2) .. controls +(0,1) and +(0,1) .. (3,2)
.. controls +(0,0) and +(0,0) .. (9,2)
.. controls +(0,1) and +(-1,0) .. (12,4)
.. controls +(0,0) and +(0,0) .. (13.5,4)
.. controls +(0.5,0) and +(0.5,0) .. (13.4,2)
.. controls +(-3,0) and +(0,0) .. (8,-0.5)
.. controls +(0,0) and +(0,0) .. (8,-3.5)
.. controls +(0,-0.5) and +(0,-0.5) .. (6,-3.5)
.. controls +(0,0) and +(0,0) .. (6,0)
.. controls +(0,0) and +(0,0) .. (3,0)
.. controls +(0,-1) and +(0,-1) .. (-1,0)
.. controls +(0,0) and +(0,0) .. (-1,2);

    \node[vertex] (v0) at (0,0) {};
    \node[vertex] (v1) at (0,2) {};
    \node[vertex,fill=gray!30] (v2) at (2,0) {};
    \node[vertex,fill=gray!30] (v3) at (2,2) {};
    \node[vertex] (v4) at (3,1) {};
    \node[vertex,fill=gray!30] (v5) at (5,1) {};
    \node[vertex,fill=gray!30] (v6) at (5,3) {};
    \node[vertex] (v7) at (7,1) {};
    \node[vertex,fill=gray!30] (v8) at (7,-1) {};
    \node[vertex] (v9) at (7,-3) {};
    \node[vertex,fill=gray!30] (v10) at (9,-2) {};
    \node[vertex,fill=gray!30] (v11) at (9,1) {};
    \node[vertex] (v12) at (10,0) {};
    \node[vertex] (v13) at (10,2) {};
    \node[vertex,fill=gray!30] (v14) at (11,3) {};
    \node[vertex] (v15) at (13,3) {};
    \node[vertex,fill=gray!30] (v16) at (12,0) {};
    \node[vertex] (v17) at (14,0) {};
    \draw[thick,color=red] (7,1) circle(0.4);
    
    \path
    (v0) edge (v2)
    (v1) edge (v3) 
    (v2) edge (v4)
    (v3) edge (v4)
    (v4) edge (v5)
    (v5) edge (v6)
    (v5) edge (v7)
    (v7) edge (v8)
    (v8) edge (v9)
    (v8) edge (v10)
    (v7) edge (v11)
    (v11) edge (v12)
    (v11) edge (v13)
    (v13) edge (v14)
    (v14) edge (v15)
    (v12) edge (v16)
    (v16) edge (v17)
    ;
    
\end{scope}

\begin{scope}[shift={(14,0)},xscale=.8]
    \draw [pattern=north west lines, pattern color=blue!30]
(-1,2) .. controls +(0,1) and +(0,1) .. (3,2)
.. controls +(1,0) and +(1,0) .. (3,0)
.. controls +(0,-1) and +(0,-1) .. (-1,0)
.. controls +(0,0) and +(0,0) .. (-1,2);

    \node[vertex] (v0) at (0,0) {};
    \node[vertex] (v1) at (0,2) {};
    \node[vertex,fill=gray!30] (v2) at (2,0) {};
    \node[vertex,fill=gray!30] (v3) at (2,2) {};
    \node[vertex] (v4) at (3,1) {};
    \node[vertex] (v5) at (5,1) {};
    \node[vertex,fill=gray!30] (v6) at (5,3) {};
    \node[vertex,,fill=gray!30] (v7) at (7,1) {};
    \node[vertex,fill=gray!30] (v8) at (7,-1) {};
    \node[vertex] (v9) at (7,-3) {};
    \node[vertex,fill=gray!30] (v10) at (9,-2) {};
    \node[vertex,fill=gray!30] (v11) at (9,1) {};
    \node[vertex] (v12) at (10,0) {};
    \node[vertex] (v13) at (10,2) {};
    \node[vertex,fill=gray!30] (v14) at (11,3) {};
    \node[vertex] (v15) at (13,3) {};
    \node[vertex,fill=gray!30] (v16) at (12,0) {};
    \node[vertex] (v17) at (14,0) {};
    \path
    (v0) edge (v2)
    (v1) edge (v3) 
    (v2) edge (v4)
    (v3) edge (v4)
    (v4) edge (v5)
    (v5) edge (v6)
    (v5) edge (v7)
    (v7) edge (v8)
    (v8) edge (v9)
    (v8) edge (v10)
    (v7) edge (v11)
    (v11) edge (v12)
    (v11) edge (v13)
    (v13) edge (v14)
    (v14) edge (v15)
    (v12) edge (v16)
    (v16) edge (v17)
    ;
\end{scope}

	\end{tikzpicture}}
\caption{On the left, a tree with an arena of contracted treedepth 3.
Each grayed vertex is guarded, and the attack is circled.
On the right, the same tree after the Defender has moved the guard from left to the attack. A smaller arena of contracted treedepth 2 is created (its contracted tree is a path on three vertices).}
\label{fig:arena}
\end{center}
\end{figure}
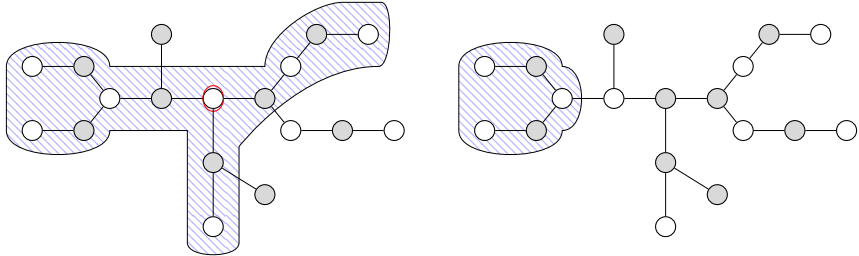

\begin{definition}
    Let $T$ be a tree, $D\subset V(T)$ and $X$ an arena of $(T,D)$. The \emph{contracted tree} of $X$ is obtained from $X$ by replacing each vertex from $V(X)\cap D$ by an edge between its neighbors. The \emph{contracted treedepth} of $X$, denoted by $\ctd(X)$, is the treedepth of this contracted tree.
\end{definition}

In the rest of this part, we will prove the following:
\begin{theorem}
\label{thm:win=min}
Given a tree $T$ and a set of guards $\guards$, $t_T(\guards)$ is the minimum value of $\ctd(X)$ over all arenas $X$ of $(T,\guards)$.
\end{theorem}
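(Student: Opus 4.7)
The proof splits into two inequalities. For the easier direction, $t_T(\guards) \leq \min_X \ctd(X)$, I would show by induction on $k = \ctd(X)$ that Attacker wins in at most $k$ turns on any arena $X$. Take an optimal td-decomposition $D$ of the contracted tree $X'$ rooted at some vertex $v$ and have Attacker attack $v$. Property~\ref{a_p3} forces Defender to move a guard $g \in V(X)$, and Property~\ref{a_p2} guarantees that $g$ has a unique second $X$-neighbor $c$. Let $X^*$ be the subtree of $X$ formed by $c$ and its descendants when $X$ is rooted at $v$. One checks that $X^*$ is an arena of $(T, (\guards \setminus \{g\}) \cup \{v\})$: Properties~\ref{a_p1} and~\ref{a_p2} are inherited, and Property~\ref{a_p3} holds because the only new guard $v$ has no $T$-neighbor in $V(X^*)$, the unique $v$-to-$c$ path in $X$ going through $g$ in the tree $T$. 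The contracted tree of $X^*$ is the subtree of $X'$ rooted at $c$, a component of $X' - v$, so it is contained in a subtree of $D$ of height at most $k - 1$; thus $\ctd(X^*) \leq k - 1$ and induction concludes.

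For the other direction, $t_T(\guards) \geq \min_X \ctd(X)$, I would induct on $k = t_T(\guards)$ to construct an arena $X$ of $(T, \guards)$ with $\ctd(X) \leq k$. The base $k = 1$ is direct: Attacker's attack $v_0$ has no guarded $T$-neighbor, so $X = \{v_0\}$ is an arena with $\ctd(X) = 1$. For $k \geq 2$, fix an optimal first attack $v_0$ with guarded $T$-neighbors $u_1, \ldots, u_m$ ($m \geq 1$). For each $i$, setting $\guards_i = (\guards \setminus \{u_i\}) \cup \{v_0\}$, we have $t_T(\guards_i) \leq k - 1$, so the induction hypothesis yields an arena $Y_i$ of $(T, \guards_i)$ with $\ctd(Y_i) \leq k - 1$. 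A preliminary observation: if $u_i \in V(Y_i)$, then $v_0 \in V(Y_i)$, because Property~\ref{a_p3} applied to $u_i$ forces its $T$-neighbor $v_0 \in \guards_i$ to lie in $V(Y_i)$.

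The plan is then to stitch the $Y_i$'s around $v_0$ via $u_i$ into the shape $X = \{v_0\} \cup \bigcup_i (\{u_i\} \cup A_i)$, where each $A_i$ is a carefully chosen subtree derived from $Y_i$ and each $u_i$ is a degree-$2$ guarded vertex in $X$ linking $v_0$ to $A_i$. If the construction yields $\ctd(A_i) \leq k - 1$ and each $A_i$'s contracted tree attaches at $v_0$ as a subtree of $X'$, then $\ctd(X) \leq 1 + \max_i \ctd(A_i) \leq k$ as desired.

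The main obstacle is defining $A_i$ so that the label swap between $\guards_i$ and $\guards$ (where $u_i$ turns guarded and $v_0$ turns unguarded) preserves all three arena axioms in $X$, via case analysis. When $V(Y_i) \cap \{u_i, v_0\} = \emptyset$, there must exist an unguarded $w \in V(Y_i)$ with $w - u_i \in E(T)$ (otherwise $Y_i$ itself would be an arena of $(T, \guards)$ by a direct check of Property~\ref{a_p3}, contradicting $t_T(\guards) = k$ via the first direction); one takes $A_i = V(Y_i)$ and uses $w$ as $u_i$'s second $X$-neighbor (uniqueness of $w$ follows from $T$ being a tree). When $v_0 \in V(Y_i)$ but $u_i \notin V(Y_i)$, one restricts $Y_i$ to one of the two subtrees at $v_0$ and substitutes $u_i$ in $v_0$'s connecting role. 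When $u_i \in V(Y_i)$ (hence $v_0 \in V(Y_i)$), one takes the $Y_i \setminus \{v_0\}$-subtree at $u_i$ and prunes carefully to handle Property~\ref{a_p1}, whose risk of failure comes from $u_i$'s other $Y_i$-neighbors being guards in $\guards_i$ that would become adjacent to the newly-guarded $u_i$ under $\guards$. In each sub-case the contracted tree of $A_i$ is (isomorphic to) an induced subtree of $Y_i$'s contracted tree, so $\ctd(A_i) \leq \ctd(Y_i) \leq k - 1$ by monotonicity of treedepth under induced subgraphs, completing the induction.
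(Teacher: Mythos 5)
Your first inequality ($t_T(\guards)\leq\min_X\ctd(X)$) is correct and is essentially the paper's proof of Lemma~\ref{tree-ctd-turns}: attack the root of an optimal td-decomposition of the contracted tree and pass to the component hanging below the moved guard.

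The converse direction, however, has a genuine gap in the two sub-cases where $V(Y_i)\cap\{u_i,v_0\}\neq\emptyset$, and the constructions you sketch there cannot be repaired as stated. If $v_0\in V(Y_i)$ and $u_i\notin V(Y_i)$, each of the two subtrees of $Y_i$ hanging at $v_0$ is contained in a component of $T-v_0$ that does not contain $u_i$ (its attachment point is a neighbor $w\neq u_i$ of $v_0$, and $w$ cannot be adjacent to $u_i$ since $T$ has no triangle); hence $u_i$ has \emph{no} $T$-neighbor in such a subtree, and it cannot serve as the degree-$2$ connector between $v_0$ and $A_i$ -- the set $\{u_i\}\cup A_i$ is not even connected in $T$. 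If $u_i\in V(Y_i)$, then by Property~\ref{a_p1} for $(T,\guards_i)$ all $Y_i$-neighbors of $u_i$ other than $v_0$ are guarded, so after the relabeling $u_i$ is a guard all of whose candidate second neighbors are guards: keeping one violates Property~\ref{a_p1}, pruning them all leaves $u_i$ with degree $1$ and violates Property~\ref{a_p2}. No ``careful pruning'' resolves this tension.

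The missing idea -- and what the paper's proof of Lemma~\ref{tree-turns-ctd} does -- is that these sub-cases simply cannot occur, by the same minimality argument you already use to produce the unguarded neighbor $w$ in your first sub-case. If $v_0\in V(Y_i)$, let $w$ be a $Y_i$-neighbor of $v_0$ other than $u_i$ (it exists by Property~\ref{a_p2}); the component of $Y_i$ minus the edge $v_0w$ containing $w$ contains neither $u_i$ nor $v_0$ (and no vertex of it is $T$-adjacent to $u_i$, since any path from $u_i$ into it passes through $v_0$ and then $w$), so it is already an arena of $(T,\guards)$ of contracted treedepth at most $k-1$, contradicting $t_T(\guards)=k$ via the first direction. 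Then $u_i\notin V(Y_i)$ follows from your own preliminary observation (contrapositive). Once both vacuous cases are eliminated, only your first sub-case remains, $A_i=Y_i$ throughout, the $Y_i$ lie in pairwise distinct components of $T-v_0$ (so the union is a subtree -- a disjointness point your write-up also leaves unaddressed), and the gluing goes through as you describe.
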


In particular, Theorem~\ref{thm:win=min} ensures that $t_T(\guards)=+\infty$ if and only if $(T,\guards)$ has no arena. Let $T$ be a tree and  $\guards$ be a set of guards. We split the proof of Theorem~\ref{thm:win=min} into two independent lemmas.

\begin{lemma}\label{tree-ctd-turns}
If $(T, \guards)$ has an arena $X$ of contracted treedepth $t$, then 
$t_T(\guards) \leq t$.
\end{lemma}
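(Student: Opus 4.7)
My plan is to induct on $t$. The base case $t=1$ is immediate: the contracted tree of $X$ consists of a single vertex, and since every guarded vertex of $X$ needs two distinct unguarded neighbors in $X$ (by Properties~\ref{a_p1} and~\ref{a_p2}), this forces $X$ to be a single unguarded vertex $v$; by Property~\ref{a_p3}, $v$ has no guarded neighbor in $T$, so Attacker wins by attacking $v$.

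For the inductive step, I will fix a td-decomposition of the contracted tree $T'$ of $X$ with height $t$ and root $r \in V(T')$, and have Attacker attack $r$ first. Since $r$ is unguarded in $X$ and, by Property~\ref{a_p3}, has no guarded neighbor outside $V(X)$, Defender must respond by moving a guard from $V(X) \cap \guards$ adjacent to $r$. The only candidates are the degree-$2$ guarded vertices of $X$ incident to $r$, which are in bijection with the neighbors of $r$ in $T'$; I write Defender's move as sliding $v_{r,r'}$ to $r$, where $r'$ is the corresponding neighbor of $r$ in $T'$, and set $\guards' = (\guards \setminus \{v_{r,r'}\}) \cup \{r\}$.

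Let $C$ be the component of $T'-r$ containing $r'$, and let $Y$ be the subtree of $X$ consisting of $C$ together with the guarded vertices of $X$ lying between pairs of vertices of $C$, so that the contracted tree of $Y$ is exactly $C$. I will verify that $Y$ is an arena of $(T,\guards')$: Properties~\ref{a_p1} and~\ref{a_p2} transfer from $X$ unchanged since neither $r$ nor $v_{r,r'}$ lies in $V(Y)$, so no guarded vertex of $Y$ has its status or $Y$-neighborhood modified. Moreover, $\ctd(Y) = \td(C) \leq t-1$ by the standard treedepth fact that every component of $G$ minus the root of a td-decomposition has treedepth at most height$-1$, combined with monotonicity of treedepth under subgraphs. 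Applying the induction hypothesis to $Y$ in $(T,\guards')$ then gives Attacker a strategy winning in at most $t-1$ further turns, for a total of $t$ turns.

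The main delicate step will be verifying Property~\ref{a_p3} for $Y$ in the new state. The point is that the only unguarded vertex of $Y$ which had a guarded $X$-neighbor outside $Y$ under $\guards$ is $r'$, whose sole outside guarded $X$-neighbor is precisely $v_{r,r'}$---which is no longer guarded in $\guards'$. All other vertices $u \in C$ have their $T'$-neighbors inside $C$, so their guarded $X$-neighbors lie between two $C$-vertices and are thus in $V(Y)$; and by Property~\ref{a_p3} for $X$, no vertex of $Y$ has a guarded neighbor outside $V(X)$ at all. The choice of attacking the td-decomposition root $r$ is precisely what makes this work: whichever guard Defender decides to slide, that guard is exactly the one blocking the corresponding subtree $Y$ from already being an arena in the original state.
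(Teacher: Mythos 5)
Your proof is correct and follows essentially the same route as the paper's: induct on $t$, attack the root of a height-$t$ td-decomposition of the contracted tree, and show that the component on the far side of whichever guard Defender slides is an arena of the new position with contracted treedepth at most $t-1$. Your verification of Property~\ref{a_p3} for the new arena is in fact somewhat more explicit than the paper's.
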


\begin{proof}
We proceed by induction on $t$. If $t=1$, then, since in the construction of the contracted tree, we only remove vertices of degree $2$, the contracted tree should initially contain only one vertex $v$. The vertex $v$ should be unguarded by (\ref{a_p2}) and has no guarded neighbor in $T$ by (\ref{a_p3}). Hence, Attacker wins in one turn by attacking $v$.

Assume that $t>1$, and take a td-decomposition $T^*$ of $T[X]$ of depth $t$. The strategy of the Attacker is as follows: he attacks the root $v$ of $T^*$. Denote by $u$ the vertex of $\guards$ moved by Defender (if Defender cannot defend the conclusion follows). Note that $u \in \guards$ and then $u$ has degree $2$ in $T[X]$. Let $w$ be the other neighbor of $u$ in $X$.

Consider $X'$ the component of $T[X] - uw$ containing $w$. Observe that $X'$ is an arena of $(T, \guards \cup \{ v \} \setminus \{ u \})$. Indeed, $X'$ is a subset of $X$ which contains neither $u$ nor $v$ so (\ref{a_p1}) holds. Since $w$ is at even distance from the root, (\ref{a_p2}) holds. Finally, by (\ref{a_p3}) and the fact that $u$ has been moved on $v$ ensures that (\ref{a_p3}) holds.
Moreover, the contracted tree of $X'$ is a subtree of $T^*\setminus\{v\}$, therefore it has depth at most $t-1$ and $\ctd(X')\leqslant t-1$. By induction, Attacker wins on $(T, \guards \cup\{v\}\setminus\{u\})$ in at most $t-1$ turns and thus, wins on $(T, \guards)$ in at most $t$ turns.
\end{proof}

In particular, Attacker wins the eternal domination game if and only if there is an arena. 

\begin{lemma}\label{tree-turns-ctd}
If $t_T(\guards) = t < +\infty$, then there exists an arena of contracted  treedepth at most $t$.
\end{lemma}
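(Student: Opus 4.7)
The plan is to induct on $t$, proving the slightly stronger statement: if $t_T(\guards)\le t<+\infty$, then there is an arena of $(T,\guards)$ with contracted treedepth at most $t$. The base case $t=1$ is immediate: $t_T(\guards)\le 1$ forces some unguarded vertex $v$ to have no guarded neighbor, and $\{v\}$ is then an arena of contracted treedepth $1$.

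For the inductive step with $t\ge 2$, pick an optimal first attack $v_0$ of Attacker, which must have at least one guarded neighbor (otherwise $t_T(\guards)=1$). Let $u_1,\dots,u_k$ be its guarded neighbors and $\guards_i=\guards\cup\{v_0\}\setminus\{u_i\}$; since $t_T(\guards_i)\le t-1$, the induction hypothesis yields an arena $Y_i$ of $(T,\guards_i)$ with $\ctd(Y_i)\le t-1$. Root $T$ at $v_0$, so that $T-v_0$ splits into subtrees $T_n$, one per neighbor $n$ of $v_0$, and try to glue the $Y_i$'s together through the $u_i$'s into the set $X=\{v_0\}\cup\bigcup_i(\{u_i\}\cup Y_i)$.

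The key feature of this construction is the treedepth bound. The contracted tree of $X$ is a disjoint union of the contracted trees of the $Y_i$'s together with the new vertex $v_0$ and, for each $i$, a single edge from $v_0$ to the vertex $c_i$ of $Y_i$ adjacent to $u_i$ in $T$ (this edge arises from contracting $u_i$, whose only two $X$-neighbors will be $v_0$ and $c_i$). Removing $v_0$ from the contracted tree of $X$ therefore decomposes it exactly into the contracted trees of the $Y_i$'s, so the recursive formula $\td(G)\le 1+\max_C\td(C)$ (over components $C$ of $G$ minus any vertex) immediately gives $\ctd(X)\le 1+(t-1)=t$, with no need to worry about rooting the td-decomposition at a particular vertex of each $Y_i$.

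The main obstacle is that an arbitrary $Y_i$ supplied by induction need not sit in a position compatible with this gluing. I handle this with a case analysis on the location of $Y_i$, showing that in all ``misaligned'' cases one can already extract from $Y_i$ an arena of $(T,\guards)$ of contracted treedepth at most $t-1$, finishing the proof without combining. If $v_0\in Y_i$, then $v_0$ has degree $2$ in $Y_i$ with both $Y_i$-neighbors unguarded in $\guards_i$; these cannot both equal $u_i$, so at least one is an unguarded neighbor $w$ of $v_0$ in $\guards$, and the component of $Y_i-v_0$ containing $w$ is already an arena of $(T,\guards)$ with contracted treedepth at most $t-1$. If $Y_i\subseteq T_n$ with $n\ne u_i$, or if $Y_i$ lies strictly below some child of $u_i$, or if the vertex of $Y_i$ closest to $u_i$ inside $T_{u_i}$ is guarded, a direct verification of property~(\ref{a_p3}) shows that $Y_i$ itself is already an arena of $(T,\guards)$. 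The only remaining case is exactly the ``aligned'' one, $Y_i\subseteq T_c$ with $c$ an unguarded child of $u_i$ belonging to $Y_i$; here the gluing construction applies, and properties~(\ref{a_p1})--(\ref{a_p3}) for $X$ are straightforward to check, since each $u_i$ acquires exactly the two unguarded $X$-neighbors $v_0$ and $c_i$, and property~(\ref{a_p3}) for $c_i$ in $(T,\guards)$ is rescued precisely because the newly guarded $u_i$ is itself placed into $X$.
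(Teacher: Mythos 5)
Your proof is correct and follows essentially the same route as the paper's: attack the first vertex of an optimal strategy, obtain by induction an arena for each possible Defender response, extract from any ``misaligned'' $Y_i$ a smaller arena of $(T,\guards)$ directly, and otherwise glue all the $Y_i$'s to $v_0$ through the $u_i$'s, bounding the contracted treedepth via the recursive formula for treedepth. The only (cosmetic) difference is that you strengthen the induction hypothesis to ``$t_T(\guards)\le t$'' so that the misaligned cases conclude immediately, whereas the paper rules them out by contradiction with $t_T(\guards)=t$ using Lemma~\ref{tree-ctd-turns}; your more granular case analysis of where $Y_i$ can sit relative to $u_i$ is a correct expansion of the paper's terser ``otherwise it is an arena in $(T,\guards)$''.
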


\begin{proof}[Proof of Lemma~\ref{tree-turns-ctd}]
We proceed again by induction. If $t=1$, then there exists an unguarded vertex $v$ without neighbor in $\guards$. In particular, it is an arena of contracted treedepth $1$.

Assume that $t>1$, and let $v$ be the first attack of Attacker in a shortest winning strategy. Let $u$ be a guarded neighbor of $v$. Assume that Defender moves the guard of $u$ to answer the attack on $v$. By induction, after the move, Attacker wins in $t-1$ turns at most, hence there exists an arena $X_u$ of contracted treedepth at most $t-1$ in $(T,\guards')$ (where $\guards'=\guards\cup\{v\}\setminus\{u\}$). 

Assume that $v\in X_u$. By definition $v$ has degree 2 in $X_u$, hence it has a neighbor $w\neq u$. The subtree containing $w$ in $X_u\setminus\{vw\}$ is an arena in $(T,\guards)$ of contracted treedepth at most $t-1$, contradicting $t_T(\guards)=t$ by Lemma~\ref{tree-ctd-turns}. Therefore, $v\notin X_u$ and by definition of arenas, $u\notin X_u$ either. 

Note that $X_u$ must contain an unguarded neighbor of $u$, otherwise it is an arena in $(T,\guards)$, again a contradiction with $t_T(\guards)=t$.

Now consider the set $X=\{v\}\cup \bigcup_{u\in N(v)\cap \guards} (X_u\cup\{u\})$. Observe that it is an arena of contracted treedepth at most $t$ (since removing $v$ and its guarded neighbors from $X$ yields components of contracted treedepth at most $t-1$).
\end{proof}

This concludes the proof of Theorem~\ref{thm:win=min}. Note in particular that, since the treedepth of a tree is logarithmic in its number of vertices~\cite{nevsetvril2012sparsity}, we get the following corollary.
\begin{corollary}
If $T$ is an $n$-vertex tree and $D$ a set of guards, then $t_T(D)$ is either $\infty$ or $O(\log n)$.
\end{corollary}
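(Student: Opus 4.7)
The plan is to derive this corollary directly from Theorem~\ref{thm:win=min}, which characterizes $t_T(D)$ as the minimum contracted treedepth of an arena of $(T,D)$ (with the convention that the minimum over the empty set is $+\infty$). In the case when $(T,D)$ admits no arena, Theorem~\ref{thm:win=min} immediately gives $t_T(D) = +\infty$, so the statement holds. It therefore suffices to handle the case when at least one arena exists, and to produce one whose contracted treedepth is logarithmic.

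First, I would pick an arbitrary arena $X$ of $(T,D)$ and look at its contracted tree $T_X$. By construction, $T_X$ is obtained from the tree $T[X]$ by suppressing the guarded vertices (each of which has degree exactly $2$ in $X$ by property~(\ref{a_p2}) of Definition~\ref{def-arena}), so $T_X$ is a tree with $|V(T_X)| \le |V(X)| \le n$ vertices. The main ingredient is then the classical (folklore) fact that any tree on $m$ vertices has treedepth at most $\lceil \log_2(m+1) \rceil$: one takes a centroid, recurses on each component after its deletion, and uses that each subproblem has size at most $m/2$. This is exactly the bound cited in the book of Ne\v{s}et\v{r}il and Ossona de Mendez~\cite{nevsetvril2012sparsity} that is already referenced here.

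Applying this bound to $T_X$ yields $\ctd(X) = \td(T_X) = O(\log n)$. Combining with Theorem~\ref{thm:win=min}, we obtain
\[
t_T(D) \;\le\; \ctd(X) \;=\; O(\log n),
\]
which finishes the proof. There is no real obstacle in this argument: once Theorem~\ref{thm:win=min} is in hand, it is a one-line consequence of the logarithmic upper bound for the treedepth of a tree. The only point worth double-checking is that the contracted tree of an arena really is a tree (rather than a multigraph), which follows because guarded vertices have degree exactly $2$ in $X$ and guarded/unguarded vertices alternate along edges of $X$ by property~(\ref{a_p1}), so contracting each guarded vertex merges two distinct unguarded neighbours into a single edge.
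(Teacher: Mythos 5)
Your proposal is correct and follows exactly the paper's (very brief) argument: combine Theorem~\ref{thm:win=min} with the logarithmic bound on the treedepth of a tree, observing that the contracted tree of any arena is a tree on at most $n$ vertices. The extra care you take (handling the no-arena case and checking that contraction yields a genuine tree) is sound but does not change the route.
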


It remains to show that we can compute the minimum contracted treedepth among all arenas in polynomial time. This is the goal of the rest of this section.

\subsection{Ranking lists}

Our algorithm is built upon the algorithm for computing the treedepth of a tree by Iyer et al.~\cite{iyer88}. We start by presenting (a rephrased version of) their algorithm, and stating some of their results about its correctness.

\begin{definition}
Given two non-increasing lists of integers $L_1$ and $L_2$, we define the merged list $L_1 \oplus L_2$ as the concatenation of $L_1$ and $L_2$ sorted by decreasing order.
\end{definition}

\begin{definition}\cite{iyer88}
Let $T$ be a rooted tree. The \emph{ranking list} of $T$  is
\[\rl(T)=\begin{cases} \varnothing &\text{ if $T$ is empty}\\
(\td(T))\oplus \rl(T-T') &\text{ otherwise}
\end{cases}\]
where $T'$ is the unique inclusion-wise minimal rooted subtree of $T$ with $\td(T')=\td(T)$.
\end{definition}

The uniqueness of $T'$ is guaranteed by the following lemma, since two rooted subtrees are either disjoint or contained one in the other.
\begin{lemma}\label{treedepth-disjoint-sets}\cite{iyer88}
Let $T$ be a tree and $V_1, V_2$ be two disjoint subsets of $V(T)$.
If $\td(T[V_1])=\td(T[V_2])=d$, then $\td(T) > d$.
\end{lemma}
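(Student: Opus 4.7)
The plan is a clean argument by contraposition: assume $\td(T) \le d$ and pinpoint a single vertex that would be forced to belong to both $V_1$ and $V_2$, contradicting $V_1 \cap V_2 = \varnothing$. Since $T$ is a connected tree, any td-decomposition $F$ of $T$ of height at most $d$ is a rooted tree (not just a rooted forest), and I would denote its root by $r$.

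The key claim I would try to establish is: for every $S \subseteq V(T)$ with $\td(T[S]) = d$, we must have $r \in S$. To prove it, I would consider the restriction $F|_S$ obtained by keeping only the vertices of $S$ equipped with the ancestor--descendant order they inherit from $F$. Every edge of $T[S]$ is an edge of $T$, so its endpoints are $F$-comparable, and hence still comparable in $F|_S$; consequently $F|_S$ is itself a td-decomposition of $T[S]$ of height at most $d$. Because $\td(T[S]) = d$, this height is exactly $d$, so $F|_S$ contains a root-to-leaf chain $s_1, s_2,\ldots, s_d$, with $s_1$ having no $F$-ancestor inside $S$ and each $s_{i+1}$ being a proper $F$-descendant of $s_i$. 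Writing $\mathrm{depth}_F(\cdot)$ for the depth in $F$ (with $\mathrm{depth}_F(r)=1$), the strict chain of $F$-descendants forces
\[\mathrm{depth}_F(s_d) \ge \mathrm{depth}_F(s_1) + d-1,\]
while $\mathrm{depth}_F(s_d) \le d$ gives $\mathrm{depth}_F(s_1)\le 1$. Therefore $s_1 = r$, and in particular $r \in S$.

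To conclude, I would apply the claim separately to $S = V_1$ and $S = V_2$, both of treedepth $d$, obtaining $r \in V_1 \cap V_2$ and contradicting disjointness. Hence no such $F$ exists and $\td(T) > d$, as desired. The only subtle point I anticipate is justifying carefully that $F|_S$ is a genuine td-decomposition of $T[S]$ whose height cannot drop below $\td(T[S])$; this is essentially bookkeeping from the definitions, but it is what feeds the depth-counting inequality and pins $s_1$ down to be exactly $r$.
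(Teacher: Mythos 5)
The paper does not include a proof of this lemma---it is imported verbatim from Iyer, Ratliff and Vijayan \cite{iyer88}---so there is no in-paper argument to compare yours against; I can only assess your proof on its own terms, and it is correct. The restriction $F|_S$ is a legitimate td-decomposition of $T[S]$ because adjacency in $T[S]$ implies $F$-comparability, which survives restriction; its height is at most that of $F$ and at least $\td(T[S])$, hence equal to $d$, and the depth count along the resulting chain ($\mathrm{depth}_F(s_d)\ge\mathrm{depth}_F(s_1)+d-1$ together with $\mathrm{depth}_F(s_d)\le d$) pins $s_1$ to the root $r$ of $F$; applying this to $V_1$ and to $V_2$ places $r$ in both sets, contradicting disjointness. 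Two details are worth making explicit. First, the argument needs $d\ge 1$; for $d=0$ both sets are empty and the claim is trivial since every nonempty tree has treedepth at least $1$. Second, $T[V_i]$ need not be connected, so $F|_{V_i}$ may be a rooted forest rather than a single rooted tree; since the paper only defines td-decompositions for connected graphs, you should either invoke the standard forest convention for treedepth of disconnected graphs or, more simply, run the chain argument inside a connected component $C$ of $T[V_i]$ achieving treedepth $d$, for which $F|_C$ is a single rooted tree. Neither point is a gap; indeed, in the paper the lemma is only ever applied to vertex sets of rooted subtrees, which are connected, so the connected case already covers its use.
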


The algorithm from Iyer et al. actually computes the ranking list of a given rooted tree $T$ in polynomial time, and then outputs its first element (namely, $\td(T)$). More precisely, they give a procedure RankRoot that computes the ranking list of a tree $T$ from the ranking lists of the subtrees rooted at children of its root. We rephrase their algorithm in a way more suited to our purposes, using two operations: the merge $\oplus$ of two sorted lists, and the closure of a list.

\begin{definition}
    Given a non-increasing list $L$, we define its closure $\cl{L}$ as follows. Let $L_s$ the longest suffix of $L$ which is either empty or satisfies $L_s \geq_{lex} (k, k-1, \ldots, 1)$ where $k$ is the first element of $L_s$. Then, $\cl{L}$ is build from $L$ by replacing $L_s$ by the 1-element list $(k+1)$ (or by $(1)$ if $L_s$ is empty).
\end{definition}

\begin{example}
$\cl(7, 6, 4, 3, 2, 2)=(7,6,5)$ since the largest suffix we can find is $(4, 3, 2, 2)$. Moreover, $\cl{(4, 3, 2)} = (4, 3, 2, 1)$.
\end{example}

The following lemma states the recurrence relation used in~\cite{iyer88} to compute the ranking list of a given tree. 

\begin{lemma}\label{ranking-list-rule}\cite{iyer88}
Let $T$ be a tree rooted in $r$ and $T_1, \ldots, T_k$ the subtrees of $T$ rooted at the children of $r$. Then $\rl(T)=\cl(\rl(T_1)\oplus \cdots \oplus \rl(T_m))$.
\end{lemma}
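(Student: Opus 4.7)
The plan is to prove this by strong induction on $|V(T)|$, combining the inductive hypothesis on each child subtree $T_i$ with a careful analysis of how adding the root $r$ affects the minimum rooted subtree of maximum treedepth. A first step is to give $L := \rl(T_1) \oplus \cdots \oplus \rl(T_m)$ a combinatorial interpretation: by induction, each $\rl(T_i)$ encodes a partition of $V(T_i)$ into nested rooted subtrees of non-increasing treedepth, and their merge corresponds to a partition of $V(T) \setminus \{r\}$ into rooted subtrees $S_1, S_2, \ldots, S_n$ of $T$ whose treedepths form the non-increasing sequence $L = (d_1, d_2, \ldots, d_n)$.

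The heart of the proof is to identify the minimum rooted subtree $T'$ of $T$ with $\td(T') = \td(T)$. Let $L_s$ be the maximal suffix of $L$ satisfying the staircase condition $L_s \geq_{lex} (k, k-1, \ldots, 1)$, where $k$ is its first entry (or $L_s$ is empty, in which case we set $k = 0$). The aim is then to show that $\td(T) = k+1$ and that $T'$ consists of $r$ together with the subtrees $S_j$ whose treedepths appear in $L_s$. The lower bound $\td(T) \geq k+1$ is established by iterating Lemma~\ref{treedepth-disjoint-sets}: working from the smallest entries of $L_s$ upward, one inductively produces, at each level $j \in \{1, \ldots, k\}$, two vertex-disjoint subtrees of treedepth at least $j$ inside $T'$ whose union lies in the ancestry of $r$; the staircase inequality guarantees that enough material is present at each level to carry the induction through. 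The matching upper bound $\td(T) \leq k+1$ is obtained by exhibiting an explicit td-decomposition of $T$ of height $k+1$, rooted at $r$, into which the td-decompositions of the $S_j$'s in $L_s$ can be packed according to their decreasing treedepths.

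Once $T'$ is identified, the inductive step on $T - T'$ completes the argument: the tree $T - T'$ is again a rooted tree whose children's subtrees are exactly the $S_j$'s whose treedepths appear in the prefix $L \setminus L_s$. By the maximality of $L_s$, no further staircase is hidden in this prefix, so applying the closure to the merge of the corresponding ranking lists leaves it unchanged. Hence $\rl(T - T') = L \setminus L_s$, and prepending the freshly computed maximum $k+1$ yields $\rl(T) = (k+1) \oplus (L \setminus L_s) = \cl(L)$, as desired.

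The main technical obstacle will be the iterative use of Lemma~\ref{treedepth-disjoint-sets} in the lower bound on $\td(T')$: one must carefully produce, inside $T'$, disjoint pairs of subtrees of each treedepth up to $k$ whose interaction through the common neighbour $r$ forces the treedepth to rise by one at every level. The staircase condition $L_s \geq_{lex} (k, k-1, \ldots, 1)$ is precisely what permits this inductive packing, while the maximality of $L_s$ is what ensures that removing $T'$ actually strictly decreases the treedepth. Once this staircase analysis is in place, the rest of the proof reduces to routine bookkeeping against the definitions of $\oplus$ and $\cl$.
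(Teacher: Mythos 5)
First, a point of comparison: the paper does not prove this lemma at all --- it is imported verbatim from Iyer, Ratliff and Vijayan \cite{iyer88} --- so your attempt can only be judged on its own terms. Judged that way, it has a genuine error at its core. You claim that $\td(T)=k+1$, where $k$ is the head of the maximal staircase suffix $L_s$ of $L=\rl(T_1)\oplus\cdots\oplus\rl(T_m)$, and that the minimal critical rooted subtree $T'$ consists of $r$ together with the pieces contributing to $L_s$. Both claims fail whenever $L_s$ is a proper suffix of $L$. Take the paper's own example $L=(7,6,4,3,2,2)$, where $L_s=(4,3,2,2)$, $k=4$ and $\cl(L)=(7,6,5)$: here $\td(T)$ is $7$, the first entry of $L$ (namely $\max_i\td(T_i)$), not $k+1=5$. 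The value $k+1$ is the \emph{last} entry of $\rl(T)$, i.e., the treedepth of the final residual tree that still contains $r$ after the earlier critical subtrees have been stripped away; it is the maximum only in the special case $L_s=L$. Relatedly, a set consisting of $r$ plus only part of the children's material is never a rooted subtree of $T$ unless it is all of $T$ (the rooted subtree at $r$ is $T$ itself), so it cannot be the $T'$ of the definition; when $L_s\subsetneq L$ the true $T'$ lies entirely inside the child realizing $d_1$ and does not contain $r$. Your degenerate case ($L_s=\varnothing$, $k=0$, hence ``$\td(T)=1$'') is likewise false for any tree with at least two vertices, and the claim $\rl(T-T')=L\setminus L_s$ cannot hold as written since $\cl$ never fixes a list (it always appends at least a $1$).

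A correct induction has to track the entire removal sequence rather than a single first step: one shows that $\td(T)$ equals the first entry of $\cl(L)$ (which is $d_1$ unless $L_s=L$, in which case it is $k+1$); that when $\cl(L)$ has more than one entry, $T'$ coincides with the minimal critical subtree of the child contributing $d_1$; and that $\rl(T-T')$ is again the closure of the merged lists with that child's leading entry deleted, the term $k+1$ appearing only once the residual tree has shrunk to $r$ plus the staircase remnants. Your lower-bound ingredient --- iterating Lemma~\ref{treedepth-disjoint-sets} along the staircase to force treedepth at least $k+1$ on that final residual tree --- is the right tool for that last stage, but as organized the induction does not go through.
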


\subsection{Computing $t_T(\guards)$}

The main part of our algorithm is a recursive procedure that takes as input a tree $T$ rooted at an unguarded vertex and determines the smallest contracted treedepth of an arena containing its root (if such an arena exists). Following the steps of Iyer et al., computing contracted treedepth is not sufficient for the recursion to carry out. We actually compute the \emph{contracted ranking list} of an arena, that is the ranking list of its contracted tree.

To compute $t_T(\guards)$, we just have to apply this procedure to all possible ways of rooting $T$ at an unguarded vertex, and output the minimum contracted treedepth we encountered which can indeed be done in polynomial time. 

First observe that we can simplify the tree $T$ by trimming some parts that will never be in any arena. Namely, we can remove all the edges between two guarded (resp. unguarded) vertices. Moreover we can also remove every guarded leaf of $T$ together with its neighbor. Note that this may disconnect $T$, but since arenas are connected, we can just handle each connected component separately. Without loss of generality, we may thus assume that no such operation can be applied, that is $T$ is rooted at an unguarded vertex $r$, the sets $D$ and $V(T)\setminus D$ are both independent, and all leaves of $T$ are unguarded. In that case, we say that $T$ is \emph{nice}.

Denote by $u_1,\ldots,u_k$ the children of $r$, which must thus be guarded. Moreover, for $1\leqslant i \leqslant k$, let $v_{i,1},\ldots,v_{i,\ell_i}$ be the children of $u_i$ (which are unguarded). Note that any arena of $T$ containing $r$ must contain $u_1,\ldots,u_k$, and for every $1\leqslant i\leqslant k$, exactly one $v_{i,j}$. Moreover, the restriction of $X$ to the subtree $T_{i,j}$ of $T$ rooted at $v_{i,j}$ is an arena of this subtree containing its root. 

Finding an arena of smallest contracted treedepth thus amounts to choosing the right children of each $u_i$. We proceed in a greedy way, each time choosing $v_{i,j}$ with the (lexicographically) smallest contracted ranking list.
Any of these move will force Defender to move a guard from the arena to one of its two neighbors. The previously guarded vertex will then be unguarded and adjacent to another unguarded vertex, therefore, this will create an arena of smaller contracted treedepth in the first one.
With these notations, we obtain the following algorithm.

\begin{algorithm}
\caption{\textsc{Comp-Arena}}\label{algo-comp-arena}
\KwData{A nice tree $T$ rooted at $r$, a set of vertices $\guards$ with $r\notin D$.}
\KwResult{An arena of minimal contracted ranking list (w.r.t. $\leq_{lex}$) among arenas of $(T, \guards)$ containing $r$, together with its contracted ranking list.}
  \eIf{$|V(T)| = 1$}{
    return $\{r\},(r)$ \;
  }{
     \For{$i = 1$ \KwTo $k$}{
        \For{$j = 1$ \KwTo $\ell_i$}{
            $(A_{i,j},L_{i,j}) \leftarrow \textsc{Comp-Arena}(T_{i, j}, \guards \cap V(T_{i,j}))$ \;
        }
        $(A_i,L_i) \leftarrow $ the pair $(A_{i,j},L_{i,j})$ \mbox{with minimum $L_{i,j}$ (w.r.t. $\leq_{lex})$} \;
     }
      return $\{r\} \cup \bigcup_i (A_i + u_i), \cl{(L_1\oplus\cdots \oplus L_k)}$ \;
    }
\end{algorithm}

This algorithm clearly runs in polynomial time. It remains to show that it is correct. Note that the list we output is indeed the contracted ranking list of the arena we output, by Lemma~\ref{ranking-list-rule}. The hard part is to show that the result has actually minimal contracted treedepth. This is a consequence of the following lemma. 

\begin{lemma}\label{arena-lex}
Let $T$ be a rooted tree with a rooted subtree $T_s$, $T'_s$ be a rooted tree
and $T'$ obtained from $T$ by replacing $T_s$ with $T'_s$.
If $\rl(T_s) \leq_{lex} \rl(T'_s)$, then $\rl(T) \leq_{lex} \rl(T')$.
\end{lemma}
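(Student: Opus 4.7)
The plan is to induct on the depth $d$ of the root of $T_s$ in $T$ (the distance from the root of $T$ to the root of $T_s$). The base case $d = 0$ is immediate, since then $T_s = T$ and $T' = T'_s$, so the hypothesis gives the conclusion directly.

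For the inductive step with $d \geq 1$, let $r$ denote the root of $T$ and let $S_1, \ldots, S_k$ be the rooted subtrees of $T$ at the children of $r$. The root of $T_s$ lies in a unique $S_i$, say $S_1$, where its depth is $d-1$. Letting $S'_1$ denote $S_1$ after the replacement of $T_s$ by $T'_s$, the induction hypothesis applied inside $S_1$ yields $\rl(S_1) \leq_{lex} \rl(S'_1)$. Applying Lemma~\ref{ranking-list-rule} to both $T$ and $T'$ gives
\[
    \rl(T) = \cl\bigl(\rl(S_1) \oplus \rl(S_2) \oplus \cdots \oplus \rl(S_k)\bigr), \qquad
    \rl(T') = \cl\bigl(\rl(S'_1) \oplus \rl(S_2) \oplus \cdots \oplus \rl(S_k)\bigr).
\]
Hence it will be enough to establish two monotonicity lemmas for the list operations: (a) \emph{merge monotonicity}, that $L_1 \leq_{lex} L_2$ implies $L_1 \oplus M \leq_{lex} L_2 \oplus M$ for any non-increasing list $M$; and (b) \emph{closure monotonicity}, that $L_1 \leq_{lex} L_2$ implies $\cl L_1 \leq_{lex} \cl L_2$.

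Property (a) is fairly routine: by induction on $|M|$, one reduces to showing that inserting a single element into two lex-ordered non-increasing lists preserves their order, which is a short direct argument comparing where the new element lands in each list. Property (b) is where I expect the main difficulty to lie, because the longest qualifying suffix can shift substantially from $L_1$ to $L_2$. The plan is a case analysis on how the qualifying suffixes of $L_1$ and $L_2$ sit relative to the first index at which $L_1$ and $L_2$ diverge: depending on whether each qualifying suffix starts before, at, or after this index, one either compares the post-replacement prefixes directly or uses the defining inequality $L_s \geq_{lex} (k, k-1, \ldots, 1)$ to compare the replacement values $k+1$ on both sides. This case analysis, with careful attention to the edge cases where one qualifying suffix is empty, will be the hard part of the argument.
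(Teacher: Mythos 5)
Your overall structure coincides with the paper's proof step for step: you reduce to the case where the replaced subtree hangs at a child of the root (the paper does this ``by transitivity'', you do it by induction on the depth of the root of $T_s$; these are the same reduction), you then apply Lemma~\ref{ranking-list-rule} to write $\rl(T)=\cl{(\rl(S_1)\oplus L)}$ and $\rl(T')=\cl{(\rl(S'_1)\oplus L)}$ for a common list $L$, and you reduce the statement to the two monotonicity facts: compatibility of $\oplus$ with $\leq_{lex}$ (the paper's Observation~\ref{add-preserverse-lex}) and compatibility of $\cl{}$ with $\leq_{lex}$. Up to that point there is nothing to object to.

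The gap is in property (b): you correctly identify closure monotonicity as the crux, but you only announce a case analysis without supplying the observation that makes it close, and that observation is the entire content of the lemma. Concretely, write $L_1=P,x_1,\ldots$ and $L_2=P,x_2,\ldots$ with $P$ the longest common prefix and $x_1<x_2$, and let $S_i$ be the qualifying suffix of $L_i$, with first element $k_i$. If $S_1$ does not reach into $P$, the chain $\cl{L_1}\leq_{lex}(P,x_1+1)\leq_{lex}L_2\leq_{lex}\cl{L_2}$ finishes immediately (using $x_1+1\leq x_2$ and $L\leq_{lex}\cl{L}$), so no analysis of $S_2$ is needed in this case at all. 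The case that requires an idea is when $S_1$ starts inside $P$: the key point, absent from your sketch, is that the suffix of $L_2$ starting at that same position is $\geq_{lex}S_1\geq_{lex}(k_1,k_1-1,\ldots,1)$ and therefore \emph{also qualifies}, so $S_2$ starts no later than $S_1$. Hence the prefix $P'$ of $L_2$ preceding $S_2$ is a prefix of the prefix $P''$ of $L_1$ preceding $S_1$, both lie inside $P$, and comparing $\cl{L_1}=(P'',k_1+1)$ with $\cl{L_2}=(P',k_2+1)$ reduces to comparing positions within the common prefix, where either $P'=P''$ (forcing $k_1=k_2$ and equality of the closures) or $\cl{L_1}$ continues after $P'$ with the value $k_2$ while $\cl{L_2}$ continues with $k_2+1$. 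Without this ``the suffix of $L_2$ at the same position also qualifies'' step, a case analysis organized around the divergence index alone does not determine where $S_2$ starts, and the argument does not go through.
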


Note that by transitivity, it is enough to show the Lemma when $T_s$ is rooted at a children of the root of $T$. Observe then that $\rl(T)=\cl{(\rl(T_s)\oplus L)}$ and $\rl(T')=\cl{(\rl(T'_s)\oplus L)}$ for some list $L$. We can easily see that $\oplus$ is compatible with $\leq_{lex}$ as summarized in the following observation. 

\begin{observation}\label{add-preserverse-lex}
Let $L_1, L_2$ and $L$ be three non-increasing lists such that $L_1 \leq_{lex} L_2$.
Then $L_1 \oplus L \leq_{lex} L_2 \oplus L$.
\end{observation}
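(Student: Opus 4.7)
The plan is to reduce Observation~\ref{add-preserverse-lex} to the single-element case and then handle that case by a direct analysis of the insertion position.

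First, I would observe that the merge $A \oplus B$ of two non-increasing lists depends only on the multiset union of their elements (which it sorts in non-increasing order), so $\oplus$ is associative. Writing $L = (x_1, \ldots, x_m)$ and unfolding
\[L_j \oplus L = \bigl(\bigl(\cdots\bigl(L_j \oplus (x_1)\bigr) \oplus (x_2)\bigr)\cdots\bigr) \oplus (x_m),\]
it then suffices to prove the special case: if $L_1 \leq_{lex} L_2$ then $L_1 \oplus (x) \leq_{lex} L_2 \oplus (x)$ for any integer $x$. Applying this $m$ times gives the full observation.

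For the single-element insertion, I introduce insertion positions: for $j \in \{1,2\}$, let $p_j$ be the largest index with $L_j[p_j] \geq x$, with the convention that $p_j = 0$ if no such index exists. Then $M_j := L_j \oplus (x)$ is obtained from $L_j$ by inserting $x$ directly after position $p_j$. Let $i$ be the first index at which $L_1$ and $L_2$ differ (with $i = |L_1|+1$ if $L_1$ is a proper prefix of $L_2$); by hypothesis we have $L_1[i] < L_2[i]$ or $i > |L_1|$. The argument then splits on the comparison of $p_1$ and $p_2$. If $p_1 = p_2 = p$, then $M_1$ and $M_2$ agree with $L_1,L_2$ on the first $p$ positions, both carry $x$ at position $p+1$, and their remaining suffixes are shifted copies of the tails of $L_1$ and $L_2$, so the lex comparison is inherited from $L_1 \leq_{lex} L_2$. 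If $p_1 < p_2$, then $L_1[p_1+1] < x \leq L_2[p_1+1]$, which forces $i \leq p_1+1$; if $i < p_1+1$ the inequality $M_1[i] = L_1[i] < L_2[i] = M_2[i]$ is immediate, while if $i = p_1+1$ with $L_2[p_1+1] = x$ one must look at position $p_1+2$, where (using $p_2 \geq p_1+2$ and monotonicity) $M_2[p_1+2] = x > L_1[p_1+1] = M_1[p_1+2]$. The reverse case $p_1 > p_2$ would yield $L_1[p_2+1] > L_2[p_2+1]$ at some position where the prefixes agree, contradicting $L_1 \leq_{lex} L_2$.

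The only real obstacle is bookkeeping: correctly handling the boundary cases where $L_1$ ends before position $i$ (so that insertion happens at the very end of $L_1$) and where the first difference between $L_1$ and $L_2$ coincides exactly with the insertion point of $x$, in which case the strict inequality may appear only one position further. Nothing beyond the definitions of $\leq_{lex}$ and $\oplus$ and the monotonicity of the lists enters the argument.
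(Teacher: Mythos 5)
The paper gives no proof of this observation (it is dispatched with ``we can easily see''), so there is no official argument to compare against; your reduction to single-element insertion, followed by a case analysis on the insertion positions $p_1,p_2$, is a perfectly reasonable way to write one down, and the cases $p_1=p_2$ and $p_1<p_2$ are handled correctly. However, one step fails: the claim that the case $p_1>p_2$ ``would yield $L_1[p_2+1]>L_2[p_2+1]$ at some position where the prefixes agree, contradicting $L_1\leq_{lex}L_2$'' is wrong, because nothing forces $L_1$ and $L_2$ to agree on positions $1,\ldots,p_2$. For instance $L_1=(5,4)$, $L_2=(6,1)$, $x=4$ gives $L_1\leq_{lex}L_2$ and yet $p_1=2>1=p_2$ (and here $M_1=(5,4,4)\leq_{lex}(6,4,1)=M_2$, so the observation holds but not for the reason you give). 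The case is not vacuous and must be argued.

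The repair is short: if $p_1>p_2$, then the first index $i$ at which $L_1$ and $L_2$ differ satisfies $i\leq p_2$. Indeed, if $i>p_2+1$ then $L_1[p_2+1]=L_2[p_2+1]<x$ (or $L_2$ is a proper prefix of $L_1$, contradicting $L_1\leq_{lex}L_2$), forcing $p_1\leq p_2$; and if $i=p_2+1$ then $L_1[p_2+1]<L_2[p_2+1]<x\leq L_1[p_2+1]$, which is absurd. With $i\leq p_2<p_1$ both merged lists agree with the original ones up to position $i$, so $M_1[i]=L_1[i]<L_2[i]=M_2[i]$ and the comparison is inherited as in your first case. A second, more minor slip: in the subcase $i=p_1+1$ with $L_2[p_1+1]=x$, your parenthetical ``$p_2\geq p_1+2$'' need not hold ($p_2=p_1+1$ is possible), but the conclusion $M_2[p_1+2]=x$ is true either way, since when $p_2=p_1+1$ the inserted copy of $x$ in $M_2$ lands exactly at position $p_1+2$. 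With these two corrections your argument is complete.
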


Therefore, we only have to show that $\cl$ is also increasing w.r.t. $\leq_{lex}$. 

\begin{lemma}
    Let $L_1, L_2$ be two non-increasing lists such that $L_1 \leq_{lex} L_2$. Then $\cl{L_1} \leq_{lex} \cl{L_2}$.
\end{lemma}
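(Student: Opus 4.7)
I would prove the claim by strong induction on $|L_1|+|L_2|$; the base case $L_1=L_2=()$ gives $\cl{L_1}=\cl{L_2}=(1)$ immediately. If $L_1=()$ while $L_2\ne()$, then $\cl{L_1}=(1)$, and since $\cl{L_2}[1]\in\{L_2[1],L_2[1]+1\}$ a short check shows $\cl{L_2}[1]\ge 2$ unless $L_2$ is a string of ones, in which case closure fires at position~$1$ and yields $\cl{L_2}=(2)$; either way $\cl{L_1}\le_{lex}\cl{L_2}$. When both lists are nonempty and $L_1[1]<L_2[1]$, the chain $\cl{L_1}[1]\le L_1[1]+1\le L_2[1]\le\cl{L_2}[1]$ settles the comparison, with equality forcing $\cl{L_1}$ to be a single element that is then a prefix of $\cl{L_2}$.

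The substantive case is $L_1[1]=L_2[1]=c$. Writing $L_j=(c,L_j')$ and setting $R=(c-1,c-2,\ldots,1)$, I would record the following structural observation (read off from the definition of $\cl$): the closure of $(c,L_j')$ fires at position~$1$ and returns $(c+1)$ exactly when $L_j'\ge_{lex}R$; otherwise it fires strictly inside $L_j'$, yielding $\cl{L_j}=(c)\oplus\cl{L_j'}$, which is a genuine non-increasing list since in this situation $L_j'[1]\le c-1$ and hence $\cl{L_j'}[1]\le c$. Because $L_1'\le_{lex}L_2'$ and $\le_{lex}$ is total, the configuration ``$L_1'\ge_{lex}R$ and $L_2'<_{lex}R$'' is impossible; the three remaining configurations are settled respectively by (i) the equality $\cl{L_1}=\cl{L_2}=(c+1)$, (ii) the strict inequality $c<c+1$ at position~$1$, and (iii) the induction hypothesis applied to the strictly shorter pair $(L_1',L_2')$, together with the fact that prepending the common value $c$ preserves $\le_{lex}$.

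The main obstacle is establishing the structural observation cleanly: one has to verify that the longest suffix of $(c,L_j')$ satisfying the domination condition is never a \emph{partial} overlap involving position~$1$, and that in the tail-only case it coincides with the suffix used to compute $\cl{L_j'}$. Both facts follow from a short check based on the equivalence $(c,L_j')\ge_{lex}(c,c-1,\ldots,1)\iff L_j'\ge_{lex}R$, which in particular explains why the three-way case split above is exhaustive.
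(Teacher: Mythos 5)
Your proof is correct, but it takes a genuinely different route from the paper's. The paper argues directly on positions: it takes the longest common prefix $P$ of $L_1$ and $L_2$ and splits on whether the closure suffix $S_1$ of $L_1$ reaches into $P$ — if not, $\cl{L_1}\leq_{lex}P,(x_1+1)\leq_{lex}L_2\leq_{lex}\cl{L_2}$; if so, it shows the closure suffix of $L_2$ starts no later than $S_1$ and compares the resulting truncations. You instead prove a recursive characterization of the closure, $\cl{(c,L')}=(c+1)$ if $L'\geq_{lex}(c-1,\ldots,1)$ and $\cl{(c,L')}=(c)\oplus\cl{L'}$ otherwise, and then run a structural induction peeling off the head element, using totality of $\leq_{lex}$ to rule out the one bad configuration. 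Unrolled, your induction walks down the common prefix, so the two arguments are morally aligned (your ``both tails dominate $R$'' case matches the paper's ``$S_1$ meets $P$'' case, and your $L_1[1]<L_2[1]$ case matches the paper's other case), but the key lemma is different: you pay the extra cost of verifying the recursion for $\cl$ (including the empty-suffix and singleton edge cases, which you do address correctly via $L'_j[1]\leq c-1$), and in exchange each inductive case becomes a one-line check; your version also handles the degenerate situation where $L_1$ is a proper prefix of $L_2$ more explicitly than the paper does. Both arguments are sound.
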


\begin{proof}
Let $P$ be the largest common prefix of $L_1$ and $L_2$, so that $L_1= P,x_1,\ldots$ and $L_2=P,x_2,\ldots$ with $x_1 < x_2$. For $i=1,2$, let $S_i$ be the suffix of $L_i$ used when computing $\cl{L_i}$, that is the longest suffix of $L_i$ such that $S_i\geq_{lex} (k_i,k_i-1,\ldots,1)$ where $k_i$ is the first element of $L_i$.

If $S_1$ does not intersect $P$, then $\cl{L_1}\leq_{lex} P,x_1+1\leq_{lex} L_2$ since $x_2\geqslant x_1+1$. This concludes since $L\leq_{lex} \cl{L}$ for every non-increasing list $L$.

Otherwise, we have $k_1\in P$. In particular, $k_1$ appears in $L_2$ and the suffix of $L_2$ starting at the position $k_1$ is larger than $S_1$, hence than $(k_1,\ldots,1)$. Therefore $\cl{L_2}=P',k_2+1$ where $P'$ is a prefix of $P$, and $\cl{L_1}$ starts with $P',k_2$, hence $\cl{L_1}\leq_{lex} \cl{L_2}$ again.
\end{proof}

\section{Cographs}
\label{sec:cographs}

In this section, we will prove the following.

\begin{theorem}\label{cograph-ked}
$\kED$ is computable in polynomial time on cographs.
\end{theorem}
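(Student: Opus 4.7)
The plan is to leverage the cotree decomposition of cographs: every cograph is either a single vertex, a disjoint union $H_1 \cup H_2$, or a join $H_1 \vee H_2$ of two smaller cographs, and this decomposition is computable in linear time. We design a bottom-up dynamic programming on the cotree. Since a sub-game rooted at a cotree node $H$ can receive defensive help from guards outside $H$ via ancestral joins, the raw quantity $t_H(D \cap V(H))$ is not enough for the recursion to close; instead, we introduce the parametrized quantity $\tau_H(D, k)$, defined as the minimum number of turns for Attacker to win in the modified game where Defender additionally possesses $k$ \emph{universal spares}: at each turn she may either move an adjacent guard (as in the original game) or spend one spare, which places a fresh guard at the attacked vertex. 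Then $t_G(D) = \tau_G(D, 0)$; moreover $\tau_H(D, k)$ is non-decreasing in $k$ and is bounded by $|V(H)| - |D|$ whenever finite.

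The key recurrences are as follows. The base case $H=\{v\}$ is immediate: $\tau_H(\{v\}, k) = \infty$, and $\tau_H(\emptyset, k)$ equals $1$ if $k = 0$ and $\infty$ otherwise. At a join node $H = H_1 \vee H_2$, every guard in one side is adjacent via the join to every vertex of the other, and hence acts as a universal spare for that side; one shows
\[
\tau_H(D, k) = \min\bigl(\tau_{H_1}(D_1, k + |D_2|),\; \tau_{H_2}(D_2, k + |D_1|)\bigr),
\]
where the upper bound is realized by Attacker committing all his attacks to a single side and the lower bound uses a Defender strategy combined with an exchange principle showing that trading a guard for an additional spare weakly helps Defender. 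At a disjoint-union node $H = H_1 \cup H_2$, guards remain confined to their side but the spare budget is shared. Writing $\sigma_i(a) = \min\{s : \tau_{H_i}(D_i, s) > a\}$ for the least number of spares the $H_i$-side needs to survive $a$ attacks, we obtain the min--max formula
\[
\tau_H(D, k) = \min\bigl\{\, a_1 + a_2 \;:\; \sigma_1(a_1) + \sigma_2(a_2) > k \,\bigr\}.
\]

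The algorithm computes the cotree once and then, in a bottom-up traversal, evaluates $\tau_H(D_H, k)$ at every cotree node $H$ and for every $k \in \{0, 1, \ldots, 2n\}$ (large enough to accommodate the shifts $k + |D_{3-i}|$ that arise in the join case), using the appropriate recurrence. The union case scans $O(n^2)$ pairs $(a_1, a_2)$ per entry and all other operations are lookups, so the total running time is polynomial; the answer is $\tau_G(D, 0)$. The main technical obstacle is the correctness of the union formula, where we must show that the sequential adaptive game value coincides with the static expression above. The crux is the observation that $\sigma_i$ is non-decreasing with step size at most one (a spare can always substitute for a missing adjacent guard on one additional attack): this lets Defender adaptively maintain the invariant $s_i^t = \sigma_i(b_i^t)$, with $b_i^t$ and $s_i^t$ the cumulative attacks and spares spent on side $i$ by turn $t$, and therefore survive exactly up to the turn at which this invariant would force $s_1 + s_2 > k$, at which point Attacker's prescribed attack succeeds.
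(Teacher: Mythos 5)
Your overall plan coincides with the paper's: both introduce the same auxiliary game (your ``universal spares'' are the paper's \emph{reservists}, up to an index shift of one), evaluate it bottom-up on the cotree, and use the identical join recurrence; your union formula, once $\sigma_i$ is unwound via $\min\{a:\sigma_i(a)\ge s\}=\tau_i(D_i,s-1)$, is exactly the paper's $\min_{0\le i\le r}\bigl(t_{G_1}(D_1,i)+t_{G_2}(D_2,r-i)\bigr)$. The base case and the Attacker-side (upper-bound) halves of both recurrences are fine and match the paper's easy lemmas.

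The genuine gap is on the Defender side of both recurrences. For the join, the inequality $\tau_H(D,k)\ge\min(\cdots)$ requires showing that Attacker gains nothing by interleaving attacks on the two sides; ``a Defender strategy combined with an exchange principle'' does not identify such a strategy, and this is precisely the hardest part of the paper's proof: it fixes a canonical Defender strategy $\mathcal{S}^*$ (answer with the guard whose least common ancestor with the attacked vertex in the cotree is deepest), proves the exchange lemma $t^*_G(D,r)\le t^*_G(D\setminus\{x\},r+1)$ \emph{for that fixed strategy}, and runs a global induction showing $t^*=t$. For the union, your invariant $s_i^t=\sigma_i(b_i^t)$ presupposes that on each side Defender has a \emph{single} adaptive strategy which, for every prefix length $a$ simultaneously, spends at most $\sigma_i(a)$ spares over the first $a$ attacks. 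The step-size-one property of $\sigma_i$ only compares the optimal \emph{values} for consecutive horizons; it does not produce one strategy optimal for all horizons at once, since the best response to a given attack may depend on how many further attacks that side will receive, which Attacker chooses adaptively. Such a budget-uniform strategy does exist, but establishing it is essentially the content of the theorem; the paper obtains it as a byproduct of proving that the fixed strategy $\mathcal{S}^*$ (which is local to each component of a disjoint union) is optimal for every reservist budget. To close your argument you would need to strengthen the induction hypothesis to assert the existence of such a uniform optimal Defender strategy and propagate it through both union and join nodes, or else adopt the paper's device of a single canonical strategy analyzed globally.
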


A cograph is either an isolated vertex, or is obtained by taking the disjoint union (denoted by $\cup$) or the complete join (denoted by $\Join$) of two smaller cographs. This inductive definition naturally provides a representation of each cograph as a decomposition tree called \emph{cotree}, whose leaves are the vertices of the cograph, and the internal nodes are either \emph{join nodes} or \emph{union nodes}\footnote{Note that this decomposition is not necessarily unique but, for algorithmic purposes, we only need that such a decomposition tree exists and can be computed in polynomial time which is indeed the case.}. We can recover the adjacency between the vertices of a cograph $G$ from its cotree as follows: two vertices $uv$ are adjacent in $G$ if and only if their closest ancestor $\lca(u,v)$ in the cotree is a join node. 

Our polynomial time algorithm for $\kED$ on cographs will compute $t_G(\guards)$ using a recursive procedure applied to the cotree of $G$. Note that this is not restrictive since one can compute the cotree of a given cograph in linear time~\cite{cograph-recognition}. We actually consider a more general game, the \emph{eternal domination game with reservists}, defined as follows. This notion is similar to firefighter reserve \cite{firefighter}.

This game is played on a board $(G,\guards,r)$ where $G$ is a graph, $\guards$ is a set of guards and $r \geq 1$ is a number of reservists. Attacker plays as usual by attacking a non-guarded vertex $v$. Now Defender defends by either moving a guard from a neighbor of $v$ to $v$, or by adding a new guard on $v$ and decrease $r$ by one. Attacker wins when $r$ reaches $0$, that is when Defender moves her last reservist on $G$. We denote by $t_G(\guards,r)$ the smallest number of turns required for Attacker to win. In particular, observe that $t_G(\guards,1)=t_G(\guards)$. By convention, we assume that $t_G(\guards, 0) = 0$.

Theorem~\ref{cograph-ked} relies on the two following results, that handle respectively the leaves and the nodes of a cotree. For the base case, observe that Attacker can only win on an isolated vertex if it is not guarded and there is only one reservist, and in that case he wins in one turn. Therefore we get the following. 
\begin{observation}\label{obs:reservist-onevertex}
Let $G$ be a graph containing only an isolated vertex.
Then, $t_G(\guards, r) = 0$ if $r=0$, $1$ if $\guards=\emptyset$ and $r=1$ and $+\infty$ otherwise.
\end{observation}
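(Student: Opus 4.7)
The plan is to unpack the game rules on the single-vertex graph $G = \{v\}$ and handle the three regimes appearing in the statement by direct case analysis. The main observation driving everything is that $v$ has no neighbors, so Defender can never defend an attack by \emph{moving} a guard; she can only defend by using a reservist.

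The first case is $r = 0$, which is immediate from the convention $t_G(\guards, 0) = 0$. The second case is $\guards = \emptyset$ and $r = 1$. Here Attacker attacks $v$, which is unguarded. As $v$ has no neighbors, Defender's only option is to place the reservist on $v$, after which $r$ becomes $0$ and Attacker wins; hence $t_G(\guards, 1) = 1$.

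The third case covers all remaining configurations, and I would split it into two subcases to show $t_G(\guards, r) = +\infty$. If $\guards \neq \emptyset$ then $\guards = \{v\}$, so every vertex is guarded and Attacker has no legal move at any point, meaning Defender trivially resists forever. If $\guards = \emptyset$ and $r \geq 2$, then the only available attack is $v$; Defender responds by placing a reservist on $v$, reaching the configuration with guards $\{v\}$ and $r-1 \geq 1$ reservists left, which is the previous subcase. In both subcases Attacker has no winning strategy, so $t_G(\guards, r) = +\infty$ as required.

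There is no real obstacle here: the whole proof is a direct application of the definitions of attack and defense on a trivial graph, and the only subtlety is remembering that the rules require Attacker to choose a non-guarded vertex, so once $v$ becomes guarded the game freezes in Defender's favor.
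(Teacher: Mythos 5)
Your case analysis is correct and matches the paper, which justifies this observation with a single sentence ("Attacker can only win on an isolated vertex if it is not guarded and there is only one reservist, and in that case he wins in one turn"); your write-up simply makes the same reasoning explicit, including the convention for $r=0$ and the fact that a guarded isolated vertex leaves Attacker with no legal move.
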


We now state two recursive formulas satisfied by $t_G(D,r)$ that allow us to compute $t_G(D)$. Note that the computation only uses the values of the form $t_H(D\cap V(H),r)$ for some cograph $H$ whose cotree is a subtree of the cotree of $G$ and for some integer $r$ between $0$ and $|V(G)|$. Therefore, the computation can be done in polynomial time, which proves Theorem~\ref{cograph-ked}.

\begin{proposition}
\label{lem:union-and-join}
Let $G_1,G_2$ be two cographs and $\guards_1,\guards_2$ be sets of guards on $G_1$ and $G_2$. We have:
\begin{align*}t_{G_1 \cup G_2}(\guards_1\cup \guards_2,r)&=\min_{0 \leqslant i \leqslant r} (t_{G_1}(\guards_1,i)+ t_{G_2}(\guards_2,r-i))\\
t_{G_1 \Join G_2}(\guards_1\cup\guards_2,r)&=\min \{ t_{G_1}(\guards_1,r+|\guards_2|), t_{G_2}(\guards_2,r+|\guards_1|)\}.
\end{align*}
\end{proposition}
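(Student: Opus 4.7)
The plan is to prove each of the two identities by matching an upper and a lower bound: the upper bound is given by an explicit Attacker strategy, and the lower bound by an explicit Defender strategy.

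For the disjoint union, the upper bound comes from the Attacker strategy parameterised by an index $i$ achieving the minimum. First Attacker plays the optimal $(G_1,\guards_1,i)$-strategy $\sigma_1$ on $G_1$; then he switches to the optimal $(G_2,\guards_2,r-i)$-strategy $\sigma_2$ on $G_2$. The key observation is that in the combined game Defender's set of responses to a $G_1$-attack is exactly the same as in the pure sub-game on $G_1$ (vertices of $G_2$ are not adjacent to vertices of $G_1$), so by the winning property of $\sigma_1$ Defender must have used at least $i$ reservists on $G_1$ within $t_{G_1}(\guards_1,i)$ turns. Phase two then drains the remaining reservists within $t_{G_2}(\guards_2,r-i)$ turns, giving the bound.

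For the union lower bound, Defender plays sub-game-optimally on each side, adaptively tracking the reservists she has spent on each. If the game ends after $T$ turns with $r_1,r_2$ reservists used and $k_1,k_2$ attacks on each side ($r_1+r_2\leqslant r$, $k_1+k_2=T$), the restriction of Attacker's strategy to $G_j$ is a valid Attacker strategy in $(G_j,\guards_j,r_j)$, and the restriction of Defender's strategy is sub-optimal there. Since the sub-game must have ended (all $r_j$ sub-reservists were used), $k_j\geqslant t_{G_j}(\guards_j,r_j)$, and summing yields $T\geqslant t_{G_1}(\guards_1,r_1)+t_{G_2}(\guards_2,r_2)\geqslant\min_i(t_{G_1}(\guards_1,i)+t_{G_2}(\guards_2,r-i))$.

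For the join, the crucial remark is that every guard in $\guards_{3-j}$ is adjacent to every vertex of $G_j$, so it can defend any $G_j$-attack by moving across a join edge. Thus if Attacker attacks only $G_j$, the $|\guards_{3-j}|$ cross-over moves and the $r$ reservists together play exactly the role of $r+|\guards_{3-j}|$ reservists in a pure $(G_j,\guards_j,\cdot)$-game. This gives the upper bound: Attacker picks $j$ minimising $t_{G_j}(\guards_j,r+|\guards_{3-j}|)$ and plays the corresponding optimal sub-game strategy, winning in that many turns.

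For the join lower bound I would use a potential argument with $\phi(D_1,D_2,r)=\min\{t_{G_1}(D_1,r+|D_2|),\,t_{G_2}(D_2,r+|D_1|)\}$. Given Attacker's attack $v$, WLOG in $G_1$, and writing $\alpha=t_{G_1}(D_1,r+|D_2|)$, the sub-game recursion for $t_{G_1}$ on $(G_1,D_1,r+|D_2|)$ yields a Defender response (move a $G_1$-neighbour, move a $G_2$-guard, or use a reservist) whose resulting $t_{G_1}$-component is at least $\alpha-1$. A small monotonicity lemma, asserting that replacing a guard by a reservist never decreases $t_{G_j}$ (proved by letting Defender simulate the guard via the reservist on the first attack at that vertex), then ensures that for any such response the $t_{G_2}$-component does not decrease. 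Hence $\phi$ drops by at most one per turn, giving $T\geqslant\phi$.

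The main obstacle in both cases is the lower bound: Defender must pick a single response per turn, whereas the quantities on the right are minima over virtual reservist allocations. For the union this is handled by adaptively playing online sub-game-optimally, so that whatever reservist split occurs in the actual game yields a valid sub-game lower bound; for the join it is resolved via the monotonicity lemma, which guarantees that whichever response Defender uses to protect one side of the minimum in $\phi$, the other side cannot drop.
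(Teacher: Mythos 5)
Your two upper bounds and your join lower bound are sound, but the union lower bound has a genuine gap. You let Defender ``play sub-game-optimally on each side'' and conclude that, since $r_j$ reservists were extracted from $G_j$ in $k_j$ attacks, $k_j\geqslant t_{G_j}(\guards_j,r_j)$. That step requires a \emph{single} Defender strategy on $G_j$ that is simultaneously optimal for every reservist budget $s$, because the relevant budget $r_j$ is determined only a posteriori by how Attacker splits his attacks between the two components. A strategy maximising the time before the $i$-th reservist is spent need not maximise the time before the $(i+1)$-th, and ``adaptively tracking the reservists spent so far'' does not resolve this, since the right response may depend on how many more will eventually be needed. (If, as your wording ``sub-optimal'' suggests, the restricted defence is merely some strategy, the inequality $k_j\geqslant t_{G_j}(\guards_j,r_j)$ simply fails.) The existence of such a budget-oblivious, uniformly optimal defence is precisely the non-trivial content of the paper's argument: it fixes the canonical strategy $\mathcal{S}^*$ (defend with the guard whose least common ancestor with the attacked vertex in the cotree is deepest), defines $t^*_G$ as the time to win against it, and proves $t^*_G=t_G$ by a simultaneous induction over the cotree; with that in hand, the union lower bound is exactly your decomposition argument (Lemma~\ref{lem:union-opt}). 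Minor point: the game ends only when $r_1+r_2=r$ exactly, not $r_1+r_2\leqslant r$.

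Your join lower bound, by contrast, takes a genuinely different and essentially correct route. The paper instead shows, via $\mathcal{S}^*$ and Lemma~\ref{lemma-part-of-join}, that Attacker may as well confine his attacks to one side of the join; your potential $\phi=\min\{t_{G_1}(\guards_1,r+|\guards_2|),\,t_{G_2}(\guards_2,r+|\guards_1|)\}$ bypasses that restriction argument. The case analysis does close: answering inside $G_1$ leaves $|\guards_1|$ and hence the $t_{G_2}$-component unchanged, while answering with a $\guards_2$-guard or a reservist decreases $r+|\guards_2|$ by one and increases $r+|\guards_1|$ by one, so the $t_{G_1}$-component drops by at most one and the $t_{G_2}$-component is preserved by your monotonicity lemma $t_G(\guards,s)\leqslant t_G(\guards\setminus\{x\},s+1)$ --- which is the paper's Lemma~\ref{reservists>guards}, stated there for $t^*$ but provable for $t$ by the simulation you sketch (take care of the extra case where Attacker attacks $x$ itself, which is unguarded only in the reservist position). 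This buys you independence from the uniform-optimality machinery for the join; but since the union case still needs it, you do not actually avoid the $\mathcal{S}^*$ induction overall.
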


We start with the easiest ones, stated in the two following lemmas.

\begin{lemma}
\label{lem:union-inf}
Let $G_1,G_2$ be two cographs, $r\geqslant 0$ and $\guards_1,\guards_2$ be sets of guards on $G_1$ and $G_2$. We have:
\[t_{G_1\cup G_2}(\guards_1\cup\guards_2,r) \leq \min_{0 \leqslant i \leqslant r} (t_{G_1}(\guards_1,i)+ t_{G_2}(\guards_2,r-i)).\]
\end{lemma}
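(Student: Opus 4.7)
The plan is to fix an arbitrary $i$ with $0\leqslant i\leqslant r$ and to exhibit, for that $i$, an Attacker strategy on $(G_1\cup G_2,\guards_1\cup\guards_2,r)$ that wins in at most $T_1+T_2$ turns, where $T_1=t_{G_1}(\guards_1,i)$ and $T_2=t_{G_2}(\guards_2,r-i)$; taking the minimum over $i$ then yields the claim. The case where either $T_1$ or $T_2$ is $+\infty$ is trivial, so I would assume both are finite and fix winning strategies $\sigma_1$ for $(G_1,\guards_1,i)$ and $\sigma_2$ for $(G_2,\guards_2,r-i)$ realizing those bounds.

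Attacker's strategy in the combined game consists in concatenating $\sigma_1$ and $\sigma_2$. In Phase~1 Attacker attacks only $G_1$-vertices, playing $\sigma_1$ based on the history of $G_1$-attacks and their responses, and stops as soon as $i$ reservists have been consumed in total; in Phase~2 Attacker plays $\sigma_2$ on $G_2$, starting from the still-untouched board $(G_2,\guards_2)$ with the $r-i$ remaining reservists. The crucial structural fact is that $G_1\cup G_2$ has no edges between $G_1$ and $G_2$, so any Defender response to a $G_j$-attack is either the movement of an adjacent $G_j$-guard or the placement of a reservist --- exactly the options available in the subgame $(G_j,\guards_j,\cdot)$. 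Consequently, Defender's behavior on $G_j$-attacks in the combined game projects to a legitimate Defender strategy in the corresponding subgame, as long as the relevant reservist budget is not exceeded.

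The main verification step is to bound Phase~1 by $T_1$ turns, which I would do by contradiction. If after $T_1$ $\sigma_1$-attacks Defender had consumed strictly fewer than $i$ reservists, then the projection of her responses would be a valid Defender strategy in $(G_1,\guards_1,i)$ surviving $T_1$ turns of $\sigma_1$, contradicting the fact that $\sigma_1$ is winning in $T_1$ turns. Hence the $i$-th reservist use occurs within the first $T_1$ turns; at that moment Attacker switches to Phase~2, and the same argument applied to $(G_2,\guards_2,r-i)$ bounds Phase~2 by $T_2$ further turns, at which point $r$ reaches $0$ and Attacker wins. The main obstacle is that Defender in the combined game has extra flexibility --- she may burn reservists even when she could move a guard, in order to preserve useful guard positions for later --- and one must check that this cannot prolong Phase~1; stopping Phase~1 precisely at the $i$-th reservist use, rather than after a fixed number of turns, is what makes the projection argument go through cleanly.
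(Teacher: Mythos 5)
Your proof is correct and takes essentially the same route as the paper's: concatenate a winning strategy for $(G_1,\guards_1,i)$ with one for $(G_2,\guards_2,r-i)$, using the absence of edges between $G_1$ and $G_2$ so that Defender's responses during each phase project onto a legal play of the corresponding subgame. The paper states this in two lines; your version only spells out the projection and reservist-budget bookkeeping that the paper leaves implicit.
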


\begin{proof}
Let $G=G_1\cup G_2$ and $r_1,r_2$ be two integers. If, for $i=1,2$, Attacker wins on $(G_i,\guards_i,r_i)$ in $t_i$ turns, then he can also win on $(G,\guards,r_1+r_2)$ in $t_1+t_2$ turns by first playing $t_1$ turns on $(G_1,\guards_1,r_1)$ (so that Defender is forced to play her $r_1$-th reservist), and then playing $t_2$ turns on $(G_2,\guards_2,r_2)$ so that Defender plays her $(r_1+r_2)$-th reservist. 
\end{proof}

\begin{lemma}
\label{lem:join-inf}
Let $G_1,G_2$ be two cographs, $r\geqslant 0$ and $\guards_1,\guards_2$ be sets of guards on $G_1$ and $G_2$. We have:
\[t_{G_1\Join G_2}(\guards_1\cup \guards_2,r) \leq \min \{ t_{G_1}(\guards_1,r+|\guards_2|), t_{G_2}(\guards_2,r+|\guards_1|)\}.\]
\end{lemma}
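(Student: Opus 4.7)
The plan is to show that Attacker can win the join game by simply executing a winning strategy from one of the two factor games, while treating the guards of the other factor as an extra supply of reservists. By symmetry we may assume the right-hand side equals $t := t_{G_1}(\guards_1, r+|\guards_2|)$; if $t=+\infty$ the claim is vacuous, so fix a winning strategy $f$ for Attacker in the ``simulated game'' $(G_1, \guards_1, r+|\guards_2|)$ ending in at most $t$ turns. In the join game $(G_1 \Join G_2, \guards_1 \cup \guards_2, r)$ Attacker will simply play $f$ on the current set of guards restricted to $V(G_1)$; in particular every attack lies in $V(G_1)$.

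Fix any Defender strategy $g'$ in the join game. Since each attack is on some $v \in V(G_1)$, Defender's response is of one of three types: (i) slide a guard $u \in V(G_1)$ with $uv \in E(G_1)$ onto $v$; (ii) slide a guard $u \in V(G_2)$ onto $v$, which is always legal thanks to the join edges; (iii) deploy a reservist on $v$. Translate these into moves in the simulated game by keeping (i) as is, and treating both (ii) and (iii) as deploying a reservist on $v$. A straightforward induction on the turn number shows that the translated sequence of moves is legal in the simulated game and maintains the following invariant: the set of guards on $V(G_1)$ in the join game equals the set of guards in the simulated game, and the number of remaining simulated reservists equals $r + |G_2^{\mathrm{cur}}|$, where $G_2^{\mathrm{cur}}$ is the current set of guards on $V(G_2)$ in the join game.

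Since $f$ is winning in at most $t$ turns, against this translated strategy the simulated game ends at some turn $t^* \leq t$, either because Defender has no legal response or because she has just deployed her last simulated reservist. In the first case, the invariant immediately gives that Defender has no legal response in the join game either, so the join game also ends at turn $t^*$. In the second case, the invariant forces $r + |G_2^{\mathrm{cur}}| = 0$ at the end of turn $t^*$, hence $r = 0$ in the join game at that point; since $r$ is non-increasing (and only decreases when Defender plays a join-game reservist), the join-game reservists had already dropped to zero by some turn $t^{**} \le t^*$, so Attacker had already won the join game by turn $t^{**} \leq t$. In either case $t_{G_1 \Join G_2}(\guards_1 \cup \guards_2, r) \leq t$, as required.

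The main subtlety, which drives the whole argument, is that the two games keep different ``reservist accounts'': the single pool of $r + |\guards_2|$ simulated reservists is split in the join game into actual reservists and $G_2$-guards, which are indistinguishable from the point of view of defending an attack on $V(G_1)$ but constitute two separate resources in the win condition. Tracking this split through the invariant and showing that exhausting the simulated pool at turn $t^*$ forces the join game to end by that same turn (through some $t^{**} \leq t^*$) is the one non-routine point in the proof.
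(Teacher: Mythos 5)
Your proof is correct and takes essentially the same approach as the paper's: Attacker runs a winning strategy for $(G_1,\guards_1,r+|\guards_2|)$ entirely inside $G_1$, with the guards of $\guards_2$ playing the role of extra reservists. The paper states this board equivalence in a single sentence, and your invariant simply makes that bookkeeping explicit.
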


\begin{proof}
Let $G=G_1\Join G_2$. Observe that if Attacker plays only on $G_1$, the board is equivalent to $(G_1,\guards_1,r+|\guards_2|)$ since Defender can use the guards of $\guards_2$ as reservists. In particular, attacking only in $G_1$ or only in $G_2$ gives a winning strategy for Attacker in $\min\{ t_{G_1}(\guards_1,r+|\guards_2|), t_{G_2}(\guards_2,r+|\guards_1|)\}$ turns. 
\end{proof}

To prove the remaining inequalities, we consider an arbitrary order $\leq_G$ over the vertices of $G$ and introduce $\mathcal{S}^*$ as the following strategy of Defender: if Attacker attacks a vertex $v$, then Defender answers (if possible) with a guarded vertex $u\in N(v)$ whose common ancestor $\lca(u,v)$ in the cotree of $G$ is the deepest. If several such vertices exist, she picks the smallest with respect to the order $\leq_G$. If no such vertex exists (\emph{i.e.} if $v$ has no guarded neighbor), Defender calls a reservist. We denote by $t_G^*(\guards, r)$ the smallest number of turns required for Attacker to win on $(G, \guards, r)$ against this strategy $\mathcal{S}^*$. Notice that this strategy does not depend on the number of available reservists, nor on the connected components of $G$ that do not contain $v$.

Observe that if Attacker wins in $k$ turns on $(G,\guards,r)$, then he also wins in at most $k$ turns against $\mathcal{S}^*$. Therefore, we get the following.

\begin{observation}\label{obs:opt-inf}
Every cograph $G$ satisfies $t_G^* \leq t_G$.
\end{observation}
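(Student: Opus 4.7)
The plan is to recognize that Observation~\ref{obs:opt-inf} is essentially a tautology arising from a restriction on Defender's strategy space. Recall that $t_G(\guards, r)$ is, by definition, the minimum $k$ such that there exists an Attacker strategy $f$ forcing a win in at most $k$ turns against \emph{every} Defender strategy, whereas $t_G^*(\guards, r)$ only requires that $f$ force a win against the one specific Defender strategy $\mathcal{S}^*$. Since $\mathcal{S}^*$ is a single strategy, restricting Defender to play it can only make Attacker's life easier, not harder.

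Concretely, first I would let $k := t_G(\guards, r)$ and pick an Attacker strategy $f^\star$ that witnesses this value, i.e., $f^\star$ wins in at most $k$ turns against every Defender strategy $g$. Second, I would apply $f^\star$ against the particular strategy $\mathcal{S}^*$: by the witness property, Attacker wins in at most $k$ turns on $(G, \guards, r)$ in this restricted game, hence by definition $t_G^*(\guards, r) \leq k = t_G(\guards, r)$.

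The one minor sanity check is to confirm that $\mathcal{S}^*$ is a well-defined (legal) Defender strategy on every reachable configuration. This is immediate from its description: on any attack at an unguarded $v$, either $v$ has at least one guarded neighbor, in which case $\mathcal{S}^*$ deterministically selects one (the neighbor $u$ maximizing the depth of $\lca(u,v)$ in the cotree, breaking ties by $\leq_G$), or $v$ has no guarded neighbor, in which case a reservist is placed (and the count decreases). In particular, $\mathcal{S}^*$ never refuses to answer an attack that an arbitrary Defender strategy could have answered, so $f^\star$ behaves validly against it.

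There is no real obstacle here; the point of stating the observation (and, more importantly, of naming the strategy $\mathcal{S}^*$ and the quantity $t_G^*$) is to set up the nontrivial reverse inequalities $t_G \leq t_G^*$ that will complete the union and join cases of Proposition~\ref{lem:union-and-join}. Those are where the real work lies, since they require \emph{exhibiting} an explicit Attacker strategy that does as well as an optimal one while only facing $\mathcal{S}^*$, exploiting the fact that $\mathcal{S}^*$ moves the guard whose LCA with the attacked vertex is deepest in the cotree.
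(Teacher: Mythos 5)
Your argument is correct and matches the paper's own justification: an Attacker strategy witnessing $t_G(\guards,r)$ wins against every Defender strategy, hence in particular against $\mathcal{S}^*$, giving $t_G^*(\guards,r)\leq t_G(\guards,r)$. The added sanity check that $\mathcal{S}^*$ is a legal strategy is fine but not needed beyond what the paper already states.
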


In the following, we will show by induction that this inequality is actually an equality, meaning that $\mathcal{S^*}$ is an optimal strategy for Defender. 

For the base case, Observation~\ref{obs:reservist-onevertex} clearly remains true when Attacker plays against the strategy $\mathcal{S}^*$.

\begin{observation}\label{obs:reservist-onevertex-opt}
Let $G$ be a graph containing only an isolated vertex.
Then, $t^*_G(\guards, r) = 0$ if $r=0$, $1$ if $\guards=\emptyset$ and $r=1$ and $+\infty$ otherwise.
\end{observation}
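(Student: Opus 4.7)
The plan is to observe that on a single-vertex graph, Defender has exactly one available move at every step, so the strategy $\mathcal{S}^*$ coincides with every strategy of Defender. Consequently, $t^*_G(\guards,r) = t_G(\guards,r)$, and the statement reduces immediately to Observation~\ref{obs:reservist-onevertex}.

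More concretely, I would split into the three cases. If $r=0$, Attacker already wins before playing any turn, regardless of Defender's strategy, so $t^*_G(\guards,0)=0$. If $\guards=\varnothing$ and $r=1$, the unique vertex $v$ of $G$ is unguarded and has no neighbor; so when Attacker attacks $v$ (the only legal move) the strategy $\mathcal{S}^*$ forces Defender to deploy the reservist onto $v$, and that deployment exhausts the reservist, so Attacker wins in exactly one turn. In all remaining cases, either $v$ is guarded (so Attacker has no legal attack and the game never ends) or $\guards=\varnothing$ and $r\geqslant 2$ (so Defender can always call a new reservist and, since $v$ has no neighbor, $\mathcal{S}^*$ prescribes doing so; however the reservist immediately gets consumed only by further attacks on $v$, but between attacks $v$ becomes guarded so there is no legal attack). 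In both of those subcases $t^*_G(\guards,r)=+\infty$.

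I do not expect any genuine obstacle here: the only subtle point is to check that when $v$ is guarded after a reservist call, no further attack is possible, so Attacker cannot whittle down the reservists beyond the first one if $\guards=\varnothing$ and $r\geqslant 2$. But this is immediate because Attacker must attack a non-guarded vertex and there is only one vertex in $G$. With this observation in hand the whole statement is a direct specialization of Observation~\ref{obs:reservist-onevertex}.
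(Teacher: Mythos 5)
Your proposal is correct and matches the paper's (implicit) reasoning: the paper simply notes that Observation~\ref{obs:reservist-onevertex} ``clearly remains true'' against $\mathcal{S}^*$, which is exactly your key point that on a one-vertex graph Defender has at most one legal response, so $\mathcal{S}^*$ coincides with every strategy and $t^*_G=t_G$. Your explicit case analysis (including the check that once the lone vertex is guarded no further attack is possible) is a harmless elaboration of the same argument.
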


To prove the inductive step, we will show that the converse inequalities of Lemmas~\ref{lem:union-inf} and~\ref{lem:join-inf} hold when replacing $t_G$ by $t_G^*$. More precisely, we show the two following lemmas.

\begin{lemma}
\label{lem:union-opt}
Let $G_1,G_2$ be two cographs, $r\geqslant 0$ and $\guards_1,\guards_2$ be sets of guards on $G_1$ and $G_2$. We have:
\[t_{G_1\cup G_2}^*(\guards_1\cup\guards_2,r) \geq \min_{0 \leqslant i \leqslant r} (t^*_{G_1}(\guards_1,i)+ t^*_{G_2}(\guards_2,r-i)).\]
\end{lemma}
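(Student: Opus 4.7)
The plan is to exploit the fact that, in a disjoint union $G = G_1 \cup G_2$, the strategy $\mathcal{S}^*$ decouples across the two components. I would begin by fixing a shortest winning Attacker strategy against $\mathcal{S}^*$ on $(G, \guards_1 \cup \guards_2, r)$, of length $t := t^*_G(\guards_1 \cup \guards_2, r)$; if $t = +\infty$ there is nothing to prove, so assume $t < +\infty$. Since $\mathcal{S}^*$ is deterministic, Attacker's play against it reduces to a fixed sequence of attacks $v_1, \ldots, v_t$.

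The key observation is that when Attacker plays $v \in V(G_i)$, Defender's response under $\mathcal{S}^*$ depends only on the current configuration of guards on $G_i$. Indeed, the absence of edges between $G_1$ and $G_2$ forces every candidate neighbor of $v$ to lie in $G_i$, and the tie-breaking rule (deepest $\lca$ in the cotree of $G$, then the order $\leq_G$) restricted to $V(G_i)$ coincides with the analogous rule defined from the cotree of $G_i$ alone with the induced order. Hence the plays on $G_1$ and $G_2$ evolve entirely independently, and I can reorder the attack sequence so that all attacks on $G_1$ come first, then all attacks on $G_2$, without altering the intermediate configurations on either side nor the number of reservists placed on each side.

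After this reordering, let $t_1$ and $t_2 = t - t_1$ denote the number of attacks on $G_1$ and $G_2$ respectively, and let $r_1$ and $r_2 = r - r_1$ be the corresponding numbers of reservists spent on each side. The restriction of the play to its $G_1$-phase is a valid attack sequence against $\mathcal{S}^*$ on $(G_1, \guards_1, r_1)$, and by definition of $r_1$ the $r_1$-th reservist is placed by the end of that phase; therefore Attacker wins on $(G_1, \guards_1, r_1)$ in at most $t_1$ turns, giving $t^*_{G_1}(\guards_1, r_1) \leq t_1$, and symmetrically $t^*_{G_2}(\guards_2, r_2) \leq t_2$. Choosing $i = r_1 \in \{0, \ldots, r\}$ then yields
\[\min_{0 \leq i \leq r}\bigl(t^*_{G_1}(\guards_1, i) + t^*_{G_2}(\guards_2, r-i)\bigr) \leq t^*_{G_1}(\guards_1, r_1) + t^*_{G_2}(\guards_2, r_2) \leq t_1 + t_2 = t,\]
which is exactly the desired inequality. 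The only subtle step is the reordering, which requires checking carefully that $\mathcal{S}^*$ restricted to a connected component agrees with the corresponding strategy on that component alone; once this independence is established, everything else is bookkeeping.
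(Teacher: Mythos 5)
Your proof is correct and follows essentially the same route as the paper: both arguments fix a shortest attack sequence against the deterministic strategy $\mathcal{S}^*$, split it into the subsequences played on $G_1$ and $G_2$ using the fact that $\mathcal{S}^*$'s responses depend only on the component of the attacked vertex, and sum the resulting lower bounds. Your explicit check that the tie-breaking rule restricted to a component agrees with the rule on that component alone is a detail the paper leaves implicit, but otherwise the two proofs coincide.
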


\begin{lemma}
\label{lem:join-opt}
Let $G_1,G_2$ be two cographs, $r\geqslant 0$ and $\guards_1,\guards_2$ be sets of guards on $G_1$ and $G_2$. We have:
\[t^*_{G_1\Join G_2}(\guards_1\cup \guards_2,r) \geq \min \{ t^*_{G_1}(\guards_1,r+|\guards_2|), t^*_{G_2}(\guards_2,r+|\guards_1|)\}.\]
\end{lemma}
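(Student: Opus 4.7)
The plan is to prove the inequality by induction on $T := t^*_{G_1 \Join G_2}(\guards_1 \cup \guards_2, r)$. I also work inside an outer induction on the size of the cotree of $G_1 \Join G_2$, which provides the conclusion of Proposition~\ref{lem:union-and-join} (in particular the equality $t^*_H = t_H$) for every strictly smaller cograph $H$. Set $T_i := t^*_{G_i}(\guards_i, r + |\guards_{3-i}|)$. Consider Attacker's first attack $v$ in an optimal strategy against $\mathcal{S}^*$; by symmetry assume $v \in V(G_1)$, and let $u = \mathcal{S}^*(v)$. Three subcases arise: $u \in \guards_1$ (same-side guard), $u \in \guards_2$ (cross-side guard), or $\mathcal{S}^*$ is forced to use a reservist.

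The same-side case and the reservist case both reduce cleanly to the inner induction hypothesis. The key observation is that in the cotree of $G_1 \Join G_2$, the cotrees of $G_1$ and $G_2$ sit as sibling subtrees under the root join-node, so the deepest common ancestor of $v$ with a $G_1$-guard is the same in the join cotree and in the $G_1$-cotree alone. Consequently, attacking $v$ in the game $(G_1, \guards_1, r + |\guards_2|)$ against $\mathcal{S}^*$ triggers the same response $u$ (or a reservist) and leads to the same reduced state on the $G_1$-side. Applying the inner IH to the next state, together with $T_1 \le 1 + t^*_{G_1}(\text{next state})$, gives $T \geq \min\{T_1, T_2\}$ via a short case split on which branch of the inner min is active.

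The main obstacle is the cross-side case $u \in \guards_2$, where $v$ has no $G_1$-guard-neighbor. The inner IH on the reduced state gives $T - 1 \geq \min\{A, B\}$ with $A := t^*_{G_1}(\guards_1 \cup \{v\}, r + |\guards_2| - 1)$ and $B := t^*_{G_2}(\guards_2 \setminus \{u\}, r + |\guards_1| + 1)$. The $A$-side is handled as before via $T_1 \leq 1 + A$. But $B$ is computed on $G_2$ with one fewer guard and one more reservist than $T_2$, on the same total-token count, and I must show $B \geq T_2$. This is where the main work lies.

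To establish $B \geq T_2$, I will prove the following natural monotonicity sub-claim in full generality: for any graph $H$, any $u \in \guards_H$, and any $r' \geq 0$,
\[
t_H(\guards_H \setminus \{u\}, r' + 1) \geq t_H(\guards_H, r').
\]
The sub-claim is proved by a Defender-simulation. Starting from an optimal Defender strategy $\tau^*$ for $(H, \guards_H, r')$, Defender in the richer game $(H, \guards_H \setminus \{u\}, r' + 1)$ maintains a ``virtual'' copy of the guard-heavy game by treating one of her reservists as a pre-placed guard at $u$; she plays $\tau^*$'s moves literally, except that whenever $\tau^*$ prescribes ``move the guard on $u$ to some $v$'', she instead uses a reservist to place a guard at $v$. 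This single substitution collapses the virtual and real states into one shared state, after which the two games evolve identically; if $\tau^*$ never moves $u$, the real game strictly outlasts the virtual one thanks to an unused reservist. In either case, Defender in the real game resists at least as long as $\tau^*$ does in the virtual one, proving the sub-claim. Applying it to $G_2$ and invoking the outer IH $t^*_{G_2} = t_{G_2}$ translates it to $B \geq T_2$, which together with the analysis of $A$ yields $T \geq \min\{T_1, T_2\}$.
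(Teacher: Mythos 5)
Your argument has the same skeleton as the paper's: an induction on $t^*$ with a case split on how $\mathcal{S}^*$ answers the first attack, where the same-side and reservist cases project directly onto the game on $G_1$, and the cross-side case (Defender pulls a guard from $\guards_2$) is reduced to the monotonicity statement ``trading a placed guard for an extra reservist does not help Attacker''. The paper packages this induction as Lemma~\ref{lemma-part-of-join} (Attacker has an optimal strategy against $\mathcal{S}^*$ confined to one side), from which Lemma~\ref{lem:join-opt} follows via Observation~\ref{join=reservists}; your version proves the inequality directly, which amounts to the same thing. Where you genuinely diverge is the proof of the monotonicity sub-claim. The paper's Lemma~\ref{reservists>guards} establishes it for $t^*$ by an induction on $t^*$ that case-analyses what $\mathcal{S}^*$ does with the guard on $x$. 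You prove the more general statement for the unrestricted value $t_H$ (optimal Defender) by a Defender-simulation, and transfer it to $t^*$ using $t^*_{G_2}=t_{G_2}$ from an outer induction on the cotree. Your route is cleaner and yields a statement of independent interest, but it makes Lemma~\ref{lem:join-opt} depend on the induction of Proposition~\ref{lem:union-and-join}; this is logically sound since $G_2$ is a strictly smaller cograph, but the mutual induction must be set up explicitly to dispel any appearance of circularity, whereas the paper's lemma is self-contained.

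Two points need attention. First, in the simulation Attacker may attack the vertex $u$ itself: it is unguarded in the real game $(H,\guards_H\setminus\{u\},r'+1)$ but guarded in the virtual game, so $\tau^*$ prescribes no answer. The fix is immediate (Defender spends her surplus reservist to place a guard on $u$, which synchronises the two states at the cost of one extra turn for Attacker), but your description omits this case and it should be stated. Second, your induction on $T$ needs a base case at $T=0$, i.e.\ $r=0$, where the inequality as literally stated can fail (take $G_1=G_2=K_1$ both guarded: the left side is $0$ by convention while both terms on the right are $+\infty$ by Observation~\ref{obs:reservist-onevertex}). This defect is inherited from the paper's own statement — its Lemma~\ref{lemma-part-of-join} quietly assumes $r\geq 1$, and the join recursion only ever increases $r$ — so it suffices to restrict to $r\geq 1$, but your write-up should say so.
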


Before proving these results, let us show how to use them to conclude the proof of Proposition~\ref{lem:union-and-join}.

\begin{proof}[Proof of Proposition~\ref{lem:union-and-join}]
We prove by induction on cographs that every cograph $G$ satisfies $t_G(\guards, r) = t_G^*(\guards, r)$ for any set of guards $\guards$ and $r\geqslant 0$. The base case is already provided by Observation~\ref{obs:reservist-onevertex} and Observation~\ref{obs:reservist-onevertex-opt}.

Now, let $G$ be a cograph defined as the union of two cographs $G_1$ and $G_2$. Let $r\geqslant 0$ and $\guards$ be a set of guards, and $\guards_i=\guards\cap V(G_i)$ for $i=1,2$. Applying successively Lemma~\ref{lem:union-inf}, the induction hypothesis on $G_1,G_2$ and Lemma~\ref{lem:union-opt}, we get 
\begin{align*}
t_G(\guards,r) &\leq \min_{0 \leqslant i \leqslant r} (t_{G_1}(\guards_1,i)+ t_{G_2}(\guards_2,r-i))\\
&=\min_{0 \leqslant i \leqslant r} (t^*_{G_1}(\guards_1,i)+ t^*_{G_2}(\guards_2,r-i))\leq t^*_G(\guards, r)
\end{align*}
This concludes using Observation~\ref{obs:opt-inf}. The case of joins is similar, we only use Lemmas~\ref{lem:join-inf} and~\ref{lem:join-opt} instead. 
\end{proof}

It thus remains to prove Lemmas~\ref{lem:union-opt} and~\ref{lem:join-opt}. We start with the case of unions.

\begin{proof}[Proof of Lemma~\ref{lem:union-opt}]
Let $G=G_1\cup G_2$, $\guards=\guards_1\cup \guards_2$ and $A$ be a winning strategy in $t^*_G(\guards, r)$ turns on $(G, \guards, r)$ against the strategy $\mathcal{S}^*$. Since the strategy of Defender is prescribed, we may assume that $A$ is just a sequence of moves.

Let $A_1$ (resp. $A_2$) be the subsequence of attacks of $A$ played on $G_1$ (resp. $G_2$) and $r_1$ (resp. $r_2$) be the number of reservists called in $G_1$ (resp. $G_2$). Denote by $a_1$ (resp. $a_2$) the length of $A_1$ (resp. $a_2$), so that $r_1+r_2=r$ and $a_1+a_2=t^*_G(\guards, r)$. Since the strategy  $\mathcal{S}^*$ of Defender only depends on the connected component of the attacked vertex, for $i=1,2$, $A_i$ is a sequence of attacks on $(G_i, \guards_i)$ that moves $r_i$ reservists when Defender plays according to $\mathcal{S}^*$.  In particular, $a_i\geq t^*_{G_i}(\guards_i,r_i)$. 

Consequently, $t^*_G(D, r) = a_1+a_2 \geq t^*_{G_1}(\guards_1,r_1)+ t^*_{G_2}(\guards_2,r_2)$, which concludes.
\end{proof}

We end this section with the case of joins. Lemma~\ref{lem:join-opt} relies on the following result.

\begin{lemma}\label{lemma-part-of-join}
Let $G=G_1\Join  G_2$ be a cograph, $\guards$ be a set of guards and $r\geq 1$ be an integer.
Then, Attacker can win on $(G, \guards, r)$ in $t^*_G(\guards, r)$ steps against the strategy $\mathcal{S^*}$ by only playing on $G_1$ or by only playing on $G_2$.
\end{lemma}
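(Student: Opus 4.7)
The plan is to proceed by strong induction on $t^*_G(\guards, r)$. The base case $t^*_G = 0$ is immediate. For the inductive step, I would fix an optimal Attacker strategy $A$ against $\mathcal{S}^*$ of length $t \geq 1$ and assume by symmetry that its first attack $v$ lies in $G_1$. Defender's response via $\mathcal{S}^*$ (a same-side $G_1$-guard, a cross-defense $G_2$-guard, or a reservist) produces a new state $(\guards', r')$ with $t^*_G(\guards', r') \leq t - 1$. By the induction hypothesis, there is a one-sided winning strategy $B$ on $(G, \guards', r')$ of length at most $t - 1$, and the argument splits based on which side $B$ plays.

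If $B$ is $G_1$-only, I would show that prepending $v$ to $B$ gives a valid $G_1$-only strategy on $(G, \guards, r)$ of length at most $t$. The key is that the $G_1$-only subgame played on $(G, \guards, r)$ against $\mathcal{S}^*$ is equivalent to the game on $(G_1, \guards_1, r + |\guards_2|)$: in this subgame, attacking $v$ leaves exactly the state $(G_1, \guards'_1, r' + |\guards'_2|)$, which is the $G_1$-projection of $(\guards', r')$ from which $B$ continues. This equivalence is established by a short case analysis on Defender's original response, noting that $\mathcal{S}^*$ in $G$ uses a $G_2$-guard or reservist to defend a $G_1$-attack exactly when no $G_1$-guard neighbor is adjacent in $G_1$, which is precisely when the $G_1$-subgame calls a reservist.

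If $B$ is $G_2$-only, then $s_2^{\mathrm{new}} := t^*_{G_2}(\guards'_2, r' + |\guards'_1|) \leq t - 1$, and my goal is to bound $s_2 := t^*_{G_2}(\guards_2, r + |\guards_1|)$ above by $t$, which by the $t^*$-analog of Lemma~\ref{lem:join-inf} yields a $G_2$-only strategy on $(G, \guards, r)$ of length $\leq t$. When Defender's response is a same-side $G_1$-guard or a reservist, the parameters $\guards'_2 = \guards_2$ and $r' + |\guards'_1| = r + |\guards_1|$ are unchanged, giving $s_2 = s_2^{\mathrm{new}} \leq t - 1$ immediately. The delicate sub-case is cross-defense, where Defender picks a $G_2$-guard $u$ to defend $v \in G_1$: then $\guards'_2 = \guards_2 \setminus \{u\}$ and $r' + |\guards'_1| = r + |\guards_1| + 1$, effectively swapping one guard for one reservist in the $G_2$-subgame parameters.

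To close this sub-case I would prove an auxiliary monotonicity lemma: for any cograph $H$, any $D \subseteq V(H)$, any $u \in D$, and any $r \geq 0$,
\[\bigl|\, t^*_H(D, r) - t^*_H(D \setminus \{u\},\, r+1)\,\bigr| \leq 1.\]
Applied to $G_2$, it gives $s_2 \leq s_2^{\mathrm{new}} + 1 \leq t$, completing the argument. The monotonicity lemma is itself proved by structural induction on $H$'s cotree, using the recursive formulas of Proposition~\ref{lem:union-and-join} (available for $H \in \{G_1, G_2\}$ since they are strict sub-cographs of $G$): the $H = K_1$ base is a direct check; the union case $H = H_a \cup H_b$ with $u \in H_a$ follows by shifting the minimizing reservist split by $\pm 1$ between the components and applying IH on $H_a$; and the join case $H = H_a \Join H_b$ reduces to comparing two minima whose differing terms are controlled by IH on $H_a$ via a short case analysis on which term achieves the minimum on each side. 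The main obstacle throughout is precisely the cross-defense sub-case, which the monotonicity lemma is designed to handle.
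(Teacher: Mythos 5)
Your skeleton is the same as the paper's: induction on $t^*=t^*_G(\guards,r)$, a case split on where the first attack lands and on how $\mathcal{S}^*$ answers it, the easy sub-cases (same-side guard, reservist, or an inductive strategy on the attacked side) handled by direct bookkeeping, and the cross-defense sub-case reduced to a ``placed guard versus reservist'' comparison. The genuine divergence is that comparison lemma. The paper proves the \emph{one-sided} inequality $t^*_H(D,\rho)\le t^*_H(D\setminus\{u\},\rho+1)$ (Lemma~\ref{reservists>guards}) --- a reservist is at least as useful to Defender as a guard parked at $u$ --- by a self-contained induction on the game length with a case analysis on $\mathcal{S}^*$'s reply; this gives $s_2\le s_2^{\mathrm{new}}\le t-1$ directly, and the paper then closes by contradiction with the minimality of $t^*$. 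Your two-sided $\pm1$ bound is weaker in the direction you actually use, but $s_2\le s_2^{\mathrm{new}}+1\le t$ still suffices, since a $G_2$-only win in at most $t$ turns against $\mathcal{S}^*$ is exactly what the statement asks for. The real cost of your route is how you propose to prove the monotonicity lemma: by structural induction on the cotree via the formulas of Proposition~\ref{lem:union-and-join}. Those formulas for $G_1$ and $G_2$ are themselves derived (through Lemma~\ref{lem:join-opt}) from the very statement you are proving, applied to sub-cographs, so you are committed to a simultaneous induction on the cotree in which monotonicity, Lemma~\ref{lemma-part-of-join} and Proposition~\ref{lem:union-and-join} are all established together at each node. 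This is not circular, but it is delicate bookkeeping that you only gesture at; moreover your union case needs, at the boundary of the minimization over reservist splits, an additional monotonicity of $t^*_H(D,\cdot)$ in the number of reservists alone, and the value $+\infty$ must be handled explicitly. The paper's one-sided lemma, proved for an arbitrary board by a direct game induction, sidesteps all of this, which is why it is the cleaner choice; your argument is repairable but strictly more work for a weaker intermediate bound.
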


Lemma~\ref{lemma-part-of-join} states that (up to exchanging $G_1$ and $G_2$) Attacker has a winning strategy in $t^*_{G_1\Join G_2}(\guards,r)$ turns on $(G_1\Join G_2,\guards,r)$ against the strategy $\mathcal{S}^*$ where he plays only on $G_1$. Since vertices of $G_2$ have the highest possible common ancestors with vertices of $G_1$ in the cotree of $G_1\Join G_2$, this is equivalent to playing on $(G_1,\guards_1,r+|\guards_2|)$ against the strategy $\mathcal{S}^*$ where $D_i=D\cap V(G_i)$ for $i=1,2$. In particular, $t^*_G(D,r) \geqslant t_{G_1}^*(\guards_1,r+|\guards_2|)$, and we get Lemma~\ref{lem:join-opt} by symmetry.

The proof of Lemma~\ref{lemma-part-of-join} is based on this trade-off between increasing the number of reservists and playing only on one side of a join. We first summarize this in the following observation.

\begin{observation}\label{join=reservists}
    Let $G_1$ and $G_2$ be two cographs, each one having a set of guards $\guards_1$ and $\guards_2$. Assume that Attacker wins in $k$ turns against $\mathcal{S}^*$ on $(G_1\Join G_2, \guards_1\cup\guards_2,r)$ by playing only on $G_2$. Then $k\geqslant t^*_{G_2}(\guards_2,r+|\guards_1|)$.
\end{observation}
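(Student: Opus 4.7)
The plan is a direct simulation: if $(v_1,\ldots,v_k)$ is an attack sequence confined to $V(G_2)$ that beats $\mathcal{S}^*$ on $(G_1\Join G_2, \guards_1\cup \guards_2, r)$, I would show that the very same sequence beats $\mathcal{S}^*$ on $(G_2, \guards_2, r+|\guards_1|)$ in at most $k$ turns, which directly yields $k\geqslant t^*_{G_2}(\guards_2, r+|\guards_1|)$. Before starting, I fix the arbitrary ordering on $V(G_2)$ to be the restriction of the ordering on $V(G_1\Join G_2)$, so that the tie-breaking rule in $\mathcal{S}^*$ is consistent across the two games.

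The core observation concerns how $\mathcal{S}^*$ on $G_1\Join G_2$ reacts to an attack at $v\in V(G_2)$. In the cotree of $G_1\Join G_2$, the ancestor $\lca(u,v)$ for any $u\in V(G_1)$ is the top join node, which is strictly shallower than $\lca(u',v)$ for any $u'\in V(G_2)$. Hence $\mathcal{S}^*$ first tries to move a guarded neighbor of $v$ lying in $V(G_2)$; failing that it moves some guard of $\guards_1$ (always possible because of the join); and only if no guard remains in $V(G_1)$ does it call a reservist. On $G_2$ alone, $\mathcal{S}^*$ moves the same guarded $V(G_2)$-neighbor when one exists, and calls a reservist otherwise.

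Building on this, I would prove by induction on the number of attacks the invariant that after each attack the guard sets restricted to $V(G_2)$ coincide in both games, and the \emph{reservoir} (the guards of $\guards_1$ still on $V(G_1)$ plus the remaining reservists in the first game, and the remaining reservists in the second game) has the same size in both. The base case is immediate. The inductive step splits into two cases exactly as above: either both games perform the same move inside $V(G_2)$, or the first game transfers a $G_1$-guard onto $v$ while the second places a reservist on $v$; in both subcases $v$ joins the $V(G_2)$-guard set and the reservoir decreases by one.

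Because $\mathcal{S}^*$ uses a reservist only when no guard is available, the first game exhausts all of $\guards_1$ before calling any reservist; Attacker thus wins the first game precisely when the reservoir hits zero, i.e.\ after exactly $|\guards_1|+r$ attacks that find no $V(G_2)$-guarded neighbor, which is exactly when Attacker wins the second game. The only delicate point is the cotree-depth comparison justifying $\mathcal{S}^*$'s preference of $V(G_2)$-guards over $V(G_1)$-guards; once this is in place, the rest is routine bookkeeping.
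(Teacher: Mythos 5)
Your simulation argument is correct and is exactly the idea the paper relies on: since for $v\in V(G_2)$ every $u\in V(G_1)$ has the shallowest possible $\lca(u,v)$, the strategy $\mathcal{S}^*$ treats the guards of $\guards_1$ as a last resort before reservists, so they behave precisely like $|\guards_1|$ extra reservists for the game on $G_2$. The paper states this as an observation with only a one-sentence justification, and your invariant (matching $V(G_2)$-guard sets plus equal reservoir sizes) is a faithful, more detailed write-up of that same argument.
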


Another observation is given below. It formalizes the intuition that defending using a reservist instead of a guard already in a graph does not slow down Attacker. 

\begin{lemma} \label{reservists>guards}
    Let $G$ be a graph, $\guards$ a set of guards on $G$ and $r \ge 1$. For every $x\in \guards$, we have $t^*_G(\guards,r) \le t^*_G(\guards\setminus\{x\},r+1)$. 
\end{lemma}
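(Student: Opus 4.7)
The plan is to prove the statement by induction on $k=t^*_G(\guards\setminus\{x\},r+1)$, which may be assumed finite (otherwise the bound is trivial). Call the two boards involved ``Game~1'' for $(G,\guards,r)$ and ``Game~2'' for $(G,\guards\setminus\{x\},r+1)$, and fix an Attacker strategy $A_2$ winning Game~2 in $k$ turns against $\mathcal{S}^*$. I will build a winning Attacker strategy in Game~1 of length at most $k$ by analysing the first attack $v$ prescribed by $A_2$, reducing each case to an instance on which the inductive hypothesis applies.

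If $v\neq x$, then $v\notin\guards$ so Attacker plays $v$ in Game~1 as well. The candidate sets considered by $\mathcal{S}^*$ in the two games coincide except possibly for the presence of $x$ (when $x\in N(v)$). A case analysis on whether $\mathcal{S}^*$ in Game~1 selects $x$ and whether $\mathcal{S}^*$ in Game~2 calls a reservist shows that after this first turn the two states either (a)~become equal, in which case Attacker just mimics $A_2$'s remainder and finishes within $k-1$ more turns; (b)~still satisfy $\guards'_1=\guards'_2\cup\{x\}$ with $r'_1=r'_2-1$; or (c)~satisfy $\guards'_1=\guards'_2\cup\{u_2\}$ with $r'_1=r'_2-1$ for the guard $u_2$ that $\mathcal{S}^*$ moves in Game~2. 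In both (b) and (c), applying the inductive hypothesis with the appropriate vertex playing the role of $x$ yields $t^*_G(\guards'_1,r'_1)\le k-1$, and hence $t^*_G(\guards,r)\le 1+(k-1)=k$. The only corner case is $r=1$ with both defenders calling a reservist, but then Game~1 already ends after a single turn.

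If $v=x$, then Attacker cannot attack $v$ in Game~1. I then inspect $\mathcal{S}^*$'s response in Game~2: if $\guards\cap N(x)=\emptyset$, Game~2 must use a reservist and its state becomes $(\guards,r)$, which is exactly Game~1's initial state, so Attacker just follows $A_2$'s continuation in Game~1 and wins in $k-1\le k$ turns. Otherwise $\mathcal{S}^*$ moves some $u_2\in\guards\cap N(x)$ to $x$, and Game~2 reaches $(\guards\setminus\{u_2\},r+1)$ with $k-1$ turns remaining; the inductive hypothesis, applied with $u_2$ in the role of $x$, then gives $t^*_G(\guards,r)\le t^*_G(\guards\setminus\{u_2\},r+1)\le k-1\le k$.

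The main obstacle is the bookkeeping in the $v\ne x$ case: one must verify that the deepest-$\lca$ tie-breaking rule of $\mathcal{S}^*$ behaves coherently when the single candidate $x$ is present or absent from the candidate set, and that the new special vertex produced in case (c) is indeed a legitimate input for the inductive hypothesis. These are routine checks relying only on the determinism of $\mathcal{S}^*$ and on the fact that the replacement vertex $u_2$ is always a genuine guard of the new instance.
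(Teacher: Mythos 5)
Your proof is correct and follows essentially the same route as the paper's: induction on $t^*_G(\guards\setminus\{x\},r+1)$, comparing the two parallel games after the first attack and reducing to the inductive hypothesis with either $x$ or the guard moved in the weaker game serving as the new distinguished vertex. You are in fact slightly more careful than the paper, which does not explicitly treat the case where the optimal first attack is $x$ itself (illegal in $(G,\guards,r)$ since $x$ is guarded there).
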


\begin{proof}[Proof of Lemma~\ref{reservists>guards}]
We proceed by induction on $t^* = t^*_G(\guards\setminus\{x\},r+1)$. 
If $t^*=1$, then we must have $r=0$ and in that case $t_G^*(D,0)=0$, which concludes.

Assume now that $t^* \ge 2$. Let $(D,r)$ be the current configuration and $x\in D$. We denote by $P_1 = (D,r)$ and $P_2 = (D\setminus \{x\}, r+1)$. Let $u$ be the first vertex played by Attacker in $P_2$, and denote by $P'_2$ the configuration obtained after Defender answered using $\mathcal{S}^*$. Similarly, denote by $P'_1$ the configuration obtained after Defender used $\mathcal{S}^*$ to defend $u$ in $P_1$.  By definition, observe that $t^*=t_G^*(P_2)=1+t_G^*(P'_2)$ and that $t_G^*(P_1)\leqslant 1+t_G^*(P'_1)$. Consider the following cases:

\begin{itemize}

\item Assume that Defender moves the guard from $x$ to defend $u$ in $P_1$, and that $u$ has no guarded neighbor in $P_2$. So Defender uses a reservist in $P_2$ and then the resulting instances $P'_1$ and $P'_2$ are $(D \cup \{u\} \setminus \{x\},r)$ so the conclusion follows immediately. 
    \item Assume that Defender moves the guard from $x$ to defend $u$ in $P_1$ and a guard from a vertex $v$ to defend $u$ in $P_2$. Then $P'_1=(D\setminus\{x\} \cup \{u\}, r)$ and $P'_2=(D\setminus\{x,v\} \cup \{u\},r+1)$. Applying the induction hypothesis with $D$ replaced by $D\setminus\{x\}\cup\{u\}$ yields $t^*_G(P'_1)\leqslant t^*_G(P'_2)$ and we get \[t_G^*(P_1)\leqslant 1+t^*_G(P'_1)\leqslant 1+t^*_G(P'_2) =t^*.\]

    \item Otherwise, Defender defends with strategy $\mathcal{S}^*$ against the attack on $u$ in $P_1$ without moving the guard from vertex $x$. In this case, the strategy $\mathcal{S}^*$ performs the same move in $P_2$, and we may conclude applying induction as in the previous item. \qedhere 
\end{itemize}    
\end{proof}

We are now ready to conclude the proof of Lemma~\ref{lemma-part-of-join}.

\begin{proof}[Proof of Lemma~\ref{lemma-part-of-join}]

We proceed by induction on $t^* = t^*_G(\guards , r)$. If $t^*=1$, the result follows since Attacker wins against $\mathcal{S}^*$ in a single move.

Assume now that $t^* \ge 2$, and let $u$ be the first vertex attacked by Attacker in a winning strategy in $t^*$ turns against $\mathcal{S}^*$. Without loss of generality, assume that $u$ is in $G_1$ and let $v$ be the answer from Defender according to $\mathcal{S}^*$. On the obtained instance $(G, \guards \setminus\{v\} \cup \{u\},r)$ (or $(G, \guards \cup \{u\},r-1)$ if a reservist has been used), Attacker has a winning strategy in $t^*-1$ turns when Defender uses $\mathcal{S}^*$. So, by induction, Attacker has a winning strategy against $\mathcal{S}^*$ playing only in $G_1$ or in $G_2$. We can assume that it is in $G_2$ (since otherwise we are done), and we may thus apply Observation~\ref{join=reservists}.

Setting $\guards_i=\guards\cap V(G_i)$, we consider three cases depending on $v$:
\begin{itemize}
    \item If $v$ is a reservist, then $t^*-1\geqslant t^*_{G_2}((\guards \cup\{u\})\cap V(G_2), r-1+|(\guards\cup\{u\})\cap V(G_1)|)=t^*_{G_2}(\guards_2,r+|\guards_1|)$.
    \item If $v$ is in $G_1$, we similarly get $t^*-1\geqslant t^*_{G_2}(\guards_2,r+|\guards_1|)$. 
\item Otherwise $v$ is in $G_2$ and we get $t^*-1\geqslant t^*_{G_2}(\guards_2\setminus\{v\}, r+|\guards_1|+1)$. By Lemma~\ref{reservists>guards}, this is at least $t^*_{G_2}(\guards_2,r+|\guards_1|)$. 
\end{itemize}

In each case, there is a winning strategy for Attacker on $(G_2,\guards_2,r+|\guards_1|)$ against $\mathcal{S}^*$ in at most $t^*-1$ turns. But this is also a strategy for Attacker on $(G,\guards,r)$ against $\mathcal{S}^*$ where he plays only on $G_2$. This is a contradiction by definition of $t^*$, which concludes the proof. 
\end{proof}

\section{Parameterized complexity}
\label{sec:param}
In this section, we study the parameterized complexity of $\kED$.
We consider two parameters: $t$ the number of turns and $g$ the number of guards. We first consider generic results, then study restricted classes of graphs. 

\subsection{Generic graphs}

On generic graphs, we get the two following results.
\begin{theorem}\label{ked-in-exptime}
$\kED$ parameterized by $g$ is in \XP.
\end{theorem}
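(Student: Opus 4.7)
The plan is to perform a bottom-up dynamic programming over the set of all game configurations. A \emph{configuration} here is a subset $S \subseteq V$ with $|S| = g$, and there are at most $\binom{n}{g} \leq n^g$ of them, which is what makes an XP algorithm possible when $g$ is the parameter.

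For each configuration $S$ and each integer $k \geq 0$, let $T_k(S)$ be the boolean indicating whether Attacker can win in at most $k$ turns starting from $(G, S)$. Setting $T_0 \equiv \text{false}$, I would use the recurrence
\[T_{k+1}(S) \;=\; \bigvee_{v \in V \setminus S} \;\bigwedge_{u \in N(v) \cap S} T_k\bigl(S \cup \{v\} \setminus \{u\}\bigr),\]
with the convention that an empty conjunction is true: this accounts for the case where $v$ has no guarded neighbor, which already produces an immediate win for Attacker. A single update $T_k \to T_{k+1}$ requires iterating, for each of the $O(n^g)$ configurations, over every attack vertex $v$ and every possible defender response $u$; it costs $O(n^{g+2})$.

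The key observation is that $T_k$ is monotone in $k$: a winning strategy in at most $k$ turns is also a winning strategy in at most $k+1$ turns. Hence the sequence $(T_k)_k$ stabilizes after at most $N := \binom{n}{g} + 1$ iterations, since each non-stationary step strictly enlarges the set of configurations on which $T$ equals true. At the fixed point, $T_N(S)$ is true if and only if $t_G(S) < +\infty$, and more generally $t_G(S) = \min\{k : T_k(S) = \text{true}\}$ (with the usual convention that $\min \emptyset = +\infty$); this can be verified by an easy induction on $k$ matching the definition of the game.

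To decide whether $t_G(\guards) \leq t$, I would iteratively compute $T_1, T_2, \ldots$ up to index $k = \min(t, N)$ and return yes iff $T_k(\guards)$ ever becomes true within this range. The total running time is $O(n^{2g+2}) \cdot \mathrm{poly}(n) = n^{O(g)}$, which witnesses membership of $\kED$ in \XP{} under the parameter $g$. There is no real obstacle here; the only point that needs care is the monotonicity-plus-fixed-point argument capping the number of iterations at $O(n^g)$, which ensures that we never have to unfold the recurrence up to a potentially huge value of $t$.
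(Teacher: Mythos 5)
Your proposal is correct and is essentially the paper's own argument: both perform backward induction (value iteration) over the $O(n^g)$ configurations of guards, with the same min-over-attacks/max-over-defenses recurrence and the same $O(n^{2g+2})$-type bound; your monotone-fixed-point argument capping the number of iterations at $\binom{n}{g}+1$ plays exactly the role of the paper's observation that each labeling pass settles at least one new vertex of the auxiliary game graph. The only cosmetic difference is that the paper materializes the game graph with explicit Attacker/Defender nodes and labels each configuration with its exact value $t_G(D)$, whereas you keep a Boolean table indexed by the round number and read off $t_G(D)$ as the first round where it becomes true.
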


\begin{proof}
Let $G = (V, E)$ be a graph. We will compute all the values $t_G(\guards)$ for each subset $\guards\subset V$ of size $g$ in time $O(n^{2g+2})$. 

We first build an auxiliary directed graph $\mathcal{G}$ that models the moves available for each player. For every subset $\guards$ of $g$ vertices of $G$, and for every vertex $v\in V$, $\mathcal{G}$ has a vertex labeled by $\guards$, and one labeled by $(\guards,v)$. Moreover, $\mathcal{G}$ contains the arcs:
\begin{itemize}
    \item $\guards\to(\guards,v)$ for every set $\guards$ and $v\in V$, and
    \item $(\guards,v)\to (\guards\cup\{v\})\setminus \{u\}$ for every set $\guards$, $v\in V\setminus\guards$ and $u\in \guards\cap N(v)$. 
\end{itemize} 

We now label all vertices $(\guards, v)$ such that $D \cap N(v) = \emptyset$ with 0. Then we apply the following rules while it is possible.
\begin{itemize}
\item We label each non-labeled vertex $\guards$ with $\ell+1$ where $\ell$ is the minimum label of its out-neighbors (if at least one such out-neighbor is labeled). 
\item If every out-neighbor of an non-labeled vertex $(\guards, v)$ is labeled, we label it with the maximum label of its out-neighbors.
\end{itemize}
At the end, all non-labeled configurations are labeled $+\infty$.

Note that each rule can be applied in linear time with respect to $\mathcal{G}$. Moreover, each application of these rules labels at least one new vertex (and these labels do not change afterward), hence the total procedure is at most quadratic in $\mathcal{G}$. Since $\mathcal{G}$ has $O(n^{g+1})$ vertices and $O(n^{g+1})$ arcs, this yields an \textsf{XP} algorithm. 

We now claim that afterward, each vertex $\guards$ is labeled with $t_G(\guards)$. Indeed, if $\guards$ is labeled with $+\infty$, then so are all its out-neighbors and in particular, for every $v\notin\guards$, $(\guards,v)$ has an out-neighbor $(\guards\cup\{u\})\setminus v$ labeled with $+\infty$. Therefore, Defender can eternally defend by answering to Attacker in such a way the set of guards stays labeled with $+\infty$. 

Similarly, one can easily see by induction on $\ell$ that if a set $\guards$ of guards is labeled with $\ell$, then Attacker wins in at most $\ell$ turns and Defender can resist to $\ell-1$ attacks on $\guards$, hence $t_G(\guards)=\ell$. The main point of the induction is that at any moment of the game, Attacker has a move available from a position $D$ to a position $(D,v)$ which decreases the label by one, while Defender has a move available from a position $(D,v)$ to a position $D'$ which has the same label.
\end{proof}

\begin{theorem}\label{w1-k-g}
$\kED$ is $\W[1]$-hard when parameterized by $t+g$.
\end{theorem}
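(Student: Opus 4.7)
The plan is to reduce from the Multicolored Clique problem, which is classically W[1]-hard parameterized by the clique size $k$. Given an instance $(G, V_1, \ldots, V_k)$ of MCC, I would construct in polynomial time an instance $(G', D, t)$ of $\kED$ with $t + |D| = \mathrm{poly}(k)$ such that Attacker wins in at most $t$ turns on $(G', D)$ if and only if $G$ does not contain a multicolored $k$-clique. Since Multicolored Clique with parameter $k$ maps to $\kED$ with parameter $t+|D|$ bounded by a computable function of $k$, this is a parameterized reduction and yields the desired hardness.

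The construction would adapt the two-phase template of the bipartite \coNP-hardness reduction from Theorem~\ref{conp-hard-bipartite}. First, I would install $k$ \emph{selector} vertices $s_1, \ldots, s_k$ (one per color class), each connected to guard positions encoding the vertices of the corresponding $V_i$. Attacker's first $k$ moves attack $s_1, \ldots, s_k$ in sequence, and each response by Defender amounts to committing to a choice $c_i \in V_i$ via the guard moved onto $s_i$. Second, for every pair $(i, j)$ I would install a \emph{verification gadget} with a $K_{k+1,k}$-like structure whose defenders include the guards representing $V_i \cup V_j$ in the selection gadgets: if $c_i c_j \notin E(G)$, Attacker wins by attacking the $k+1$ "challenge" vertices of this gadget in sequence, exhausting the $k$ remaining reserves exactly as in Lemma~\ref{lem:red-conp}. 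Correctness would then mirror the bipartite setting: a multicolored clique yields a surviving strategy for Defender, while in its absence Attacker wins in $O(k^2)$ moves by first performing the $k$ selection attacks and then exploiting the faulty pair's verification gadget.

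The main obstacle is bounding $|D|$ by a function of $k$ alone. A literal adaptation of the bipartite reduction puts one guard per vertex of $V$ and one per edge-gadget, yielding $|D| = \Omega(|V| + k|E|)$, which is not a parameterized reduction. To fix this, the selection gadgets must encode a choice from a potentially large $V_i$ using only $O(k^{O(1)})$ guards. A naive binary-decision gadget of depth $\log|V_i|$ would blow $t$ up to $\Theta(k \log n)$, which is still not FPT in $k$. The delicate step — and the part I expect to require the bulk of the combinatorial work — is to design a compact gadget in which a single pool of $O(k^{O(1)})$ guards simultaneously (i) forces a $|V_i|$-way commitment via Defender's responses in $O(k)$ attacks, and (ii) serves as the reserves for the verification gadgets in such a way that Attacker can still localize a "bad pair" $(i,j)$ with $c_i c_j \notin E$ and exhaust its reserves in $O(k)$ further attacks. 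Striking this balance, likely via heavy sharing of guards between selection and verification, is where the real content of the proof should lie.
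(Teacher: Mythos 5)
There are two genuine problems here. First, the direction of your reduction is wrong for the statement being proved. You map yes-instances of \textsc{Multicolored Clique} (a clique exists) to instances where Attacker does \emph{not} win in $t$ turns, i.e.\ to no-instances of $\kED$. An FPT reduction must send yes-instances to yes-instances, so what your construction would establish (if completed) is that co-$\kED$ is $\W[1]$-hard, i.e.\ that $\kED$ is co-$\W[1]$-hard. The paper is careful about this distinction: Theorem~\ref{w1-k-bipartite} is the co-$\W[1]$-hardness statement (obtained exactly by the complement-style reduction you are emulating from Theorem~\ref{conp-hard-bipartite}), whereas Theorem~\ref{w1-k-g} asserts genuine $\W[1]$-hardness, which requires a reduction in which a positive structure in $G$ gives Attacker a fast win. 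Second, even setting the direction aside, the proposal is not a proof: you explicitly leave open the construction of a selection gadget that commits Defender to one of $|V_i|$ choices using only $\mathrm{poly}(k)$ guards and $O(k)$ attacks, and you correctly observe that the literal adaptation of the bipartite reduction has $|D|=\Omega(|V|+k|E|)$, which is not an FPT reduction. That gadget is the entire combinatorial content, and it is missing.

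The paper's actual proof is much simpler and goes in the correct direction: reduce from \textsc{Independent-Set} with parameter $k$ by adding to $G$ a set $D$ of $k-1$ new guarded vertices, each adjacent to all of $V(G)$, and setting $t=k$. If $G$ has an independent set $I$ of size $k$, Attacker attacks the vertices of $I$ one by one; since $I$ is independent, no guard that has been moved onto an attacked vertex can ever defend a later attack, so each attack consumes a fresh guard from $D$ and Defender fails at the $k$-th attack. Conversely, a $k$-turn win forces the attacked vertices to be independent. Here $t+g=2k-1$, so this is an FPT reduction from a $\W[1]$-complete problem, with no need for selection or verification gadgets at all.
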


\begin{proof}
Let $(G, k)$ be an instance of $\textsc{Independent-Set}$.
We build an instance of \textsc{Fast-Strategy} as follows. Let $G'$ be the graph obtained from $G$ by adding a set $\guards$ of $k-1$ guarded vertices $u_1 \ldots, u_{k-1}$, each of them connected to all vertices of $G$. We claim that $(G, k)$ is a positive instance of $\textsc{Independent-Set}$ if and only if $(G', \guards, k)$ is a positive instance of $\kED$.

If $G$ admits an independent set $I$ of size $k$, then Attacker wins in $k$ turns by attacking successively the vertices in $I$. Indeed, at each turn, Defender has to move a guard from $\guards$ (she cannot move a guard twice since $I$ is independent). Therefore, she loses at the $k$-th turn since no guard is available in $\guards$.

Conversely, assume that Attacker can win in $k$ turns. If Defender manages to move the same guard twice during the game, then either she can answer the last attack with a guard in $G$, or one guard is still on $\guards$ at the beginning of the $k$-th turn, and she can use it to answer the last attack, a contradiction. Therefore, the $k$ vertices played by Attacker must induce an independent set of $G$, which concludes.
\end{proof}
\subsection{Bipartite graphs}

It is easily seen that the reduction in Theorem \ref{conp-hard-bipartite} is an \FPT-reduction when parameterized by $t$ (but not by $g$). Since \textsc{Independent-Set} is $W[1]$-complete when parameterized by the size of the independent set, we obtain the following result.

\begin{theorem}\label{w1-k-bipartite}
$\kED$ parameterized by $t$ is co-$W[1]$-hard on bipartite graphs.
\end{theorem}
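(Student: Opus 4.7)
The plan is to reuse the polynomial-time reduction from Theorem~\ref{conp-hard-bipartite}, which takes an instance $(G,k)$ of \textsc{Independent-Set} and produces a bipartite instance $(G',\guards,t)$ of co-$\kED$ with $t=2k+1$. The only thing to verify is that this reduction is actually an FPT-reduction when the target is parameterized by $t$: the construction is polynomial in $|G|$, and crucially the target parameter $t=2k+1$ depends only (and linearly) on the source parameter $k$, as required by the definition of FPT-reductions.

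Second, I would invoke Lemma~\ref{lem:red-conp}, which states that $G$ has no independent set of size $k$ if and only if Attacker wins in at most $t$ turns on $(G',\guards)$. Equivalently, $(G,k)$ is a YES-instance of the standard parameterization of \textsc{Independent-Set} (asking whether $\alpha(G) \ge k$) if and only if $(G',\guards,t)$ is a NO-instance of $\kED$. Hence this is an FPT-reduction from \textsc{Independent-Set} parameterized by $k$ to co-$\kED$ on bipartite graphs parameterized by $t$.

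Finally, since \textsc{Independent-Set} parameterized by the size of the solution is $W[1]$-complete, this FPT-reduction witnesses that the complement of $\kED$ on bipartite graphs parameterized by $t$ is $W[1]$-hard, which is exactly the statement that $\kED$ parameterized by $t$ is co-$W[1]$-hard on bipartite graphs.

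There is essentially no obstacle here: the reduction, its correctness, and its polynomial running time are already established in Section~\ref{sec:bip}, and the only new ingredient is the (immediate) observation $t = 2k+1$. It is worth emphasizing, as the paper already does, that this reduction is \emph{not} an FPT-reduction when the target parameter is the number of guards $g$: in the construction $g$ equals $n + k\cdot|E(G)|$, which depends on the size of the source instance and not merely on $k$. Thus this approach cleanly separates the two parameterizations and gives hardness only for the parameter $t$.
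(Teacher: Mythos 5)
Your proposal is correct and is essentially identical to the paper's argument: the paper likewise observes that the reduction from Theorem~\ref{conp-hard-bipartite} is an FPT-reduction for the parameter $t$ (since $t=2k+1$) but not for $g$, and concludes via the $W[1]$-completeness of \textsc{Independent-Set}. Nothing is missing.
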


The behavior of $\kED$ becomes quite different when parameterized by $g$ instead, as shown by the following.

\begin{theorem}\label{fpt-g-bipartite}
$\kED$ parameterized by $g$ is \FPT{} on bipartite graphs. 
\end{theorem}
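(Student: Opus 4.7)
The plan is to develop a neighborhood-type kernelization and combine it with the \XP{} algorithm of Theorem~\ref{ked-in-exptime}.

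\textbf{Preprocessing.} By Lemma~\ref{guardsEqOnePart} we may assume in polynomial time that $A := \guards$ and $B := V(G) \setminus \guards$ form a bipartition of $G$, so that $|A| = g$. By Lemma~\ref{bipartite-oneside}, Attacker has an optimal strategy attacking only one side of the bipartition; since $A$ is entirely guarded at the start, this side must be $B$. Every attack then moves a guard from $A$ to a vertex of $B$, and since $G$ has no edges within $B$ this guard is never moved again. Consequently any play has length at most $g+1$ (the set $S$ of $A$-vertices whose guard has moved grows by one per turn, and $|S|\leq g$). In particular, if $t > g+1$ the instance is YES iff $\guards$ is not an eternal dominating set, which by Theorem~\ref{eds-bipartite} is decidable in polynomial time; we may thus assume $t \leq g+1$.

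\textbf{Kernelization.} Partition $B$ into neighborhood types: for $N \subseteq A$ set $B_N := \{ b \in B \mid N_G(b) = N \}$. Let $G'$ be obtained from $G$ by keeping, in each $B_N$, an arbitrary subset of $\min(|B_N|, g+1)$ vertices. I claim $t_G(\guards) = t_{G'}(\guards \cap V(G'))$. Intuitively, vertices of $B_N$ are perfectly interchangeable in the game: an attack on any of them elicits the same Defender options, and guards ending up in $B$ play no further role (they have no neighbors in $B$ and their host vertex is never re-attacked). The game state thus reduces to a pair $(S, (u_N)_N)$, where $S \subseteq A$ lists the moved guards and $u_N$ counts the attacks against type $N$. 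Since a play has length at most $g+1$, we always have $u_N \leq g$ before a turn; hence the condition under which Attacker can still pick a fresh vertex of type $N$, namely $u_N < \min(|B_N|, g+1)$, is identical in $G$ and $G'$. The winning condition coincides similarly, proving the claim.

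\textbf{Finishing.} The kernel $G'$ has at most $g + (g+1)\cdot 2^g = 2^{O(g)}$ vertices. Applying Theorem~\ref{ked-in-exptime} to $G'$ computes $t_{G'}(\guards \cap V(G'))$ in time $|V(G')|^{O(g)} = 2^{O(g^2)}$, which is \FPT{} in $g$. Comparing this value with $t$ settles the instance.

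\textbf{Main obstacle.} The heart of the argument is the kernelization claim $t_G = t_{G'}$. Rigorously justifying that same-type vertices are interchangeable requires combining Lemma~\ref{bipartite-oneside} (Attacker restricts to $B$), the preprocessing (guards landing in $B$ play no defensive role), and the bound $g+1$ on the game length.
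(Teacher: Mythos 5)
Your proposal is correct and follows essentially the same route as the paper: reduce via Lemma~\ref{guardsEqOnePart} and Lemma~\ref{bipartite-oneside} so that the guards form one side of the bipartition and the game lasts at most $g+1$ turns, then kernelize by keeping only $g+1$ vertices per neighborhood type (twin class) on the unguarded side, yielding a kernel of size $g+(g+1)2^g$. The only cosmetic differences are the naming of the two sides and that you make explicit the final step of running the \XP{} algorithm of Theorem~\ref{ked-in-exptime} on the kernel, which the paper leaves implicit.
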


\begin{proof}
Let $(G, \guards, t)$ be an instance of $\kED$ where $G$ is a bipartite graph $(A \cup B, E)$. By Lemma \ref{bipartite-oneside} and Lemma \ref{guardsEqOnePart}, we can assume that $B = \guards$ and that Attacker plays only in $A$. In particular, $t_G(\guards)\leqslant |B|+1=g+1$.

We build a kernel $(G', \guards, t)$ of $(G, \guards, t)$ as follows: for every set of at least $g+1$ twins in $A$, delete all but $g+1$ of them. Observe that $G'$ has at most $g+(g+1)2^g$ vertices. We claim that $t_G(\guards) = t_{G'}(\guards)$.

Recall that Attacker always has a shortest winning strategy where he plays only on initially unguarded vertices by Lemma~\ref{bipartite-oneside}. In particular, if he wins in $t$ turns on $(G',\guards)$, he can just use the same strategy to win in $t$ turns on $(G,\guards)$. 

Conversely, if Attacker wins in $t$ turns on $(G,\guards)$, then mimicking his strategy on $(G',\guards)$ is also making him win in at most $t$ turns (note that each time he plays on a vertex $v$ of $G$ that was deleted, there is at least one unguarded twin of $v$ in $G'$). 
\end{proof}

Finally, we consider the problem $\kED$ parameterized by $n-g$, the number of non-guarded vertices, and show that it is \FPT{} (and even has a linear kernel).

\begin{theorem}\label{kernel-n-g}
$\kED$ admits a kernel of size $2(n - g)$ on bipartite graphs.
\end{theorem}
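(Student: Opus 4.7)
The plan is to first apply Lemmas~\ref{bipartite-oneside} and~\ref{guardsEqOnePart} to reduce to the setting where $\guards = B$ is one side of the bipartition $(A,B)$ of $G$, so that $|A| = n-g$ and Attacker plays only in $A$. The reduction rule is then simply: compute a maximum matching $M$ of $G$ in polynomial time, set $B^* := \guards \setminus V(M)$, and output $(G', \guards', t)$ with $G' := G - B^*$ and $\guards' := \guards \setminus B^*$. Since $|\guards'| = |M| \leq |A|$, the kernel contains at most $2|A| = 2(n-g)$ vertices, giving the size bound.

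Correctness reduces to showing $t_G(\guards) = t_{G'}(\guards')$. The direction $t_{G'}(\guards') \leq t_G(\guards)$ is immediate since removing guards only helps Attacker. For the reverse inequality, I use the upper bound implicit in the proof of Theorem~\ref{eds-bipartite}: in any bipartite graph $H$ with all guards on one side, whenever $S \subseteq A$ satisfies $|N_H(S)| < |S|$, Attacker wins in at most $|N_H(S)|+1$ turns by successively attacking $S$ (defending each attack requires moving a distinct guard out of $N_H(S)$). Therefore it suffices, given a minimum Hall violator $S'$ in $G'$ with $|N_{G'}(S')| = t_{G'}-1$ and $|S'| \geq t_{G'}$, to construct a Hall violator $S_G$ in $G$ whose neighborhood has size at most $t_{G'}-1$.

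Let $Z$ be the set of vertices of $G$ reachable from $B^*$ via $M$-alternating paths. Two facts are the workhorse of the construction: (a) every $v \in Z \cap A$ is matched by $M$, since otherwise $M$ would admit an augmenting path from $B^*$; and (b) for any $v \in A \setminus Z$, $N_G(v) \cap Z = \emptyset$, because a neighbor $b \in Z$ of $v$ would either lie in $B^*$ (and then the direct edge $b \to v$ would witness $v \in Z$) or be matched to some $v' \in Z \cap A$ (and then the alternating path $b^* \to \cdots \to v' \to b \to v$ would likewise put $v$ in $Z$).

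Set $S_G := S' \setminus Z$. By (b), $N_G(S_G) \subseteq \guards \setminus Z \subseteq \guards'$, so $N_G(S_G) = N_{G'}(S_G) \subseteq N_{G'}(S') \setminus Z$. Writing $x := |S' \cap Z|$ and $y := |N_{G'}(S') \cap Z|$, fact (a) together with the observation that $M(v) \in Z$ for every $v \in Z \cap A$ (extend the alternating path through the matching edge) shows that $v \mapsto M(v)$ injects $S' \cap Z$ into $N_{G'}(S') \cap Z$, hence $y \geq x$. It follows that
\[
|S_G| - |N_G(S_G)| \geq (|S'| - x) - ((t_{G'}-1) - y) \geq |S'| - (t_{G'}-1) \geq 1,
\]
so $S_G$ is a Hall violator in $G$ with $|N_G(S_G)| \leq t_{G'}-1$, yielding $t_G(\guards) \leq t_{G'}(\guards')$. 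The main technical step I expect to be delicate is establishing (b) cleanly via alternating-path reasoning; once (a) and (b) are in hand, the injection $v \mapsto M(v)$ and the final counting are routine.
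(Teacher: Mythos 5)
Your preprocessing and the crown structure you set up via alternating paths (facts (a) and (b) about $Z$) are sound and are essentially the content of the paper's Lemma~\ref{bipartite-kernel-lemma}, but your correctness argument for the direction $t_G(\guards)\le t_{G'}(\guards')$ rests on a false premise. You assume that $G'$ contains a Hall violator $S'$ with $|N_{G'}(S')|=t_{G'}(\guards')-1$, i.e., that $t_{G'}(\guards')=1+\min\{|N_{G'}(S)|\,:\,S\subseteq A,\ |N_{G'}(S)|<|S|\}$. Only the inequality ``$\le$'' of this identity is what the proof of Theorem~\ref{eds-bipartite} gives; the reverse inequality fails because the game is adaptive. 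Take $A=\{a_1,a_2,a_3\}$ and $B=\guards=\{b_1,b_2\}$ with edges $a_1b_1$, $a_2b_2$, $a_3b_1$, $a_3b_2$: Attacker attacks $a_3$ and then attacks whichever of $a_1,a_2$ just lost its only guarded neighbour, so $t=2$, yet the only Hall violator is $A$ itself with $|N(A)|=2$, so no Hall violator with neighbourhood of size $t-1=1$ exists. (Such an exact ``minimum Hall violator'' formula for $t$ would be at odds with the adaptivity that drives Theorem~\ref{conp-hard-bipartite}.) Consequently the set $S'$ from which you build $S_G$ need not exist, and the transfer argument collapses; the counting with the injection $v\mapsto M(v)$ is fine but has nothing to start from.

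The paper sidesteps any exact characterization of $t$: it deletes the \emph{whole} crown, $A'=Z\cap A$ together with $B'=Z\cap B$ (obtained from a König vertex cover), and proves $t_G(\guards)=t_{G'}(\guards')$ by direct strategy simulation --- Attacker's strategies transfer because there is no edge between $B'$ and $A\setminus A'$, and any Defender strategy for the kernel extends to $G$ by answering attacks on $A'$ along the matching $M$. Your milder rule (deleting only the unmatched guards $B^*$) does in fact yield a correct kernel of the right size, but proving it requires the same kind of simulation rather than Hall violators: Defender in $G-B^*$ can answer every attack on $Z\cap A$ with its $M$-partner (which lies in $Z\cap B\setminus B^*$ and is therefore still present), and by your fact (b) these guards are never needed for attacks on $A\setminus Z$, so she survives in $G-B^*$ at least as long as in $G-Z$, which the paper shows equals $t_G(\guards)$. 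Replace the Hall-violator step by such an argument and the proof goes through.
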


Our proof relies on the following useful lemma, inspired by crown decompositions~\cite{param-book} that may be of independent interest. 

\begin{lemma}\label{bipartite-kernel-lemma}
Let $G = (A \cup B, E)$ be a bipartite graph.
Assume that there is no isolated vertex in $B$ and $|A| < |B|$.
Then one can compute in polynomial time a non-empty set $A' \subseteq A$ and a set $B' \subseteq B$ such that
\begin{itemize}
    \item there is a matching $M$ between $A'$ and $B'$ that covers $A'$;
    \item there is no edge between $B'$ and $A \setminus A'$.
\end{itemize}
\end{lemma}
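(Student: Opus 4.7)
The plan is to adapt a standard alternating-path (crown-decomposition style) argument built on top of a maximum matching. First I would compute a maximum matching $M$ of $G$ in polynomial time. Since any matching has size at most $|A|<|B|$, at least one vertex of $B$ must be left unmatched by $M$; write $B_0$ for the non-empty set of $M$-unmatched vertices of $B$. Because $B$ has no isolated vertex, $N(B_0)$ is non-empty as well.

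Next I would grow two families of sets by an alternating breadth-first search starting from $B_0$. Set $A_1=N(B_0)$ and $B_1=M(A_1)$, and inductively
\[
 A_{i+1}=N(B_0\cup B_1\cup\cdots\cup B_i),\qquad B_{i+1}=M(A_{i+1}),
\]
stopping when the process stabilizes (which happens after at most $|A|$ iterations). The crucial claim, proved by induction on $i$, is that every vertex of $A_i$ is $M$-matched: otherwise, concatenating a path inside the BFS tree with an edge to an unmatched vertex of $A$ would yield an $M$-alternating path from $B_0$ to an unmatched vertex, hence an $M$-augmenting path, contradicting the maximality of $M$. In particular, $B_i$ is well-defined at every step. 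Let $A^*$ and $B^*$ denote the stable sets and define $A'=A^*$ and $B'=B_0\cup B^*$.

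Finally I would verify the two required properties. The restriction of $M$ to $A'$ pairs each vertex of $A'$ with a vertex of $B^*\subseteq B'$, giving a matching between $A'$ and $B'$ that saturates $A'$. Stabilization gives $N(B_0\cup B^*)\subseteq A^*=A'$, so there is no edge between $B'$ and $A\setminus A'$. Non-emptiness of $A'$ follows from $A_1=N(B_0)\neq\emptyset$, which in turn uses that $B$ has no isolated vertex and $B_0\neq\emptyset$. The whole procedure reduces to one maximum-matching computation plus a single BFS, so it runs in polynomial time.

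The only delicate point is the inductive claim that every vertex reached in $A$ is $M$-matched; this is where the maximality of $M$ is essential (through the augmenting-path argument). Everything else is bookkeeping about how the alternating BFS interacts with the matching $M$.
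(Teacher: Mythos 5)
Your proof is correct and is essentially the paper's argument: the paper invokes K\"onig's theorem to extract a minimum vertex cover $C$ from a maximum matching and sets $A'=C\cap A$, $B'=B\setminus C$, while your alternating BFS from the $M$-unmatched vertices of $B$ is exactly the standard constructive proof of K\"onig's theorem and yields the same pair of sets. Both versions hinge on the same two facts — $|M|\le|A|<|B|$ forces an unmatched vertex in $B$, and maximality of $M$ excludes augmenting paths — so the only difference is that you unroll the black box.
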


\begin{proof}
By König's theorem \cite{diestel-book}, there exist a matching $M$ and a vertex cover $C$ such that $|C| = |M| \leq |A|$. Set $A' = C \cap A$ and $B' = B \setminus C$. Observe that $A'\neq\varnothing$, otherwise $C\subset B$, and since $B$ has no isolated vertex, $C=B$, a contradiction since $|A|\geq|C|=|B|>|A|$. 

Now observe that every edge $e$ in $M$ has exactly one endpoint in $C$, hence the edges of $M$ intersecting $A'$ have their other endpoint in $B'$. Thus, these edges form a matching between $A'$ and $B'$ that covers $A'$.

Finally, by definition of a vertex cover, $(A \cup B) \setminus C$ is an independent set of $G$. In particular, there is no edge between $B'=B \setminus C$ and $A \setminus A'=A \setminus C$.
\end{proof}

\begin{proof}[Proof of Theorem~\ref{kernel-n-g}]
Let $G = (A \cup B, E)$ be a bipartite graph and $\guards$ be a set of guards.
By Lemma \ref{bipartite-oneside} and Lemma \ref{guardsEqOnePart}, we can assume $B = \guards$ and Attacker plays only in $A$.
We use the following two reduction rules.
\begin{itemize}
    \item Remove isolated vertices in $B$.
    \item If $|A|<|B|$, let $A',B'$ the sets obtains by Lemma~\ref{bipartite-kernel-lemma}. Then remove $A'$ and $B'$ from $G$ and remove $B'$ from $\guards$.
\end{itemize}

After iterating these two rules while we can, we end up with a kernel $(G',D')$ with less guarded vertices than unguarded, hence it has size at most $2(n-g)$, as requested. It remains to show that these rules are correct. The first rule clearly preserves $t_G(D)$ since guarded isolated vertex cannot be attacked or used to defend another vertex. 

Consider now the second rule, and assume that $|A| < |B|$. Let $A',B'$ the set obtained by applying Lemma~\ref{bipartite-kernel-lemma}, and $(G',D')$ the new instance. We claim that $t_G(\guards) = t_{G'}(\guards')$.

If Attacker has a winning strategy in $t$ turns on $(G', \guards')$, then he can apply the same strategy on $(G, \guards)$ since, by construction, Defender has the same available moves on $(G, \guards)$ and on $(G', \guards')$. Therefore, $t_G(\guards) \leq t_{G'}(\guards')$. 

Conversely, assume that Defender has a strategy to resist to $t$ attacks on $(G', \guards')$. Then she can resist $t$ attacks on $(G,\guards)$ using the following: if Attacker plays on a vertex in $A \setminus A'$, then she uses the same strategy as on $(G', \guards')$. Otherwise, let $M$ be a matching between $A'$ and $B'$ saturating $A'$. When Attacker plays on a vertex $v$ in $A'$, Defender answers with the guard on $B'$ matched with $v$ in $M$.
\end{proof}

\subsection{First-order definability}

In this subsection we show that $\kED$ can be expressed with a first-order formula. 

\begin{lemma}\label{fo-definable}
For every $t>0$, there is a first-order formula $\varphi_t(X)$ such that
$G \models \varphi_t(\guards)$ if and only if $t_G(\guards) \leq t$.
\end{lemma}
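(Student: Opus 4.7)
The plan is an induction on $t$ using the natural game-tree recurrence: $t_G(D) \leq t$ if and only if there exists an unguarded vertex $v$ such that, for every guarded neighbor $u$ of $v$, the configuration $(G, (D \setminus \{u\}) \cup \{v\})$ obtained after Defender's response satisfies $t_G(\cdot) \leq t-1$. Observe that if $v$ has no guarded neighbor at all, the universal condition is vacuously satisfied and Attacker wins immediately, so this single recurrence handles both the ``trivial win in one turn'' case and the recursive case at once.

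This suggests setting $\varphi_0(X) := \bot$ (reflecting that $t_G(D) \geq 1$ always) and, for $t \geq 1$,
$$\varphi_t(X) := \exists v \left( \neg X(v) \wedge \forall u \bigl( (E(u,v) \wedge X(u)) \to \varphi_{t-1}^{v,u}(X) \bigr) \right),$$
where $\varphi_{t-1}^{v,u}(X)$ denotes the formula obtained from $\varphi_{t-1}(X)$ by textually replacing each atomic subformula of the form $X(w)$ by $(X(w) \wedge w \neq u) \vee (w = v)$. After a standard renaming of the bound variables of $\varphi_{t-1}$ to avoid clashes with $u$ and $v$, this is a well-formed first-order formula in the language of graphs with one free unary predicate~$X$.

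Correctness is then a direct induction on $t$. The base case $t=1$ simplifies to $\varphi_1(X) \equiv \exists v (\neg X(v) \wedge \forall u \, \neg(E(u,v) \wedge X(u)))$, which says that some unguarded vertex has no guarded neighbor, and this is exactly $t_G(D) \leq 1$. For the inductive step, the key semantic point is a standard substitution lemma: under any interpretation of $u$ and $v$ in a graph $G$ with guard set $D$, the substituted formula $\varphi_{t-1}^{v,u}(X)$ holds in $(G, D)$ if and only if $\varphi_{t-1}(X)$ holds in $(G, (D \setminus \{u\}) \cup \{v\})$. Combining this with the game-tree recurrence immediately yields the equivalence $G \models \varphi_t(D) \iff t_G(D) \leq t$.

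The only mildly technical point is the bookkeeping of free and bound variables during the substitution; beyond that, nothing is difficult, since we produce one formula per value of $t$ and there is no uniformity constraint on its quantifier depth or size.
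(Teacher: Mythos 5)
Your proof is correct and follows essentially the same route as the paper: both express the $t$-round game by the standard backward-induction recurrence and keep track of the evolving guard set purely syntactically, one alternation $\exists v\,\forall u$ per round. Your substitution $X(w)\mapsto (X(w)\wedge w\neq u)\vee(w=v)$, once unfolded over the first $k$ rounds, is exactly the paper's auxiliary formula $\mathrm{guards}_k$ recomputing the current configuration from $X$ and the played moves, so the two constructions yield the same formula up to renaming of variables.
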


\begin{proof}
Let $X$ be a set of guards and $k<t$. Our goal is to define a formula $\psi_{t,k}(X,a_1,d_1,\ldots,a_k,d_k)$ that holds if Attacker can win in at most $t$ turns when the first $k$ turns were already played on $a_1, d_1, \ldots, a_k, d_k$. In particular, we will have $\varphi_t=\psi_{t,0}$. 

First observe that the set $X_k$ of guards obtained from $X$ after playing these $k$ turns can be defined in \FO. Indeed, the formula $\mbox{guards}_k$ defined by $\mathrm{guards}_0(X,x)=x\in X $ and 
\begin{align*}
    \mathrm{guards}_k&(X , a_1, d_1, \ldots, a_k, d_k, x)=\\ &
(x=a_k) \vee (\mbox{guards}_{k-1}(X, a_1, d_1, \ldots, a_{k-1}, d_{k-1}, x) \land x\neq d_k)\end{align*}
holds if and only if $x\in X_k$.

Observe that Attacker wins in at most $t$ turns after $k<t$ turns are played if either $X_k$ is not a dominating set or $k+1<t$ and there is a position $a_{k+1}$ such that for every answer $b_{k+1}$ of Defender, $\psi_{t,k+1}$ holds. Therefore, denoting by 
\[\mathrm{dom}_k(X, a_1, d_1, \ldots, a_k, d_k) = \forall x \exists y (x-y) \wedge \mbox{guards}_k(X, a_1, d_1, \ldots, a_k, d_k, y)\]
the formula stating that $X_k$ dominates $G$, we have $\psi_{t,t-1}=\neg \mathrm{dom}_t$ and 
\begin{align*}
\psi_{t,k}&(X, a_1, d_1, \ldots, a_k, d_k) =\neg \mathrm{dom}_k(X, a_1, d_1, \ldots, a_k, d_k) \\
\vee& \exists a_{k+1} \forall b_{k+1} (a_{k+1}-b_{k+1}\wedge \mathrm{guards}_k(X, a_1,d_1,\ldots,a_k,d_k,b_{k+1}))\\ &\qquad \Rightarrow \psi_{t,k+1}(X, a_1, d_1, \ldots, a_{k+1}, b_{k+1}),    
\end{align*}
which concludes.
\end{proof}

This directly implies complexity upper bounds using some well-known meta-theorems. The first one comes from the complexity of \FO-model-checking in the generic case~\cite{book-flum-grohe}. 

\begin{theorem}\label{param-k-awstar}
$\kED$ parameterized by $t$ is in $\AW[*]$.
\end{theorem}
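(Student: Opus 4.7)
The plan is to derive this as a direct consequence of Lemma~\ref{fo-definable} together with the classical meta-theorem of Flum and Grohe stating that the model-checking problem for first-order logic, parameterized by the length of the input formula, lies in $\AW[*]$ (and is in fact $\AW[*]$-complete).

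Concretely, I would first verify that the formula $\varphi_t$ constructed in the proof of Lemma~\ref{fo-definable} has length bounded by a function of $t$ alone. This follows by a straightforward induction on the recursive definition: $\mathrm{guards}_k$ has size $O(k)$, $\mathrm{dom}_k$ has size $O(k)$, and $\psi_{t,k}$ is built from $\psi_{t,k+1}$ by prepending a bounded number of quantifiers and adding subformulas of size $O(t)$. Hence $|\psi_{t,k}|$ is $O((t-k)\cdot t)$, and in particular $|\varphi_t|=|\psi_{t,0}|=O(t^2)$, a bound depending only on $t$.

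Given an instance $(G,\guards,t)$ of $\kED$, we construct in polynomial time (actually linear in $|G|$ and computable in time depending only on $t$ from $G$'s perspective) the structure $(G,\guards)$ where $\guards$ is interpreted as a unary predicate, and the formula $\varphi_t(X)$ with $X$ instantiated as $\guards$. By Lemma~\ref{fo-definable}, $G\models\varphi_t(\guards)$ if and only if $t_G(\guards)\leq t$, so this is an FPT reduction (with parameter $t$) from $\kED$ to the FO model-checking problem parameterized by the formula length. Since the latter belongs to $\AW[*]$~\cite{book-flum-grohe}, so does $\kED$ parameterized by $t$.

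There is essentially no obstacle beyond carefully checking the size bound on $\varphi_t$ and confirming that the reduction treats $t$ as the parameter on both sides; the real content has already been packaged into Lemma~\ref{fo-definable}.
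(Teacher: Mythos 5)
Your proposal is correct and follows essentially the same route as the paper: Theorem~\ref{param-k-awstar} is stated there as a direct consequence of Lemma~\ref{fo-definable} combined with the Flum--Grohe meta-theorem that FO model-checking parameterized by formula length is in $\AW[*]$. Your explicit verification that $|\varphi_t|$ is bounded by a function of $t$ alone (and that $\guards$ can be encoded as a unary predicate on the structure) merely spells out details the paper leaves implicit.
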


It is well-known that this complexity drops when considering restricted classes of graphs, for example nowhere-dense graph classes \cite{fo-nowhere-dense} and bounded twin-width graph classes \cite{fo-twinwidth}. Notice that nowhere-dense graph classes include planar graphs, bounded degree graphs, and graphs with an excluded minor~\cite{nowhere-dense}.

\begin{theorem}\label{fpt-nowhere-fpt}
$\kED$ parameterized by $t$ is \FPT{} on nowhere-dense graph classes.
\end{theorem}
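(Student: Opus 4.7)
The plan is to combine the first-order definability of $\kED$ from Lemma~\ref{fo-definable} with the Grohe--Kreutzer--Siebertz meta-theorem on \FO-model-checking for nowhere-dense graph classes~\cite{fo-nowhere-dense}.

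More precisely, given an instance $(G,\guards,t)$ of $\kED$ where $G$ belongs to a fixed nowhere-dense class $\mathcal{C}$, I would proceed as follows. First I would invoke Lemma~\ref{fo-definable} to obtain a first-order formula $\varphi_t(X)$ of size depending only on~$t$ such that $G \models \varphi_t(\guards)$ if and only if $t_G(\guards) \leq t$. To handle the fact that $\guards$ is part of the input rather than existentially quantified, I would either treat $\guards$ as a unary predicate (working in the framework of \FO-model-checking on vertex-coloured graphs, for which the nowhere-dense meta-theorem also applies after encoding the colours by monadic expansion) or, equivalently, instantiate the free variable by the tuple of vertices of $\guards$, which only adds $|\guards|$ parameters of the input and does not change the formula itself. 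The key point is that the formula depends only on the parameter $t$.

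The second and final step is to apply the Grohe--Kreutzer--Siebertz theorem~\cite{fo-nowhere-dense}: deciding whether $G \models \varphi$ for a graph $G$ from a nowhere-dense class $\mathcal{C}$ and an \FO-formula $\varphi$ can be done in time $f(|\varphi|,\mathcal{C}) \cdot |V(G)|^{1+o(1)}$ for some computable function $f$. Since $|\varphi_t|$ is bounded by a function of $t$, this yields an algorithm of running time $g(t) \cdot n^{1+o(1)}$ for deciding whether $t_G(\guards) \leq t$ on graphs from $\mathcal{C}$, which is precisely the claim.

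There is no real obstacle here beyond carefully handling the free variable $X$ encoding the set of guards in the \FO-model-checking framework, which is standard (one expands the signature with a unary predicate for~$\guards$ and observes that nowhere-dense classes are closed under such monadic expansions, so the meta-theorem still applies). Everything else reduces to combining Lemma~\ref{fo-definable} with an off-the-shelf algorithmic meta-theorem.
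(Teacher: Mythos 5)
Your proposal is correct and is essentially the paper's own argument: the paper derives Theorem~\ref{fpt-nowhere-fpt} directly by combining Lemma~\ref{fo-definable} with the Grohe--Kreutzer--Siebertz meta-theorem for \FO-model-checking on nowhere-dense classes, handling $\guards$ as a unary predicate in a monadic expansion. One small caveat: your alternative of instantiating the free variable by the tuple of vertices of $\guards$ is not actually equivalent, since it would make the formula (and hence the parameter dependence) grow with $|\guards|$; stick with the unary-predicate encoding, which is the route that works.
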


One can note that the same holds for bounded twin-width graphs when we are given a contraction sequence.

\section{Conclusion}

We give a summary table of the different complexity results.

\begin{center}
\begin{tabular}{ | c | m{0.27\textwidth} | m{0.27\textwidth} | m{0.27\textwidth} | }
\hline
     & \textsc{Eternal-Domination-Number} & \textsc{Eternal-Dominating-Set} & \textsc{Fast-Strategy} \\
\hline
 Trees & \P \cite{goddard} &  \P (Th \ref{eds-bipartite}) &  \P (Th \ref{thm:treepoly})\\ 
 \hline
 Cographs & \P \cite{goddard}&  \P (Th \ref{cograph-ked}) & \P (Th \ref{cograph-ked})\\
 \hline
 Bipartite & \P \cite{goddard} & \P (Th \ref{eds-bipartite}) & \coNP-hard (Th \ref{conp-hard-bipartite})\\
 \hline
 Split & \P \cite{goddard} & ? & \coNP-hard (Th \ref{conp-hard-split})\\
 \hline
 Perfect & \P \cite{goddard} & ? & \PSPACE-hard (Th \ref{unipolar-pspace})\\
 \hline
 General & \coNP-hard \cite{eds-digraph} & ? & \PSPACE-hard (Th \ref{unipolar-pspace})\\
 \hline
\end{tabular}
\end{center}

In addition, we obtained several results on the parameterized complexity of \textsc{Fast-Strategy}.

\begin{center}
\begin{tabular}{ | c |c | c | c | }
\hline
   class of graphs  & parameter &  complexity \\
\hline
  General & $t$ & $\AW[*]$ (Th \ref{param-k-awstar})\\
\hline
  General & $g$ & \XP (Th \ref{ked-in-exptime})\\
\hline
   General & $g+t$ & \W[1]-hard  (Th \ref{w1-k-g})\\
\hline
   Bipartite & $t$ & co-\W[1]-hard (Th \ref{w1-k-bipartite})\\
\hline
   Bipartite & $g$ & \FPT (Th \ref{fpt-g-bipartite})\\
\hline
   Bipartite &$n - g$ & linear kernel (Th \ref{kernel-n-g})\\
\hline
  Nowhere dense & $t$ & \FPT (Th \ref{fpt-nowhere-fpt})\\
\hline
  Bounded twin-width  & $t + tww$ & \FPT (Th \ref{fpt-nowhere-fpt})\\
\hline

\end{tabular}
\end{center}

Quite often in the literature, game problems are either in \P{} or \PSPACE-hard. Motivated by Theorem~\ref{conp-hard-bipartite}, we propose the following conjecture. 

\begin{conjecture}\label{conj-pspace-hard-bipartite}
$\kED$ is \PSPACE-complete on bipartite graphs.
\end{conjecture}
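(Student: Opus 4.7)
The \PSPACE{} membership is already established in Theorem~\ref{conp-hard-bipartite}, so the task is to prove \PSPACE-hardness on bipartite graphs. The plan is to refine the reduction from \textsc{Unordered-CNF} used in the proof of Theorem~\ref{unipolar-pspace} into one producing a bipartite graph. The graph $G_\varphi$ of that reduction fails to be bipartite for exactly two reasons: (i) the initially guarded vertices induce a clique, and (ii) for every $x\in X$, the two unguarded vertices $u_x,u_{\bar x}$ are adjacent. Every other edge of $G_\varphi$ already goes between an unguarded and a guarded vertex, so if these two obstructions can be removed without breaking the Attacker/Defender correspondence with \textsc{Unordered-CNF}, we are done.

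To deal with (i), I would simply delete all clique edges between guarded vertices, including the special vertex $s$ which then becomes redundant. The justification is that combining Lemma~\ref{bipartite-oneside} with Lemma~\ref{guardsEqOnePart} allows us to assume, in the bipartite setting, that Attacker plays only on unguarded vertices. In the original analysis, the sole purpose of the clique edges (and of $s$) was to realise Rule~1 of the Defender strategy, i.e. to defend attacks on initially guarded vertices; since those attacks never occur under an optimal bipartite Attacker strategy, Rule~1 is never triggered, and the edges it relied on can be safely dropped.

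To deal with (ii), I would replace each edge $u_xu_{\bar x}$ by a length-two path through a new guarded vertex $g_x$. This restores, indirectly, a counterpart of Rule~3 of the Defender strategy: if Attacker attacks $u_\ell$ while $u_{\bar\ell}$ is already guarded, the Defender can now move the guard initially placed on $g_x$ onto $u_\ell$, at the cost of losing that guard. To compensate, I would attach each $g_x$ to a small dedicated bipartite gadget—essentially a private miniature checker $(Z_x,Z'_x)$ of carefully chosen size—playing the same role for $g_x$ that the variable checker plays for $V_y^*$: it forces Attacker to invest roughly $|Z'_x|+1$ attacks before being able to exhaust the guard supply of $g_x$. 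Picking $|Z'_x|$ just large enough that no single $g_x$-raid fits inside the target number of turns ensures that the extra Defender option does not translate into an extra Attacker weapon.

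The forward direction (Falsifier wins $\Rightarrow$ Attacker wins in $t$ turns) transfers almost verbatim from Theorem~\ref{unipolar-pspace} because the original Attacker strategy never attacks guarded vertices, never attacks both of $u_x,u_{\bar x}$, and never interacts with the clique edges. The main obstacle, and where the bulk of the technical work will lie, is the converse: verifying that when Satisfier wins on $\varphi$, the Defender can still resist $t$ turns without Rule~1 and with a weakened Rule~3. This requires re-running the potential/counting argument of Lemma~\ref{lem:hard_converse} with the extra vertices $g_x$ in mind and checking that the invariant ``at most one non-special guard of $V_y^*$ (resp.\ at most $k-1$ of $L_i$) has been displaced before any checker is engaged'' still holds. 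I expect that slightly inflating $M$ (from $8k^2$ to something like $\Theta(k^3)$), the sizes of the $Z_x,Z'_x$ gadgets, and the bound $t=k+M+O(k)$ is enough to absorb the overhead introduced by the subdivisions; the delicate point will be to argue that whenever Attacker forces Defender to spend a $g_x$-guard, Attacker has necessarily also spent enough turns that no clause checker can be brought down within the remaining budget.
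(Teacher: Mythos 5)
First, note that the paper does not prove this statement: it is explicitly posed as a conjecture and left open, so there is no proof of the authors' to compare yours against. What you have written is therefore an attempt at an open problem, and it has to be judged on its own. Membership in \PSPACE{} is indeed Theorem~\ref{conp-hard-bipartite}, and your diagnosis of why $G_\varphi$ fails to be bipartite (the clique on the guarded vertices, and the edges $u_xu_{\bar x}$) is accurate; the removal of the clique edges and of $s$ is also plausibly harmless, since in a bipartite instance with all guards on one side Attacker may be assumed (Lemmas~\ref{bipartite-oneside} and~\ref{guardsEqOnePart}) never to attack initially guarded vertices, so Rule~1 is never invoked.

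The genuine gap is in your treatment of obstruction (ii), and it breaks the \emph{forward} direction, precisely the one you claim transfers ``almost verbatim.'' In the original reduction, when Attacker attacks $u_x$ the \emph{only} guarded neighbours of $u_x$ are the vertices $v_{x,b}$, so Defender is forced to commit a literal of $Y$; this forcing is the whole engine of the simulation. Once you add a guarded vertex $g_x$ adjacent to $u_x$ and $u_{\bar x}$, Defender can answer the first attack on the pair $\{u_x,u_{\bar x}\}$ by moving $g_x$'s guard, committing nothing. Worse, because the edge $u_xu_{\bar x}$ has been subdivided, that guard now sits on $u_x$ and cannot reach $u_{\bar x}$; a subsequent attack on $u_{\bar x}$ is then answered by some $v_{\bar x,b}$, so it is Defender who ends up deciding the truth value of $x$ --- Falsifier no longer controls the $X$-variables, and the correspondence with \textsc{Unordered-CNF} collapses. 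Your sizing discussion for $(Z_x,Z'_x)$ only guards against Attacker abusing the new gadget; it does not address Defender abusing $g_x$. The obvious repair --- make a raid on $Z_x$ a credible punishment once $g_x$'s guard has left --- is in direct tension with your stated requirement that $|Z'_x|$ be ``just large enough that no single $g_x$-raid fits inside the target number of turns'': the punishment must fit in the budget to deter the dodge, yet must not fit in the budget to be unexploitable from the start. Until that tension is resolved (or the pair $u_x,u_{\bar x}$ is handled by an entirely different mechanism), the construction does not yield a correct reduction, which is consistent with the authors leaving bipartite \PSPACE-hardness as a conjecture.
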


Besides Conjecture \ref{conj-pspace-hard-bipartite}, we enumerate some open questions for future work.
\begin{enumerate}
    \item Are there other graph classes where \textsc{Fast-Strategy} is polynomial? Good candidates are unit interval graphs. We think that the notion of arena (defined for trees) can be adapted to these graphs.
  \item In this paper, we have only considered the case where there is at most one guard per vertex. One can allow multiple guards on the same vertex \emph{i.e.} the guards now form a multiset. 
  
  There is a generic reduction to the case considered here, roughly consisting in adding twins. More precisely, given a graph $G$ and a multiset of guards $\guards$, consider the graph $G'$ and set of guards $\guards'$ obtained
   by replacing $k$ guards on a vertex $v$ into a set of $k$ true twins, all of them being guarded. One can see that $t_G(\guards) = t_{G'}(\guards')$. Cographs are preserved by this transformation since they are closed under adding true twins. Thus, Theorem \ref{cograph-ked} still holds. However, this is not true for trees and adapting the proof of Theorem~\ref{thm:treepoly} does not seem straightforward.
   
\item For perfect graphs, we know that \textsc{Eternal-Dominating-Number} is in \P{} and \textsc{Mobile-Domination-Game} is \PSPACE-hard. What is the complexity of \textsc{Eternal-Dominating-Set} on such graphs?

    \item From a parameterized complexity perspective, we showed that \textsc{Mobile-Domination-Game} parameterized by the number of guards is \FPT{} on bipartite graphs. It is then natural to look for a polynomial kernel.
   \item Can our results be adapted in the “all guards move” variant of eternal domination (where Defender can move more than one guard at each turn)? 
   An upper bound on the number of moves for Attacker to win is given in~\cite{meds-efficient}. 
   \end{enumerate}

\end{document}